\title[Node-DP Estimation of the Number of Connected Components]{Node-Differentially Private Estimation of the Number of Connected Components}
\author{Iden Kalemaj}
\affiliation{Boston University \country{US}}
\email{ikalemaj@bu.edu}
\author{Sofya Raskhodnikova}
\affiliation{Boston University \country{US}}
\email{sofya@bu.edu}
\author{Adam Smith}
\affiliation{Boston University \country{US} }
\email{ads22@bu.edu}
\author{Charalampos E. Tsourakakis}
\affiliation{Boston University \country{US}}
\email{tsourolampis@gmail.com}
\newcommand{\notes}{0}
\newcommand\footnoteref[1]{\protected@xdef\@thefnmark{\ref{#1}}\@footnotemark}
\newcommand*\Let[2]{\State #1 $\gets$ #2}
\algrenewcommand\algorithmicrequire{\textbf{Input:}}
\algrenewcommand\algorithmicensure{\textbf{Output:}}
\newtheorem{theorem}{Theorem}[section]
\newtheorem{definition}[theorem]{Definition}
\newtheorem{lemma}[theorem]{Lemma}
\newtheorem{remark}[theorem]{Remark}
\newtheorem{claim}[theorem]{Claim}
\newcommand{\Sec}[1]{\hyperref[sec:#1]{Section~\ref*{sec:#1}}} %{\S\ref*{sec:#1}}} %section
\newcommand{\Appendix}[1]{\hyperref[sec:#1]{Appendix~\ref*{sec:#1}}} %{\S\ref*{sec:#1}}} %section
\newcommand{\Eqn}[1]{\hyperref[eq:#1]{(\ref*{eq:#1})}} %equation
\newcommand{\Fig}[1]{\hyperref[fig:#1]{Fig.\,\ref*{fig:#1}}} %figure
\newcommand{\Tab}[1]{\hyperref[tab:#1]{Tab.\,\ref*{tab:#1}}} %table
\newcommand{\Thm}[1]{\hyperref[thm:#1]{Theorem\,\ref*{thm:#1}}} %theorem
\newcommand{\Fact}[1]{\hyperref[fact:#1]{Fact\,\ref*{fact:#1}}} %fact
\newcommand{\Lem}[1]{\hyperref[lem:#1]{Lemma\,\ref*{lem:#1}}} %lemma
\newcommand{\Lems}[2]{\hyperref[lem:#1]{Lemmas\,\ref*{lem:#1}} and~\hyperref[lem:#2]{\ref*{lem:#2}}} %lemmas
\newcommand{\Prop}[1]{\hyperref[prop:#1]{Prop.\,\ref*{prop:#1}}} %property
\newcommand{\Cor}[1]{\hyperref[cor:#1]{Corollary~\ref*{cor:#1}}} %corollary
\newcommand{\Conj}[1]{\hyperref[conj:#1]{Conjecture~\ref*{conj:#1}}} %conjecture
\newcommand{\Def}[1]{\hyperref[def:#1]{Definition~\ref*{def:#1}}} %definition
\newcommand{\Alg}[1]{\hyperref[alg:#1]{Algorithm~\ref*{alg:#1}}} %algorithm
\newcommand{\Proc}[1]{\hyperref[proc:#1]{Procedure~\ref*{proc:#1}}} %algorithm
\newcommand{\Step}[1]{\hyperref[step:#1]{Step~\ref*{step:#1}}} %step
\newcommand{\Steps}[2]{\hyperref[step:#1]{Steps~\ref*{step:#1}} and~\hyperref[step:#2]{\ref*{step:#2}}} %Steps
\newcommand{\Stepss}[3]{\hyperref[step:#1]{Steps~\ref*{step:#1}},~\hyperref[step:#2]{\ref*{step:#2}}, and~\hyperref[step:#3]{\ref*{step:#3}}} %Steps
\newcommand{\Ex}[1]{\hyperref[ex:#1]{Ex.~\ref*{ex:#1}}} %example
\newcommand{\Clm}[1]{\hyperref[clm:#1]{Claim~\ref*{clm:#1}}} %claim
\newcommand{\Inv}[1]{\hyperref[inv:#1]{Invariant~\ref*{inv:#1}}} %invariant
\newcommand{\Rem}[1]{\hyperref[rem:#1]{Remark~\ref*{rem:#1}}} %remark
\newcommand{\Obs}[1]{\hyperref[obs:#1]{Observation~\ref*{obs:#1}}} %observation
\newcommand{\R}{\mathbb R}
\newcommand{\eps}{\varepsilon}
\newcommand{\cA}{\mathcal{A}}
\newcommand{\cP}{\mathcal{P}}
\newcommand{\cG}{\mathcal{G}}
\newcommand{\cF}{\mathcal{F}}
\newcommand{\fsf}{f_{\mathrm{sf}}}
\newcommand{\fcc}{f_{\mathrm{cc}}}
\newcommand{\Err}{\mathrm{Err}}
\newcommand{\E}{\mathbb{E}}
\newcommand{\Bparen}[1]{\Big( {#1} \Big)}
\newcommand{\inducedstar}{s}
    \newcommand{\srnote}[1]{{\color{violet}\footnote{{\color{blue} {\bf S:} #1}}}}
    \newcommand{\asnote}[1]{{\color{violet}\footnote{{\color{violet} {\bf A:} #1}}}}
    \newcommand{\inote}[1]{{\color{brown}\footnote{{\color{brown} {\bf I:} #1}}}}
    \newcommand{\btnote}[1]{{\color{brown}\footnote{{\color{red} {\bf B:} #1}}}}
    \newcommand{\sr}[1]{{\color{blue} #1}}
    \newcommand{\as}[1]{{\color{violet} #1}}
    \newcommand{\srnote}[1]{}
    \newcommand{\asnote}[1]{}
    \newcommand{\inote}[1]{}
    \newcommand{\btnote}[1]{}
    \newcommand{\sr}[1]{{#1}}
    \newcommand{\as}[1]{{#1}}
\begin{document}
\pagenumbering{arabic}
\copyrightyear{2023}
\acmYear{2023}
\setcopyright{acmlicensed}\acmConference[PODS '23]{Proceedings of the 42nd ACM SIGMOD-SIGACT-SIGAI Symposium on Principles of Database Systems}{June 18--23, 2023}{Seattle, WA, USA}
\acmBooktitle{Proceedings of the 42nd ACM SIGMOD-SIGACT-SIGAI Symposium on Principles of Database Systems (PODS '23), June 18--23, 2023, Seattle, WA, USA}
\acmPrice{15.00}
\acmDOI{10.1145/3584372.3588671}
\acmISBN{979-8-4007-0127-6/23/06}

\begin{abstract}
	We design the first node-differentially private algorithm for approximating the number of connected components in a graph. Given a database representing an $n$-vertex graph $G$ and a privacy parameter $\eps$, our algorithm runs in polynomial time and, with probability $1-o(1)$, has additive error $\widetilde{O}(\frac{\Delta^*\ln\ln n}{\eps}),$ where $\Delta^*$ is the smallest possible maximum degree of a spanning forest of $G.$
	Node-differentially private algorithms are known only for a small number of database analysis tasks. A major
	obstacle for designing such an algorithm for the number of connected components is that this graph statistic
	is not robust to adding one node with arbitrary connections (a change that node-differential privacy is designed to hide): {\em every} graph is a neighbor of a connected graph. 
	
	We overcome this by designing a family of efficiently computable Lipschitz extensions of the number of connected components or, equivalently, the size of a spanning forest. The  construction of the extensions, which is at the core of our algorithm, is based on the forest polytope of $G.$ We prove several combinatorial facts about spanning forests, in particular, that 
	a graph with no induced $\Delta$-stars
	has a spanning forest of degree at most $\Delta$. With this fact, we show that our Lipschitz extensions for the number of connected components equal the true value of the function for the largest possible monotone families of graphs. More generally, on all monotone sets of graphs, the $\ell_\infty$ error of our Lipschitz extensions is nearly optimal. 	
\end{abstract}

% \begin{CCSXML}
% <ccs2012>
%    <concept>
%        <concept_id>10002978.10002991.10002995</concept_id>
%        <concept_desc>Security and privacy~Privacy-preserving protocols</concept_desc>
%        <concept_significance>500</concept_significance>
%        </concept>
%    <concept>
%        <concept_id>10003752.10003809.10003635</concept_id>
%        <concept_desc>Theory of computation~Graph algorithms analysis</concept_desc>
%        <concept_significance>500</concept_significance>
%        </concept>
%  </ccs2012>
% \end{CCSXML}

% \ccsdesc[500]{Security and privacy~Privacy-preserving protocols}
% \ccsdesc[500]{Theory of computation~Graph algorithms analysis}

% \keywords{Differential Privacy; Graph Databases; Spanning Forest}

\settopmatter{printacmref=false, printfolios=false}

\maketitle

\renewcommand{\shortauthors}{Kalemaj, Raskhodnikova, Smith, Tsourakakis}
	
\section{Introduction}	

% ellipsoid complexity
% http://www.cs.toronto.edu/~avner/teaching/S7-2411/LN/6/lec6.pdf
Counting the number of features in a graph, ranging from local structures  such as edges, triangles, wedges and other small motifs~\cite{milo2002network,xiang2022general,motifscope} to more global features such as the number of connected components~\cite{frank1978estimation,kang2010patterns,kiveris2014connected,klusowski2020estimating}, is a  family of foundational  tasks in graph mining.   Such  counts are used both in feature-based approaches to anomaly detection in graphs~\cite{akoglu2015graph} and as quality criteria for  generative models~\cite{xiang2022general}.  Specifically, the number of connected components is a basic  statistic for understanding the structure of a network at the most fundamental level.
Its uses range from determining
%Furthermore, it has been used to determine 
the number of classes in a population~\cite{goodman1949estimation}
%.  More recently, 
to approximating such diverse statistics as the weight of the minimum spanning tree~\cite{ChazelleRT05} and
the number of documented deaths in the Syrian war
~\cite{chen2018unique}. 

When the underlying network data captures private relationships between people,
publishing the number of connected components in the network can leak those individuals'  sensitive information. 
\emph{Differential privacy}~\cite{DworkMNS16} is a widely studied and deployed 
formal privacy guarantee for data analysis.
The output distributions of a differentially private algorithm must look nearly indistinguishable for any two input databases that differ only in the data of a single individual. In this work, we focus on relational databases that 
store graphs where nodes represent individuals and edges capture relationships between them.
There are two natural adaptations of differential privacy for graph databases: {\em edge differential privacy} and {\em node differential privacy} (or, more concisely, {\em edge-privacy} and {\em node-privacy})~\cite{HayLMJ09}. 
For edge-privacy, first investigated by Nissim et al.~\cite{NissimRS07}, the indistinguishability requirement applies to any two graphs that differ in one edge. In contrast, for node-privacy (first studied by three concurrent works~\cite{BlockiBDS13,KasiviswanathanNRS13,ChenZ13}), it applies to any two graphs that differ in one node and all its adjacent edges. Node-privacy is more suitable for databases representing networks of people, since, in this context, it protects each individual (node) and all their connections (adjacent edges). However, node-privacy is much harder to attain than edge-privacy, since it hides much larger changes to the database and provides much stronger privacy protections.

We design the first node-private algorithm for approximating the number of connected components in a graph database. We show that our algorithm runs in polynomial time and give an upper bound on its additive error. The bound  is low for  real-world networks and typical graphs arising from random network models. 

Connectivity of graph databases has not been previously investigated in the context of node-privacy. Let $\fcc(G)$ denote the number of connected components in a graph $G$. Obtaining an edge-private algorithm for approximating $\fcc$ is trivial. Without privacy guarantees, approximating $\fcc$ with additive error has been studied in the literature on sublinear-time algorithms~\cite{ChazelleRT05,BerenbrinkKM14,klusowski2020estimating}.

\subsubsection*{Instance-based accuracy} A major obstacle for designing a node-private algorithm for $\fcc$ is that this graph statistic is sensitive to adding one node with arbitrary connections: {\em every} graph $G$ can be changed to a connected graph $G'$ by adding a new node adjacent to all nodes in $G$. By definition, the output distributions of a node-private algorithm $\cA$ must be nearly indistinguishable on $G$ and $G'$. If $\fcc(G)$ is large, $\cA$ must be inaccurate on at least one of $G$ and $G'.$
Thus, meaningful worst case guarantees for approximating $\fcc$ are impossible.

To overcome this lower bound, we provide instance-based accuracy guarantees. Our bound on the error is roughly $\Delta^*$, the smallest possible maximum degree of a spanning forest of the input database $G$. 
In particular, 
$\Delta^*$ is always at most %smaller than 
the maximum degree of the graph, though it can be much smaller. 
Previous works on node-private algorithms either do not analyze accuracy at all or analyze it %analyze accuracy 
through lenses that are too coarse (for example, by looking at the maximum degree) or too task-specific for our purposes; see the discussion in \Sec{related_work}.

The core of our algorithm is a construction of a family of efficiently computable \emph{Lipschitz extensions} of $\fcc$ that provide good approximations to the true function values.

\subsubsection*{Background on Lipschitz extensions}\label{sec:prelims-on-Lipschitz-extensions}
The framework of Lipschitz extensions for designing node-private algorithms was proposed by three concurrent works \cite{BlockiBDS13,KasiviswanathanNRS13,ChenZ13} and subsequently refined~\cite{RaskhodnikovaS16,RaskhodnikovaS16-E}. 

\begin{definition}[Node-neighboring graphs; node-distance; $\Delta$-Lipschitz functions]
Two graphs $G$ and $G'$ are {\em node-neighbors} if one can be obtained from the other by removing a vertex and all of its adjacent edges. 

The {\em node-distance} $d(G, G')$ between two graphs $G$ and $G'$ is the smallest number of modifications needed to obtain $G'$ from $G$, where each modification is either the removal of one node and all of its adjacent edges, or the insertion of one node with arbitrary edges incident on it. 

Let $\cG$ denote the set of all graphs. A function $h \colon \cG \to \R$  is {\em $\Delta$-Lipschitz} on a set $S$ of graphs
if $|h(G) - h(G')| \leq \Delta \cdot d(G, G')$ for all $G,G'\in S$.
When $S=\cG,$ we refer to $f$ as $\Delta$-Lipschitz,  without specifying $S$.
\end{definition}

A $\Delta$-Lipschitz function can be approximated \as{node-privately} with error $\Theta(\Delta)$ (with high probability and for constant privacy parameter) using the standard Laplace mechanism (see \Thm{laplace}).

Given a function $f$ and a parameter $\Delta$, a Lipschitz extension $f_{\Delta}$ is defined by first identifying an {\em anchor set} of graphs on which $f$ is $\Delta$-Lipschitz.
Then $f_\Delta$ is made identical to $f$ on the anchor set and is extended outside the set so that it is $\Delta$-Lipschitz on $\cG$. Instead of approximating the sensitive function $f$, a node-private algorithm $\cA$ can release $f_{\Delta}$ (for an appropriately chosen $\Delta$) via the Laplace mechanism. The 
error of $\cA$ on a graph $G$ is $|f(G)-f_{\Delta}(G)|$ plus the noise added for privacy, which is $\Theta(\Delta)$.
The Generalized Exponential Mechanism (GEM) %of Raskhodnikova and Smith 
\cite{RaskhodnikovaS16} can be %is
used to select the value of $\Delta$ that approximately minimizes the expected error. 
The resulting algorithm has small error on graphs $G$ for which there is a small $\Delta$ such that $f_{\Delta}(G) \approx f(G)$. 

Lipschitz extensions can be defined for any metric space. A line of work in functional analysis seeks to understand general conditions under which, given a function $f$ and a subset of its domain, a Lipchitz extension to the entire domain exists~\cite{BenyaminiL98book}.
Such an extension always exists for real-valued $f$ \cite{McShane34}. However, it is not known to be efficiently computable even when $f$ is computable in polynomial time.
The main challenge for utilizing the framework of Lipschitz extensions lies in designing efficiently computable 
extensions that match the desired function on a large subset of the domain.

\subsubsection*{Our Lipschitz extensions for $\fcc$}\label{sec:intro-our-Lipschitz-extensions}
 Our Lipschitz extensions for the number of connected components are computable in polynomial time. Given a parameter $\Delta$ and a graph $G$, the value of our Lipschitz extension $f_{\Delta}(G)$ is obtained by solving a linear program over the forest polytope of $G$. 
 Let $S_\Delta$ be the anchor set of $f_\Delta$, i.e., the set of graphs $G$ with $f_\Delta(G)=\fcc(G).$ 
 \as{Then} %The anchor set 
 $S_\Delta$ contains all graphs that have a spanning forest of degree at most $\Delta$. This feature of our construction allows us to bound the error of the algorithm by the smallest possible maximum degree of a spanning forest of the graph.

We show that our construction of Lipschitz extensions is nearly optimal.
We consider two notions of optimality.
First, we compare our sets $S_\Delta$ to the largest possible anchor sets. To make optimal anchor sets well defined, we require them to be {\em monotone}, i.e., 
if $G$ is in the set, so are all of its induced subgraphs.
As it turns out, the largest monotone anchor set is unique and characterized in terms of the $\emph{down-sensitivity}$ of $f$, a quantity  introduced by the name of ``empirical global sensitivity'' by Chen and Zhou \cite{ChenZ13}.
The down-sensitivity of $f$ at a graph $G$, denoted $DS_f(G)$, measures the maximum change in the value of $f$ between any two node-neighboring induced subgraphs of $G$. Let $S^*_{\Delta} = \{G \mid DS_f(G) \leq \Delta\}$. Then $S^*_{\Delta}$ is the largest monotone anchor set for the function $f$ and parameter $\Delta$. 
A Lipschitz extension for the set $S^*_{\Delta}$ exists for all functions $f$ (it is defined and analyzed in \Lem{anchor_set}). In general, the evaluation of the extension takes exponential time even if $f$ is efficiently computable. 
%Let $S_\Delta$ be the set on which our Lipschitz extension $f_{\Delta}$ equals $\fcc$. 
We show that $S^*_{\Delta-1} \subseteq S_{\Delta}$, i.e., the anchor set for our Lipschitz extension with parameter $\Delta$ contains the largest monotone anchor set for  parameter $\Delta - 1$. When the Lipschitz extension is used in our %node-private 
algorithm, that difference of 1 barely changes the noise added for privacy.  Moreover, our construction runs in polynomial time.  
%\inote{Refer to Appendix A only once.}

The second notion of optimality we consider is more general. It  was introduced by Cummings and Durfee \cite{CummingsD20}. They study the $\ell_\infty$ error of the extension, measured locally
%either globally over the set of all graphs, or locally, 
over all induced subgraphs of any given %the input 
graph. According to this measure, the error of our Lipschitz extension $f_\Delta$ on induced subgraphs of a graph $G$ is at most twice the  error of the best Lipschitz extension with parameter $\Delta -1$ on the same set of graphs. The best extension does not have to be efficiently computable and is selected separately for each $G$.

\subsection{Our Results}

We consider databases that represent
undirected, unweighted graphs $G$ with vertex set $V(G)$ and edge set $E(G)$. 
We start by formally defining the privacy notion of our algorithm.
%Our algorithm for approximating the number of connected components satisfies node-privacy, defined formally below. 

\begin{definition}[Node-privacy] A randomized algorithm $\cA$ is $\eps$-node-private if for all node-neighboring graphs $G, G'$ and all events $S$ in the output space of $\cA$,
\begin{align*}
    \Pr[\cA(G) \in S] \leq e^\eps \Pr[\cA(G') \in S].
\end{align*}
\end{definition}

Our first idea is to restate the problem of estimating $\fcc$, the number of connected components, in terms of estimating the size of a spanning forest. Let $\fsf(G)$ denote the number of edges in a spanning forest of~$G$.
The reason $\fsf$ is an easier function to work with is that it is monotone (nondecreasing) under the addition of nodes and edges, whereas $\fcc$ can increase or decrease when we insert a new node with arbitrary edges.
The relationship between the two functions is straightforward:
\begin{align}\label{eq:cc-sf-connection}
\fcc(G) = |V(G)| - \fsf(G). 
\end{align}

The number of nodes in a graph can be easily estimated with an $\eps$-node-private algorithm with additive error $O(1/\eps)$ by using the standard Laplace mechanism.
Henceforth, we state our results in terms of approximating $\fsf$, with the understanding that it is equivalent to approximating $\fcc$ with additive error.  Our main result is a polynomial-time, $\eps$-node-private algorithm for approximating $\fsf$. Its error is upper bounded by $O(\Delta^* \ln \ln n)$ for constant $\eps$.

\begin{theorem}[Node-private algorithm for the size of the spanning forest] \label{thm:sf} 
 There exists an $\eps$-node-private algorithm $\cA$ that, given an $n$-node graph $G$ and a parameter $\eps>0$, runs in polynomial time. Let $\Delta^*$ be the smallest possible maximum degree of a spanning forest of $G$. If $E(G)\neq\emptyset$, then, with probability $1 -o(1)$, the output of the algorithm satisfies
\begin{align}\label{eq:error-bound}
    |\cA(G) - f_{\mathrm{sf}}(G)| \leq \Delta^* \cdot \widetilde{O}\Bparen{\frac{\ln \ln n}{\eps}}.
\end{align}
\end{theorem}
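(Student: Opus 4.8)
The plan is to combine the family $\{f_\Delta\}$ of polynomial‑time Lipschitz extensions with the Generalized Exponential Mechanism (GEM) of~\cite{RaskhodnikovaS16}, exploiting the fact that for every $\Delta\ge\Delta^*$ the extension $f_\Delta$ \emph{already equals} $\fsf$ on the input $G$: since $G$ has a spanning forest of maximum degree $\Delta^*\le\Delta$, we have $G\in S_\Delta$, and hence $f_\Delta(G)=\fsf(G)$. Fix the candidate scales $D=\{2^0,2^1,\dots,2^{\lceil\log_2 n\rceil}\}$, so $|D|=O(\log n)$; since $E(G)\neq\emptyset$ forces $1\le\Delta^*\le n-1$, there is a power of two $\hat\Delta\in D$ with $\Delta^*\le\hat\Delta<2\Delta^*$, and by the above $f_{\hat\Delta}(G)=\fsf(G)$. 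Thus $\hat\Delta$ is a candidate whose extension incurs \emph{zero} additive error and whose Lipschitz constant is only $\hat\Delta<2\Delta^*$; the algorithm's task is to find, privately, a candidate that is essentially this good.

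The algorithm computes $f_\Delta(G)$ for every $\Delta\in D$ (polynomial time: the forest polytope of $G$ intersected with the fractional degree constraints admits an efficient separation oracle), and feeds these values, the per‑candidate sensitivities $\{\Delta\}_{\Delta\in D}$ (valid because $f_\Delta$ is $\Delta$‑Lipschitz), and a polynomial‑time, low‑sensitivity score certifying that a candidate lies on the ``plateau'' $\{\Delta:f_\Delta(G)=\fsf(G)\}$ — e.g.\ a score assembled from the nonnegative differences $f_{\Delta'}(G)-f_\Delta(G)$ over $\Delta'>\Delta$, whose sensitivity is governed by the smaller of the two Lipschitz constants — into GEM, run with privacy parameter $\eps/2$. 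GEM privately returns a scale $\tilde\Delta\in D$, and the algorithm releases $f_{\tilde\Delta}(G)+\mathrm{Lap}(2\tilde\Delta/\eps)$. Privacy is $\eps$‑node‑privacy by the privacy of GEM, the $\Delta$‑Lipschitzness of $f_\Delta$ (which makes the Laplace step $(\eps/2)$‑node‑private for each fixed $\tilde\Delta$), and composition; the running time is polynomial since the $O(\log n)$ linear programs, the score computation, and GEM's sampling step are all polynomial.

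For accuracy, GEM guarantees that with probability $1-\beta$ the selected scale $\tilde\Delta$ has Lipschitz constant within a constant factor of that of the best plateau candidate and value within $O\!\big(\tfrac{\hat\Delta}{\eps}\ln\tfrac{|D|}{\beta}\big)$ of the corresponding value. Comparing against $\hat\Delta$ — for which $f_{\hat\Delta}(G)=\fsf(G)$ and the Lipschitz constant is $<2\Delta^*$ — yields $\tilde\Delta=O(\Delta^*)$ and $|f_{\tilde\Delta}(G)-\fsf(G)|=O\!\big(\tfrac{\Delta^*}{\eps}\ln\tfrac{\log n}{\beta}\big)$ with probability $1-\beta$; adding the Laplace tail bound $O\!\big(\tfrac{\Delta^*}{\eps}\ln\tfrac1\beta\big)$ (with probability $1-\beta$) and taking $\beta=1/\ln\ln n=o(1)$, the total additive error is $\Delta^*\cdot\widetilde O\!\big(\tfrac{\ln\ln n}{\eps}\big)$ with probability $1-o(1)$, which is the claimed bound.

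I expect the main obstacle to be the correct instantiation of GEM: exhibiting the polynomial‑time, low‑sensitivity score that detects the plateau $f_\Delta(G)=\fsf(G)$, and verifying that GEM's error scales with the Lipschitz constant $\Theta(\Delta^*)$ of the \emph{best} candidate rather than the $\Theta(n)$ of the largest one — this gap is precisely what makes the bound instance‑dependent and is where GEM improves on the vanilla exponential mechanism. The remaining ingredients — $\Delta$‑Lipschitzness, polynomial‑time computability, and the anchor‑set containment $S_\Delta\supseteq\{G:G\text{ has a spanning forest of maximum degree }\le\Delta\}$ — are supplied by the forest‑polytope construction described earlier, with the combinatorial fact that graphs with no induced $\Delta$‑star have a spanning forest of degree $\le\Delta$ explaining why $\Delta^*$ is the operative parameter.
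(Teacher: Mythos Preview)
Your proposal follows essentially the same architecture as the paper: the LP-based family $\{f_\Delta\}$, GEM with budget $\eps/2$ over dyadic scales to pick $\tilde\Delta$, then Laplace noise $\mathrm{Lap}(2\tilde\Delta/\eps)$, with privacy by composition and accuracy by comparing against the scale $\Delta^*$ where $f_{\Delta^*}(G)=\fsf(G)$, finally taking $\beta=1/\ln\ln n$.

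The one genuine gap is the point you flag yourself: the GEM score. Your idea of a ``plateau-detection'' score built from differences $f_{\Delta'}(G)-f_\Delta(G)$ is workable but unnecessary, and your claimed GEM output guarantee (``$\tilde\Delta=O(\Delta^*)$'') is not what GEM actually gives. The paper's resolution is simpler and worth knowing: it feeds GEM the \emph{error} itself, $q_\Delta(G)=|f_\Delta(G)-\fsf(G)|+\Delta/\eps$. This looks bad because $\fsf$ has unbounded sensitivity, but two structural facts save it: (i) underestimation, $f_\Delta(G)\le \fsf(G)$, so $q_\Delta(G)=\fsf(G)-f_\Delta(G)+\Delta/\eps$; and (ii) GEM only ever compares scores pairwise, so the common additive term $\fsf(G)$ cancels and one may equivalently use $-f_\Delta(G)+\Delta/\eps$, which is $\Delta$-Lipschitz. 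GEM then returns $\tilde\Delta$ with $q_{\tilde\Delta}(G)\le q_{\Delta^*}(G)\cdot O\!\big(\ln\tfrac{\ln n}{\beta}\big)=\tfrac{\Delta^*}{\eps}\cdot O\!\big(\ln\tfrac{\ln n}{\beta}\big)$; this single inequality already bounds both $|f_{\tilde\Delta}(G)-\fsf(G)|$ and $\tilde\Delta/\eps$, so no separate ``$\tilde\Delta=O(\Delta^*)$'' claim is needed. The three properties that make this go through (underestimation, monotonicity in $\Delta$, $\Delta$-Lipschitzness) are exactly what the paper packages as ``monotone in $\Delta$, Lipschitz underestimates.''

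One minor over-inclusion: the induced-star lemma (no induced $\Delta$-star $\Rightarrow$ spanning $\Delta$-forest) is \emph{not} used in the proof of this theorem. Here $\Delta^*$ is the operative parameter simply by definition --- $G$ has a spanning $\Delta^*$-forest, hence $f_{\Delta^*}(G)=\fsf(G)$ by the anchor-set property of the LP extension. The star lemma only enters when one further bounds $\Delta^*$ by the down-sensitivity (the paper's \Thm{ds}).
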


Note that $\Delta^*$ is at most
%If $\Delta$ is 
the maximum degree of $G$.
%, then $\Delta^*\leq \Delta.$ 
Therefore, the error bound in \Eqn{error-bound} also holds when $\Delta^*$ is replaced with the maximum degree of $G$. 
%Note that $\Delta^*$ is at most the maximum degree of $G$. Therefore, the maximum degree of $G$ is also an upper bound on the error of the algorithm. 

\subsubsection{Our Lipschitz extensions for $\fsf$}\label{sec:results-Lipschitz-ext} 
Our main technical contribution and the key tool in our algorithm design is a construction of a family of Lipschitz extensions for the size of the spanning forest. 
Recall from Section~\ref{sec:prelims-on-Lipschitz-extensions} that the first challenge in designing a family of Lipschitz extension is identifying their anchor sets. The anchor set of our
Lipschitz extension $f_{\Delta}$ contains the set of graphs that have a spanning forest of degree at most $\Delta$ (called a spanning $\Delta$-forest). One can easily see that  $\fsf$ is $\Delta$-Lipschitz on this set. 

%Our Lipschitz extension $f_{\Delta}$ will equal $\fsf$ on the set of graphs that have a spanning forest of degree at most $\Delta$ (called a spanning $\Delta$-forest). One can easily see that for such graphs $\fsf$ has sensitivity at most $\Delta$. 
The construction of $f_{\Delta}$ is based on the forest polytope of the input graph. The vertices of the polytope, represented by vectors $x \in \{0, 1\}^E$, correspond to forest subgraphs of $G$, where the weight of an edge is an indicator for whether the edge is in the forest. The forest polytope contains all convex combinations of the forest subgraphs of $G$. We consider the degree-bounded forest polytope of a graph, parameterized by $\Delta$. It introduces the additional requirement that the total weight of the edges incident on each vertex is at most $\Delta$.  Then $f_{\Delta}(G)$ equals the total maximum weight that can be assigned to the edges of $G$, while ensuring that the vector of weights is in the $\Delta$-bounded forest polytope. 
We show that $f_{\Delta}$ is $\Delta$-Lipschitz. It is also computable in polynomial time using an LP solver and an efficient linear separation oracle. 

Our extensions and their analysis are described in \Sec{sf}.

\subsubsection{Relation to Down-sensitivity} 

%As discussed in \Sec{intro-our-Lipschitz-extensions}, 
The \textit{down-sensitivity} of a function, defined next, provides a useful lens for thinking about instance-dependent guarantees~\cite{ChenZ13,RaskhodnikovaS16-E}.

\begin{definition}[Down-sensitivity~\cite{ChenZ13}]
For two graphs $H, H'$, we write $H \preceq H'$ (or $H \prec H'$) if $H$ is an induced (or proper induced) subgraph of $H'$.
The \emph{down-sensitivity} of $f\colon \cG \to \R$ at $G$, denoted by $DS_f(G)$, is defined as
\begin{align*}
    DS_f(G) = \max_{\substack{H \preceq H' \preceq G \\ H, H' \textnormal{ neighbors}  } } |f(H') - f(H)|.
\end{align*}
\end{definition}

The down-sensitivity of $\fsf$ and of $\fcc$ differ by at most 1 on all graphs (since their sum, $|V|$,  changes by at most 1 between node-neighboring graphs). 

Within the framework of Lipschitz extensions, the down-sensitivity characterizes the largest monotone anchor set that is possible for a $\Delta$-Lipschitz extension. \asnote{More generally, it is the tightest possible monotone lower bound on the error of any differentially private algorithm that approximates a given function $f$. Wish list: Write this is as a lemma.} It plays an important role in the accuracy guarantees in  previous work on node-private algorithms.
We show that the error of our algorithm for the size of the spanning forest (and the number of connected components) can be bounded in terms of the down-sensitivity (\Thm{ds} below). The theorem provides a comparable result to existing theorems on subgraph counts and degree distributions~\cite{ChenZ13,RaskhodnikovaS16-E}.

\begin{theorem}[Down-sensitivity guarantee of algorithm for the size of the spanning forest]\label{thm:ds}
\sr{If $E(G)\neq\emptyset$,}
with probability  $1 - o(1)$, the algorithm in \Thm{sf} satisfies
\begin{align*}
%Sofya: I removed +1, since we have big-O there anyway.
    |\cA(G) - f_{\mathrm{sf}}(G)| \leq DS_{\fsf}(G) \cdot \widetilde{O}\Bparen{\frac{\ln \ln n}{\eps}}.
\end{align*}
\end{theorem}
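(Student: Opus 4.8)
The plan is to obtain \Thm{ds} as a direct consequence of \Thm{sf}, by showing that the parameter $\Delta^*$ appearing in the error bound of \Thm{sf} is, up to a constant factor, no larger than the down-sensitivity $DS_{\fsf}(G)$. Concretely, I would establish the combinatorial inequality $\Delta^* \le DS_{\fsf}(G) + 1$, and then observe that the hypothesis $E(G)\neq\emptyset$ forces $DS_{\fsf}(G)\ge 1$, so that $\Delta^* \le 2\,DS_{\fsf}(G)$. Substituting this into \Eqn{error-bound} and absorbing the constant factor into $\widetilde{O}(\cdot)$ immediately yields, with probability $1-o(1)$, the bound $|\cA(G)-\fsf(G)| \le DS_{\fsf}(G)\cdot \widetilde{O}(\ln\ln n/\eps)$ claimed in \Thm{ds}.

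To prove $\Delta^*\le DS_{\fsf}(G)+1$, I would combine two observations. First, an induced $\Delta$-star certifies that the down-sensitivity is at least $\Delta$: if a vertex $v$ together with an independent set $\{u_1,\dots,u_\Delta\}$ of its neighbors induces $K_{1,\Delta}$ in $G$, then $H'=G[\{v,u_1,\dots,u_\Delta\}]$ and $H=G[\{u_1,\dots,u_\Delta\}]$ are node-neighboring induced subgraphs of $G$ with $\fsf(H)=0$ and $\fsf(H')=\Delta$, so $DS_{\fsf}(G)\ge\Delta$. Taking the contrapositive, if $DS_{\fsf}(G)=d$ then $G$ has no induced $(d+1)$-star. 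Second, I would invoke the combinatorial fact established in this paper that a graph with no induced $\Delta$-star has a spanning forest of maximum degree at most $\Delta$; applied with $\Delta=d+1$ it produces a spanning forest of $G$ of maximum degree at most $DS_{\fsf}(G)+1$, i.e.\ $\Delta^*\le DS_{\fsf}(G)+1$. For the lower bound $DS_{\fsf}(G)\ge 1$ used above, any edge $\{u,w\}$ of $G$ (which exists since $E(G)\neq\emptyset$) yields node-neighbors $G[\{u,w\}]$ and $G[\{u\}]$ with $\fsf$-values $1$ and $0$.

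The only content beyond \Thm{sf} is this translation between induced stars, the down-sensitivity of $\fsf$, and the minimum possible maximum degree of a spanning forest. The single nontrivial ingredient is the combinatorial lemma ``no induced $\Delta$-star $\Rightarrow$ spanning forest of maximum degree $\le\Delta$,'' which is the main obstacle — but it is proved separately in the paper, so within the proof of \Thm{ds} itself there is essentially nothing more than the two one-line down-sensitivity witnesses and a constant-factor bookkeeping step.
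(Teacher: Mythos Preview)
Your proposal is correct and follows essentially the same route as the paper: the paper derives \Thm{ds} by combining \Thm{sf} with \Lem{trees} ($\Delta^*\le DS_{\fsf}(G)+1$), and it proves \Lem{trees} via exactly the two ingredients you name, namely that an induced $\Delta$-star witnesses $DS_{\fsf}(G)\ge\Delta$ (one direction of \Lem{induced-stars}) and that the absence of induced $\Delta$-stars forces a spanning $\Delta$-forest (\Lem{stars}). Your explicit bookkeeping step $DS_{\fsf}(G)\ge 1$ when $E(G)\neq\emptyset$, hence $\Delta^*\le 2\,DS_{\fsf}(G)$, just spells out what the paper leaves implicit.
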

An analogous statement holds with ${\fcc}$ replacing ${\fsf}$ when we modify the algorithm according to \eqref{eq:cc-sf-connection}.

To prove \Thm{ds}, we establish a connection between the down-sensitivity of $\fsf$ at graph $G$ and the existence of a degree-bounded spanning forest of $G$. \Lem{trees} together with \Thm{sf} immediately imply \Thm{ds}.

\begin{lemma}\label{lem:trees}
Let $\Delta^*$ denote the smallest possible maximum degree of a spanning forest of a graph $G.$
Then $\Delta^* \leq DS_{\fsf}(G) + 1$.
\end{lemma}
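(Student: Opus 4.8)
The plan is to derive \Lem{trees} from a purely combinatorial fact about spanning forests, which the paper isolates among its structural results: a graph with no induced $k$-star (i.e.\ no induced $K_{1,k}$) has a spanning forest of maximum degree at most $k$. Set $\Delta = DS_{\fsf}(G)$. It then suffices to exhibit a spanning forest of $G$ of maximum degree at most $\Delta+1$, since $\Delta^*$ is the minimum of the maximum degree over all spanning forests, so this gives $\Delta^*\le\Delta+1=DS_{\fsf}(G)+1$.

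\emph{Step 1: a bound on down-sensitivity forbids large induced stars.} First I would show that $DS_{\fsf}(G)\le\Delta$ implies $G$ has no induced $K_{1,\Delta+1}$. Suppose, for contradiction, that $G$ has such a star with center $v$ and pairwise nonadjacent leaves $w_1,\dots,w_{\Delta+1}$. Let $H'=G[\{v,w_1,\dots,w_{\Delta+1}\}]$ and $H=H'-v=G[\{w_1,\dots,w_{\Delta+1}\}]$. Since the leaves are pairwise nonadjacent, $H$ is edgeless, so $\fsf(H)=0$; and $H'$ is a star with $\Delta+1$ edges, so $\fsf(H')=\Delta+1$. As $H\prec H'\preceq G$ are node-neighbors, $DS_{\fsf}(G)\ge \fsf(H')-\fsf(H)=\Delta+1$, a contradiction. (More conceptually, one uses the identity that $\fsf(H')-\fsf(H)$ equals the number of connected components of $H$ that $v$ touches, which follows from $\fsf=|V|-\fcc$ and \Eqn{cc-sf-connection}; here $v$ touches all $\Delta+1$ singleton components of $H$.)

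\emph{Step 2: invoke the star fact.} Applying the combinatorial fact with $k=\Delta+1$ to $G$—which, by Step 1, has no induced $K_{1,\Delta+1}$—yields a spanning forest of $G$ of maximum degree at most $\Delta+1=DS_{\fsf}(G)+1$. Hence $\Delta^*\le DS_{\fsf}(G)+1$, proving the lemma.

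\emph{Main obstacle.} The substantive content, and the step I expect to be hardest, is the combinatorial star fact itself. The natural route is an exchange argument: among spanning forests of $G$, take one, $F$, minimizing a suitable convex function of the degree sequence; if some vertex $v$ has $\deg_F(v)\ge k+1$, then deleting $v$ breaks its tree into at least $k+1$ subtrees, attached to $v$ at distinct vertices $c_1,c_2,\dots$. Because $G$ has no induced $k$-star, the $c_i$ cannot all be pairwise nonadjacent, so some nonforest edge $c_ic_j$ exists, and replacing one of $v$'s forest edges by $c_ic_j$ decreases $\deg_F(v)$. The delicate part is performing this swap without creating a new vertex of degree $k+1$: this forces a careful choice of the potential and of the swapped edge—e.g.\ swapping an edge incident to a $c_i$ of low forest-degree—and requires excluding the bad case in which every available edge among the $c_i$ joins two vertices of degree $k$ or more. (An alternative is induction on $|V(G)|$: for any vertex $v$, $G-v$ satisfies $DS_{\fsf}(G-v)\le\Delta$ and, by induction, has a spanning forest $F'$ of maximum degree $\le\Delta+1$; moreover $v$ is adjacent to at most $\Delta$ components of $G-v$, since that number is $\fsf(G)-\fsf(G-v)\le DS_{\fsf}(G)$, so attaching $v$ by one edge to each such component makes $\deg(v)\le\Delta$—but one must still find, in each such component, a neighbor of $v$ of $F'$-degree at most $\Delta$, which meets exactly the same exchange difficulty.)
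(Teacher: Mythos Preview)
Your reduction is exactly the paper's: combine one direction of the identity $DS_{\fsf}(G)=s(G)$ (your Step~1 is the inequality $s(G)\le DS_{\fsf}(G)$, which is all that is needed here) with the combinatorial star fact that a graph with no induced $\Delta$-star has a spanning $\Delta$-forest (the paper's \Lem{stars}), and then set $\Delta=DS_{\fsf}(G)+1$.

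On the obstacle you flag, the paper's resolution is cleaner than either of your sketches. It also proceeds by induction on $|V|$, but removes a \emph{non-cut} vertex $v_0$ (e.g.\ a leaf of any spanning forest). Then the inductive spanning $\Delta$-forest of $G\setminus\{v_0\}$ becomes a spanning forest of $G$ after adding a \emph{single} edge $(v_0,v_1)$, so at most one vertex, $v_1$, can have degree $\Delta+1$. The local repair at $v_1$ is exactly the swap you describe: among the $\Delta$ forest-neighbours of $v_1$ other than $v_0$, two vertices $a,b$ are adjacent in $G$ (since $G$ has no induced $\Delta$-star), and one replaces $(v_1,b)$ by $(a,b)$. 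This fixes $v_1$ but may push $a$ to degree $\Delta+1$, so one iterates; the paper proves the sequence $v_0,v_1,a,\dots$ of problem vertices is a simple path in the current forest, hence the process terminates. Your induction variant, removing an arbitrary vertex and reattaching with up to $\Delta$ edges, could create up to $\Delta$ problem vertices simultaneously---the non-cut-vertex choice is precisely what collapses this to a single repair chain and sidesteps the ``bad case'' you worry about in the exchange argument.
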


 \Lem{trees} is obtained from two combinatorial results about spanning forests of graphs. The first, \Lem{induced-stars}, characterizes the down-sensitivity of $\fsf$ via the size of induced stars of~$G$. Given an integer $k \geq 1$ 
 and vertices $v_0, \dots, v_k\in V(G)$, we say that these vertices form an {\em induced $k$-star centered at $v_0$} in $G$ if $(v_0, v_i) \in E(G)$ for all $i \in [k]$ and $(v_i,v_j)\notin E(G)$ 
 for all $i,j \in [k].$
 %Given an integer $k \geq 1$,  we say that $G$ {\em has an induced $k$-star centered at $v_0$} if $V(G)$ contains vertices $v_0, \dots, v_k$ such that $(v_0, v_i) \in E(G)$ for all $i \in [k]$ and $(v_i,v_j)\notin E(G)$ 
% %there are no edges in $G$ between the vertices $v_i, i \in [k]$.
% for all $i,j \in [k].$
 
 \begin{lemma}\label{lem:induced-stars}
For a graph $G$, let $\inducedstar(G)$ denote the largest integer such that $G$ has an induced $\inducedstar(G)$-star. Then
 \begin{align*}
     DS_{\fsf}(G) = \inducedstar(G).
 \end{align*}
 \end{lemma}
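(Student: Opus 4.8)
The plan is to first pin down exactly how $\fsf$ changes between node-neighboring induced subgraphs, and then translate that quantity into the size of an induced star. Suppose $H \prec H' \preceq G$ are node-neighbors, so $H'$ is obtained from $H$ by adding back one vertex $v$ of $G$ together with all $G$-edges from $v$ to $V(H)$. Let $t(v,H)$ denote the number of distinct connected components of $H$ containing a neighbor of $v$. Using the identity $\fsf = |V| - \fcc$ and the elementary fact that adding a vertex adjacent to exactly $t$ distinct connected components of $H$ changes $\fcc$ by $1 - t$ (valid also when $t = 0$), I get $\fsf(H') - \fsf(H) = 1 - (\fcc(H') - \fcc(H)) = t(v,H)$. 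Since this is nonnegative (i.e. $\fsf$ is monotone under adding a node), $DS_{\fsf}(G) = \max |\fsf(H') - \fsf(H)| = \max_{v, H} t(v, H)$, the maximum taken over vertices $v \in V(G)$ and induced subgraphs $H$ of $G - v$.

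For the inequality $DS_{\fsf}(G) \le \inducedstar(G)$, I would take $v$ and $H$ achieving $DS_{\fsf}(G) = t(v, H) =: t$, pick one neighbor $v_i$ of $v$ inside each of the $t$ distinct components of $H$ that meet $v$, and claim that $v, v_1, \dots, v_t$ form an induced $t$-star centered at $v$ in $G$. Indeed, each $v_i$ is adjacent to $v$ by construction, and for $i \neq j$ the vertices $v_i, v_j$ lie in different connected components of $H$, so there is no $H$-edge between them; since $H$ is an induced subgraph of $G$, there is no $G$-edge between them either. Hence $G$ has an induced $t$-star and $\inducedstar(G) \ge t = DS_{\fsf}(G)$.

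For the reverse inequality $DS_{\fsf}(G) \ge \inducedstar(G)$, I would start from an induced $\inducedstar(G)$-star of $G$ centered at some $v_0$ with leaves $v_1, \dots, v_{\inducedstar(G)}$, let $H$ be the subgraph of $G$ induced on $\{v_1, \dots, v_{\inducedstar(G)}\}$ and $H'$ the subgraph of $G$ induced on $\{v_0, v_1, \dots, v_{\inducedstar(G)}\}$. By the definition of an induced star, $\{v_1, \dots, v_{\inducedstar(G)}\}$ is an independent set in $G$, so $H$ consists of $\inducedstar(G)$ isolated vertices, and $v_0$ is adjacent in $H'$ to all of them; thus $t(v_0, H) = \inducedstar(G)$. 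Since $H$ and $H'$ are node-neighboring induced subgraphs of $G$, the first step gives $DS_{\fsf}(G) \ge \fsf(H') - \fsf(H) = \inducedstar(G)$. Combining the two inequalities yields the claim; the degenerate case $E(G) = \emptyset$, where both sides are $0$, is immediate.

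I expect the only real care to go into the first step — getting the direction of the neighbor relation right, checking that the $\fcc$-change formula $1-t$ covers the $t=0$ case, and recognizing that $t(v,H)$ is exactly a ``component-merging'' count. The two inequalities are then essentially a dictionary translation between ``neighbors of $v$ in distinct components of an induced subgraph'' and ``leaves of an induced star,'' with the induced-subgraph hypothesis on $H$ doing the work of ruling out edges among the chosen leaves.
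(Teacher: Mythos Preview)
Your proposal is correct and follows essentially the same approach as the paper: both directions hinge on the identity $\fsf(H')-\fsf(H)=t(v,H)$ (the number of components of $H$ hit by $v$), with the lower bound witnessed by a maximum induced star and the upper bound obtained by selecting one neighbor of $v$ per component and using that $H$ is induced in $G$ to rule out edges among the chosen leaves. Your explicit framing via $t(v,H)$ and handling of the $t=0$ / edgeless case is slightly more careful than the paper's, but the argument is the same.
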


Note that $s(G)$ can be much lower than the maximum degree of $G$ due to the requirement that the $s(G)$-star be induced. The second result, \Lem{stars}, states that a graph with no large induced stars has a low-degree spanning forest.
 
\begin{lemma}\label{lem:stars}
A graph with no induced $\Delta$-stars has a spanning $\Delta$-forest.
%If a graph $G$ has no induced $\Delta$-stars, then $G$ has a spanning $\Delta$-forest. 
\end{lemma}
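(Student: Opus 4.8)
The plan is to argue by a local exchange on spanning trees. First I would reduce to connected $G$: a spanning forest of $G$ decomposes into spanning trees of its connected components, each component is an induced subgraph of $G$ and so also has no induced $\Delta$-star, and the maximum degree of the forest is the maximum of those of the component trees; hence it suffices to show that every \emph{connected} graph with no induced $\Delta$-star has a spanning tree of maximum degree at most $\Delta$.

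So suppose $G$ is connected and fix a spanning tree $T$ of $G$ minimizing the potential $\Phi(T) = \sum_{v \in V(G)} 2^{\deg_T(v)}$, where $\deg_T(v)$ is the degree of $v$ in $T$ (any sufficiently convex weighting of the degrees, or equivalently the lexicographically smallest sorted degree sequence, would serve as well). Assume toward a contradiction that $d := \max_v \deg_T(v) \ge \Delta+1$ and let $D = \{v : \deg_T(v) = d\}$, which is nonempty. The exchange step is this: if $w \in D$ has two neighbors $x,y$ in $T$ that are adjacent in $G$, then the edge $xy$ lies outside $T$ and its fundamental cycle is the triangle on $w,x,y$; removing from $T$ the tree edge joining $w$ to whichever of $x,y$ has the larger $T$-degree and adding $xy$ produces another spanning tree $T'$ in which $w$ has degree $d-1$. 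A one-line computation with $\Phi$ shows $\Phi(T') < \Phi(T)$ whenever at least one of $x,y$ has $T$-degree strictly less than $d$. By minimality of $T$ it follows that, for every $w \in D$, any edge of $G$ with both endpoints among the $T$-neighbors of $w$ must have both endpoints in $D$.

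The rest is then forced by the hypothesis. For $w \in D$, let $A_w$ be the set of $T$-neighbors of $w$ that lie outside $D$; by the previous paragraph $A_w$ is an independent set of $G$, and every member of $A_w$ is $G$-adjacent to $w$. If $|A_w| \ge \Delta$, then $w$ together with any $\Delta$ elements of $A_w$ forms an induced $\Delta$-star centered at $w$, contradicting the hypothesis; hence $|A_w| \le \Delta-1$, and therefore $w$ has at least $\deg_T(w) - (\Delta-1) = d-\Delta+1 \ge 2$ neighbors in $T$ that also lie in $D$. Thus the subgraph of $T$ induced on $D$ has minimum degree at least $2$ — impossible, since that subgraph is a forest and a nonempty forest has a vertex of degree at most $1$. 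This contradiction gives $d \le \Delta$, and $T$ is the desired spanning tree.

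I expect the only genuine subtlety to be in setting up the exchange: one must delete the tree edge at $w$ on the correct side so as not to create a vertex of degree $d+1$, and must check that $\Phi$ strictly drops in every case except when both $x$ and $y$ already have degree $d$. The conceptual crux — which is exactly where forbidding induced $\Delta$-stars bites — is realizing that the right target is to force each degree-$d$ vertex to have \emph{two} degree-$d$ neighbors in $T$, since that is precisely what a forest cannot accommodate; more naive plans (trying to lower the maximum degree with a single swap, or relying only on a Nash--Williams-type cut condition, which is necessary but far from sufficient here) get stuck.
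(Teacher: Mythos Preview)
Your exchange computation is off by one, and the error is fatal for the argument as written. When you remove the tree edge from $w$ to the larger-degree endpoint and add $xy$, the vertex $w$ loses a degree while the \emph{smaller}-degree endpoint gains one; writing $m=\min(\deg_T(x),\deg_T(y))$, the change in $\Phi$ is exactly $2^{m}-2^{d-1}$, which is strictly negative only when $m\le d-2$. If both $x$ and $y$ have $T$-degree exactly $d-1$ --- so neither lies in $D$, yet each has degree strictly less than $d$ --- the swap merely permutes the degree multiset and leaves $\Phi$ unchanged. Consequently you cannot deduce that $A_w$ is independent from minimality of $\Phi$. A concrete witness: take $V=\{w,x,y,z,x',y'\}$, tree edges $E(T)=\{wx,wy,wz,xx',yy'\}$, and $E(G)=E(T)\cup\{xy\}$. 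All three spanning trees of this $G$ have degree multiset $\{3,2,2,1,1,1\}$, so $T$ is $\Phi$-minimal, yet $A_w=\{x,y,z\}$ and $xy\in E(G)$. (This particular $G$ contains an induced $2$-star, so it does not itself violate the lemma; the point is that your assertion ``$A_w$ is independent'' is claimed to follow from minimality of $\Phi$ alone, with the no-induced-$\Delta$-star hypothesis invoked only afterward, and that assertion is false.) Since the obstruction is invariance of the degree multiset, no choice of convex weighting --- nor the lexicographic degree sequence you mention --- can separate these trees.

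The paper takes a different route that sidesteps this issue. It argues by induction on $|V(G)|$: delete a non-cut vertex $v_0$, take a spanning $\Delta$-forest of the remainder by induction, reattach $v_0$ by a single edge, and then iterate precisely your triangle swap at whichever vertex currently has degree $\Delta+1$. The crux is a termination argument showing that the successive repair centers $v_0,v_1,v_2,\ldots$ form a simple path in the evolving forest, so the process halts. Your approach could plausibly be rescued along the same lines --- iterate the swap whenever $\Phi$ is merely preserved and argue that the degree-$d$ vertex cannot recur --- but that is essentially the paper's idea, and it is not what your potential-minimization argument establishes.
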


\Lem{stars} is the key link in the chain connecting the down-sensitivity of $\fsf$ to our algorithm's accuracy. 
The proof is constructive: we give a procedure that adds one vertex at a time to the spanning forest, modifying it at each step to maintain the degree bound. 

%Moved pointers to proofs to the end of the subsubsection.
 
\paragraph{Down Sensitivity, Anchor Sets, and Our Extensions}~ As discussed in \Sec{intro-our-Lipschitz-extensions}, one way to measure the quality of a Lipschitz extension is in terms of its anchor set.
%A set $S$ of graphs is monotone if whenever $S$ contains $G$, it also contains all $G$'s induced subgraphs. 
%
%As it turns out, the largest monotone anchor set for $f$ and $\Delta$ is characterized in terms of the down-sensitivity of $f$. 
%
Given a function $f$ and parameter $\Delta$, the largest possible monotone anchor set of any $\Delta$-Lipschitz extension of $f$ is the set $S^*_{\Delta} = \{G \mid DS_f(G) \leq \Delta\}$. By ``largest'' we mean that every monotone subset of an anchor set is a subset of $S^*_{\Delta}$.
We give an explicit Lipschitz extension with anchor set $S^*_{\Delta}$ and a proof of the optimality of $S^*_{\Delta}$ in  \Sec{ds_general}. 
%We state and prove this result in \Lem{anchor_set}. 
% See \Lem{ds_extension} for such a construction. 
However, for general $f$, the construction need not be efficient, even if $f$ is computable in polynomial time.

The Lipschitz extensions $f_\Delta$ (for $\fsf$) used in our main algorithm are efficiently computable and have anchor sets that nearly match those of the extension based on down-sensitivity: their  anchor sets $S_\Delta$ contain all graphs with down-sensitivity at most $\Delta-1$, which means they contain the largest possible monotone anchor set for Lipschitz parameter $\Delta-1$.
%The anchor sets of our Lipschitz extensions for $\fsf$ are closely related to the optimal monotone anchor sets, as we show in \Lem{our-anchor-sets}. Our Lipschitz extension has the advantage of being efficiently computable.  

\begin{lemma}[Nearly Optimal Anchor Sets]\label{lem:our-anchor-sets}
Let $G$ be a graph and  $\Delta \geq DS_{\fsf}(G) +1$. Then $f_\Delta(G)=\fsf(G)$. Consequently, for all $\Delta\geq 1$, 
\begin{align*}
S^*_{\Delta-1}\subseteq S_\Delta \, .
\end{align*}
\end{lemma}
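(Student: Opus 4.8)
The plan is to obtain \Lem{our-anchor-sets} as a short corollary of the earlier combinatorial results — chiefly \Lem{trees} (or, unpacking it, \Lem{induced-stars} together with \Lem{stars}) — combined with the matroid-polytope description of $f_\Delta$. Essentially all of the substantive work already sits in those lemmas; here I only need to stitch them together. I would first prove the first assertion, that $f_\Delta(G)=\fsf(G)$ whenever $\Delta\geq DS_{\fsf}(G)+1$, and then read off the ``consequently'' part by unfolding the definition of $S^*_{\Delta-1}$.

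For the first assertion, suppose $\Delta\geq DS_{\fsf}(G)+1$. By \Lem{trees}, the smallest possible maximum degree $\Delta^*$ of a spanning forest of $G$ satisfies $\Delta^*\leq DS_{\fsf}(G)+1\leq\Delta$, so $G$ has a spanning $\Delta$-forest $F$, i.e., a maximal forest subgraph of $G$ with maximum degree at most $\Delta$; such an $F$ has exactly $\fsf(G)$ edges. (Alternatively, one can bypass \Lem{trees}: \Lem{induced-stars} turns $\Delta\geq DS_{\fsf}(G)+1$ into the statement that $G$ has no induced $\Delta$-star, and then \Lem{stars} produces $F$ directly.) It then remains to sandwich $f_\Delta(G)$ between $\fsf(G)$ on both sides. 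Recall that $f_\Delta(G)=\max\{\sum_{e\in E(G)}x_e : x\in P_\Delta(G)\}$, where $P_\Delta(G)$ is the $\Delta$-bounded forest polytope of $G$. For the upper bound, $P_\Delta(G)$ is contained in the forest polytope $P(G)$, the independence polytope of the graphic matroid of $G$; maximizing the all-ones objective over $P(G)$ is attained at an integral vertex — the indicator of a maximal forest — and equals $\fsf(G)$, so $f_\Delta(G)\leq\fsf(G)$. For the lower bound, the indicator vector $\mathbf{1}_F\in\{0,1\}^{E(G)}$ lies in $P_\Delta(G)$, since it is a vertex of $P(G)$ (as $F$ is a forest) and it respects every per-vertex degree constraint (as $F$ has maximum degree at most $\Delta$), and its objective value is $|E(F)|=\fsf(G)$; hence $f_\Delta(G)\geq\fsf(G)$. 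Combining the bounds gives $f_\Delta(G)=\fsf(G)$, that is, $G\in S_\Delta$. (This last step can instead simply invoke the fact, noted earlier in \Sec{results-Lipschitz-ext}, that $S_\Delta$ contains every graph admitting a spanning $\Delta$-forest.)

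For the ``consequently'' part, I would fix $\Delta\geq 1$ and take an arbitrary $G\in S^*_{\Delta-1}=\{H : DS_{\fsf}(H)\leq\Delta-1\}$. Then $\Delta\geq DS_{\fsf}(G)+1$, so the first assertion yields $f_\Delta(G)=\fsf(G)$, i.e., $G\in S_\Delta$; hence $S^*_{\Delta-1}\subseteq S_\Delta$. The boundary case $\Delta=1$ is harmless, since $S^*_0$ consists of edgeless graphs, for which the empty forest is a spanning $1$-forest. The only place that calls for any care is the polytope computation — that the all-ones LP over the forest polytope equals $\fsf(G)$ and that imposing per-vertex degree constraints only shrinks the polytope — but this is the standard matroid-polytope fact already used to define and analyze $f_\Delta$, so I do not expect it to be a real obstacle; the genuine difficulty lives entirely in \Lem{stars} (and \Lem{induced-stars}).
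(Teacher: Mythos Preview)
Your proposal is correct and follows essentially the same route as the paper: invoke \Lem{trees} to obtain a spanning $\Delta$-forest, then conclude $f_\Delta(G)=\fsf(G)$ via the polytope description (which the paper packages as Item~\ref{item:delta-forest} of \Lem{extension} rather than re-deriving inline, as you yourself note). The only difference is that you spell out the upper/lower bound sandwich explicitly, whereas the paper simply cites the already-established fact that $f_\Delta$ agrees with $\fsf$ on graphs admitting a spanning $\Delta$-forest.
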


We prove Lemmas~\ref{lem:trees}--\ref{lem:our-anchor-sets}  in \Sec{ds}.

\subsubsection{Optimality of our Lipschitz extensions in terms of $\ell_\infty$ error.}\label{sec:linfty-intro}
%%% OLD TEXT FROM Sec 1.1
%\as{We write $H\preceq G$ to denote that $H$ is an induced subgraph of $G$. %Namely, F}or an extension $f_{\Delta}$, define $\Err_G(f_{\Delta}, f)$ as $\max_{H \preceq G} |f_{\Delta}(H) - f(H)|$. Then the optimal Lipschitz extension $f^*_{\Delta}$ \as{on subgraphs of $G$} is a $\Delta$-Lipschitz function  that minimizes $\Err_G(f_{\Delta}, f)$. Cummings and Durfee construct a family of Lipschitz extensions such that, for all $G$, \as{the error $\Err_G(f_{\Delta},f)$ is at most} % it is  
%bounded by $2\Err_G(f^*_{\Delta}, f)$. \sr{In general,} their construction \sr{takes} exponential time even if  $f$ is efficiently computable. We show that the error of our Lipschitz extensions for $\fcc$ is bounded by $2\Err_G(f^*_{\Delta-1}, f)$, while achieving polynomial running time. Thus, our Lipschitz extension with parameter $\Delta$ achieves a $2$-approximation to the error of the optimal extension for $\fcc$ with parameter $\Delta-1$. 

Next, we analyze
%The next result on 
the optimality of our Lipschitz extension %is 
in the vein of results of Cummings and Durfee \cite{CummingsD20}. They design an algorithm for constructing a Lipschitz extension for general $f$. Their construction is 2-competitive with
%achieves a 2-approximation to 
the optimal extension with the same Lipschitz parameter. 
%Optimality is
%in terms of $\ell_{\infty}$ error, measured either over the set of all graphs, or more conservatively, over all induced subgraphs of the input graph.  Namely, for an extension $f_{\Delta}$, 
\sr{To measure the error of an extension $f_{\Delta}$, we}
define $\Err_G(f_{\Delta}, f)$ as $\max_{H \preceq G} |f_{\Delta}(H) - f(H)|$. Cummings and Durfee compare $\Err_G(f_{\Delta}, f)$ to $\Err_G(f^*, f)$ for all functions $f^*$ of bounded sensitivity. We give such a comparison for our Lipschitz extension for the size of the spanning forest.

\begin{definition}[Functions of bounded sensitivity]
Let $\mathcal{G}$ be the set of all undirected, unweighted graphs. For a Lipschitz parameter $\Delta > 0$, define 
\begin{align*}
    \mathcal{F}_{\Delta} = \{ f \colon \mathcal{G} \to \R \mid f \textnormal{ is $\Delta$-Lipschitz} \}. 
\end{align*}
\end{definition}

\begin{theorem}[Optimality of our Lipschitz extension]
\label{thm:optimality}
Let $\Delta \geq 1$ and $f_{\Delta}$ be our Lipschitz extension with parameter $\Delta$ for the size of the spanning forest. If $\Err_G(f_{\Delta}, \fsf) > 0$ then 
\begin{align}
   \Err_G(f_{\Delta}, \fsf) \leq \Big( 2 \cdot \min_{f^* \in \mathcal{F}_{\Delta-1}} \Err_G(f^*, \fsf) \Big) - 1. \label{eq:optimality}  
\end{align}
\end{theorem}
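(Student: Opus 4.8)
The plan is to show that \emph{every} $f^*\in\mathcal F_{\Delta-1}$ satisfies $\Err_G(f^*,\fsf)\ge\tfrac12\bigl(\Err_G(f_\Delta,\fsf)+1\bigr)$; rearranging and minimizing over $f^*$ then gives \eqref{eq:optimality}. Two reductions set this up. First, the $\Delta$-bounded forest polytope of a graph lies inside its forest polytope, whose integral vertices are forests, so $f_\Delta(H)\le\fsf(H)$ for every $H$, while $f_\Delta(H)=\fsf(H)$ whenever $DS_{\fsf}(H)\le\Delta-1$ by \Lem{our-anchor-sets}. Hence $\Err_G(f_\Delta,\fsf)=\max_{H\preceq G}\bigl(\fsf(H)-f_\Delta(H)\bigr)$ is one-sided and attained at some $H^*\preceq G$ with $DS_{\fsf}(H^*)\ge\Delta$; write $e=\fsf(H^*)-f_\Delta(H^*)>0$. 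Second, for $f^*\in\mathcal F_{\Delta-1}$ and a nested pair $H_1\preceq H_2\preceq G$ one has $d(H_1,H_2)=|V(H_2)|-|V(H_1)|$, so $|f^*(H_2)-f^*(H_1)|\le(\Delta-1)\bigl(|V(H_2)|-|V(H_1)|\bigr)$, and combining with $|f^*(H_i)-\fsf(H_i)|\le\Err_G(f^*,\fsf)$ via the triangle inequality gives $\Err_G(f^*,\fsf)\ge\tfrac12\bigl(\fsf(H_2)-\fsf(H_1)-(\Delta-1)(|V(H_2)|-|V(H_1)|)\bigr)$. It therefore suffices to produce a nested pair $H_1\preceq H^*$ whose gap $\fsf(H^*)-\fsf(H_1)-(\Delta-1)(|V(H^*)|-|V(H_1)|)$ is at least $e+1$.

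I would extract $H_1$ from LP duality for the program defining $f_\Delta(H^*)$. Fix optimal primal and dual solutions; the dual puts nonnegative weights $y_S$ on the forest-rank inequalities $x(E[S])\le|S|-1$ and weights $z_v\in[0,1]$ (capping at $1$ is harmless) on the degree inequalities $\deg_x(v)\le\Delta$, with $f_\Delta(H^*)=\sum_S(|S|-1)y_S+\Delta\sum_v z_v$ and $\sum_{S\ni a,b}y_S+z_a+z_b\ge1$ for every edge $(a,b)$. Let $W=\{v:z_v>0\}$ and $H_1=H^*-W$. Every edge of $H_1$ has both endpoints outside $W$, hence is fractionally covered by the sets $S$ with weights $y_S$; since a maximum forest $F$ of $H_1$ then obeys $|E(F)|\le\sum_S y_S\,|E(F)\cap E[S]|\le\sum_S(|S|-1)y_S$, we get $\fsf(H_1)\le\sum_S(|S|-1)y_S=f_\Delta(H^*)-\Delta\sum_{v\in W}z_v$. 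Moreover $W\ne\emptyset$: otherwise all degree constraints are slack, so $f_\Delta(H^*)$ equals the maximum fractional forest of $H^*$, namely $\fsf(H^*)$, contradicting $e>0$. Substituting and using $|V(H^*)|-|V(H_1)|=|W|$, the gap for $(H_1,H^*)$ is at least $e+\Delta\sum_{v\in W}z_v-(\Delta-1)|W|$, which is $\ge e+1$ as soon as $\Delta\sum_{v\in W}z_v\ge(\Delta-1)|W|+1$.

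This last inequality is the step I expect to be the main obstacle, because the $z_v$ can a priori be small. For $\Delta=1$ it is vacuous and one instead takes $H_1$ to be a single vertex of $H^*$, making the gap $\fsf(H^*)\ge e+1$ since $f_1(H^*)\ge1$ whenever $E(H^*)\ne\emptyset$ (and $e>0$ forces $E(H^*)\ne\emptyset$). For $\Delta\ge2$ I would argue that an optimal dual exists with $z\in\{0,1\}^V$ --- equivalently, that the maximum-weight $\Delta$-bounded forest is attained at an integral point for $\Delta\ge2$, so that on $\{v:z_v=1\}$ the degree constraints behave like a vertex-cover constraint --- whence $\sum_{v\in W}z_v=|W|\ge1$ and the inequality is immediate; should that reduction prove delicate, a fallback is to replace the single pair $(H_1,H^*)$ with the chain that deletes the vertices of $W$ one at a time in decreasing order of $z_v$ and to telescope the bound from the first paragraph. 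Either way, every $f^*\in\mathcal F_{\Delta-1}$ then has $\Err_G(f^*,\fsf)\ge(e+1)/2$, so $e\le 2\,\Err_G(f^*,\fsf)-1$, and minimizing over $f^*$ gives \eqref{eq:optimality}. The only facts about $\fsf$ used beyond linear programming are those packaged in \Lem{our-anchor-sets} (and behind it \Lem{stars} and \Lem{induced-stars}): these force $DS_{\fsf}(H^*)\ge\Delta$ at the worst case, hence a nonempty certificate $W$.
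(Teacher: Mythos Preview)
Your reduction is correct and matches the paper: it suffices to find, for the maximizing subgraph $H^*$ (where $e=\fsf(H^*)-f_\Delta(H^*)>0$), an induced $H_1\preceq H^*$ with
\[
\fsf(H^*)-\fsf(H_1)-(\Delta-1)\,d(H^*,H_1)\ \ge\ e+1,
\]
which is exactly the content of the paper's \Lem{remove_set}. The paper proves this not via LP duality but via a combinatorial decomposition due to Win (\Lem{win}): any graph with no spanning $\Delta$-forest contains an induced subgraph $S$ with a spanning $\Delta$-tree and a set $X\subset V(S)$ such that $\fcc(S\setminus X)\ge |X|(\Delta-2)+2$, and there are no edges from $S\setminus X$ to $G\setminus V(S)$. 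Combining this with induction on $|V|+|E|$ yields the needed $H_1$.

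Your LP-duality route has a genuine gap at the step you flag. The inequality you need is $\Delta\sum_{v\in W}z_v\ge(\Delta-1)|W|+1$, and you propose to obtain it by asserting that some optimal dual has $z\in\{0,1\}^V$. Two problems: (i) the ``equivalence'' you state is false---primal integrality (that the max over $\cP_\Delta$ is attained at an integer forest) does \emph{not} imply that the dual degree multipliers are $0/1$; these are distinct structural claims about different polyhedra. (ii) Even as a freestanding claim, $z\in\{0,1\}^V$ is nontrivial: once the $y_S$ are fixed, the residual problem in $z$ is a fractional edge-cover/vertex-cover--type LP, which in general admits non-integral optima (half-integrality is the best one expects generically), and there is no evident reason the coupling with the $y$-variables forces integrality here. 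So the step is unjustified. Your fallback---telescoping over a chain that removes $W$ one vertex at a time---does not help: summing the per-step inequality recovers exactly the single-pair bound $2\,\Err_G(f^*,\fsf)\ge \fsf(H^*)-\fsf(H_1)-(\Delta-1)|W|$, so you are back to needing the same inequality on $\sum z_v$.

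In short, the skeleton of your argument is right, but the crucial existence of a good $H_1$ is precisely where the work lies, and the paper supplies it through Win's structural lemma and induction rather than LP duality. If you want to pursue the duality route, you would need an independent proof that an optimal dual can be chosen with $z\in\{0,1\}^V$ (or some other control on $\sum_{v\in W}z_v$ relative to $|W|$); absent that, the argument does not close. Your treatment of $\Delta=1$ is fine and coincides with the paper's.
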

Cummings and Durfee show that for all functions $f$, their Lipschitz extension, denoted by $f_{\Delta}'$, satisfies $\Err_G(f_{\Delta}', f) \leq 2 \min_{f^* \in \mathcal{F}_{\Delta}} \Err_G(f^*, f)$,  Instead, in our result, the error of the extension is compared against the more restricted class of functions of sensitivity $\Delta-1$. However, our extension is computable in polynomial time, as opposed to the exponential construction of~\cite{CummingsD20}. 
The core of the proof is a lemma (\Lem{remove_set}) which, intuitively, explains the error of our Lipschitz extension $f_\Delta$ by attributing the error on a graph $G$ to one of its induced subgraphs. To prove it, we use a combinatorial result of Win \cite{Win89} on the decomposition of graphs with no spanning $\Delta$-forests. We prove \Lem{remove_set} and \Thm{optimality} in \Sec{optimality}.

\Thm{optimality} provides a strong type of optimality. In particular, it implies \Lem{our-anchor-sets} on the optimality of the anchor sets $S_\Delta$.

\subsubsection{Performance of our algorithm on specific graph families} %graph models
We show that our algorithm provides a good approximation to $\fcc$ (the number of connected components) for two common graph models: the Erd\H{o}s-R\'enyi model $G(n, p)$ for subconstant $p$, and the geometric graph model. 

The $G(n, p)$ model generates a random $n$-node graph by independently adding an edge between each pair of nodes with probability $p$.
%, independent of all other pairs. 
Erd\H{o}s and R\'enyi \cite{ErdosR60} showed that the connectivity of $G$ changes with $p.$
%with the regime of $p$, as studied by . We consider the regime $np =c$ for some constant $c$, which is the regime of many connected components. 
Consider the regime where %regime with many connected components: when 
$np =c$ for some constant $c > 0$. With probability $1-o(1)$, the graph will have $\fcc = \Omega(n)$ %connected components 
and maximum degree $O(\log n)$.
% If $c < 1$, almost surely all connected components have size at most $O(\log n)$. For $c = 1$, the graph almost surely has a large connected component of size about $n^{2/3}$. Finally, if $c > 1$, the graph almost surely has a large connected component containing a \sr{constant} fraction of the nodes; all other components contain at most $O(\log n)$ nodes.  
% In the regime where $np = c$ for some constant $c$, the graph $G$ has maximum degree $O(\log n)$ with  probability $1-o(1)$. 
By \Thm{sf} and \Eqn{cc-sf-connection}, 
our node-private algorithm's estimate of $\fcc$ will then have additive error
$\pm\widetilde{O}((\log n)/\eps)$
and relative error $\tilde O(( \log^2 n)/\eps n)$.
% additive approximation to $\fcc$; 
%\sr{for typical graphs arising from $G(n,cn)$}; 

% For $np < 1$, the graph almost surely has no connected components of size greater than $O(\log n)$\srnote{Misuse of asymptotic notation?}. For $np = 1$, the graph will almost surely have a large connected component of size about $n^{2/3}$. Finally, for $np = c$ for some constant $c > 1$, the graph will almost surely have a large connected component containing a fraction of the vertices; all other components will contain at most $O(\log n)$ vertices. We consider the regime $np = c$ for some constant $c$, i.e., the interesting regime of many connected components. In this regime, $G$ has maximum degree $O(\log n)$ with high probability. By \Thm{sf}, our algorithm provides an $\widetilde{O}((\log n)/\eps)$ additive-error approximation to the number of connected components. 

A random geometric graph $G$ is defined by a set of $n$ vertices $V$ in the unit square, i.e., $V \subseteq [0, 1] \times [0, 1]$, and a distance parameter $r \in (0, 1)$. An edge exists between two vertices in $V$ if their Euclidean distance is at most $r$~\cite{penrose2003random}.  Geometric graphs have low-degree spanning forests, namely, of degree at most 6~\cite{balogh2011hamilton}.
% , see also \cite[Lemma 12.11]{FriezeK2016book}. Forests that minimize the sum of the edge lengths~\cite{balogh2011hamilton} have this property.  
We provide an alternative proof of this fact, via \Lem{stars}: %Note that a 
a geometric graph does not have induced $6$-stars because %. This is equivalent to the statement that 
it is impossible to fit $6$ points in the unit disk so that the distance between all points is strictly greater than $1$. By \Lem{stars}, a geometric graph has a spanning $6$-forest. By \Thm{sf} and \Eqn{cc-sf-connection}, our algorithm gives a
\sr{$\pm \as{\tilde O}((\ln \ln n)/\eps)$}
%an $O((\ln \ln n)/\eps)$ additive-error 
approximation to $\fcc$ in geometric graphs. Random geometric graphs and their variants have been used to model certain aspects of real-world networks,  including social graphs  \cite{bonato2015domination,flaxman2006geometric} and mobile networks~\cite{peres2013mobile,zinovyev2022space}.
%The error bound above applies to all geometric graphs and thus to %all \as{graphs from random} geometric graph models.

\subsection{Prior Work on Private Graph Analysis}\label{sec:related_work}

% There are two natural definitions of DP for graph databases: edge-privacy and node-privacy. In edge-privacy, first studied by \cite{NissimRS07},  two neighboring graphs have the same vertex set and differ only by the insertion or deletion of one edge. Node-privacy hides an insertion or deletion of a node with arbitrary connections and thus provides a much stronger privacy protection. It is the appropriate privacy notion for social network data, where one individual's data consists of all the relationships (edges) that the individual contributes to the graph. 

Of the two natural adaptations of differential privacy to network data, edge-privacy is easier to achieve and (as a result) has been studied more extensively. Edge-private algorithms are known for a wide range of graph statistics and modeling tasks; see, for example, \cite{NissimRS07,HayLMJ09,BlockiBDS12,GuptaRU12,KarwaS12,Upadhyay13, WangWW13, WangWZX13,KarwaRSY14,ProserpioGM14,LuM14, ZhangCPSX15, MulleCB15, NguyenIR16, RoohiRT19,ZhangN19,AhmedLJ20,BlockiGM22,DLRSSY22}. The number of connected components has not been studied explicitly under edge-privacy, but it is easy to release with additive error $\Theta(1/\eps)$, since it can change by at most 1 with the insertion or removal of an edge.
%Adding or removing one edge from a graph can change the number of connected components (or the size of the spanning forest) by at most $1$. Therefore, the algorithm which adds noise according to the global sensitivity\srnote{This use of global sensitivity does not correspond to what's defined in Prelims.} of $\fcc$ gives a constant error additive approximation for $\fcc$, while satisfying edge-privacy. 

Node privacy is a better fit for social network data, where one individual's data consists of all the relationships (edges) that the individual contributes to the graph. Its much stronger privacy protection makes it harder to attain.
Existing work addresses subgraph counts \cite{BlockiBDS13, KasiviswanathanNRS13, ChenZ13, %ZhangCPSX15, % ZhangCPS15 is abotu edge privacy, not node privacy!
DingZB018,  LiuML20}; degree and triangle distributions \cite{RaskhodnikovaS16,Dayll16,LiuML20}; parameter estimation in stochastic block models \cite{BorgsCS15, BorgsCSZ18, SealfonU21}; training of graph neural networks \cite{DaigavaneMSTAJ21}; and generating synthetic graphs \cite{ZhangNF20}. See \cite{RaskhodnikovaS16-E, MuellerUPRK22} for surveys on node-private algorithms and \cite{XiaCKHT021} for implementations of some of the algorithms.
\asnote{There are also papers that are not node-private because they assume (instead of enforcing) a sensitivity or degree bound: \cite{DingZB018} (see Proof of Lemma 1),\cite{SongLMVC18}, others?} 

Existing works on node-privacy that prove rigorous accuracy statements generally take one of two approaches. 
Some works assume that the input is generated by a distribution from a specific family and analyze how well their algorithm approximates the parameters of that distribution; that approach does not fit our setting of arbitrary fixed inputs.
%  proximity to the parameters of a true probability model (assumed to be from a specific family) that generated the data; that approach does not fit our setting.
%
Other works seek to formulate instance-dependent guarantees based on specific features of a graph, notably the maximum degree~\cite{KasiviswanathanNRS13,BlockiBDS13,RaskhodnikovaS16,Dayll16}. Chen and Zhou~\cite{ChenZ13} give a finer-grained analysis, showing that the accuracy of their algorithms for subgraph counts (and implicitly those of \cite{KasiviswanathanNRS13}) relates to the down-sensitivity of the input; that approach was subsequently refined and strengthened \cite{RaskhodnikovaS16,RaskhodnikovaS16-E}.

Our work goes further.\srnote{Is this a good place for this? Also, should we say which quantities are supermodular or have other simple structure described here?} We identify a feature of the graph---$\Delta^*$---that bounds the error of our algorithm and show that $\Delta^*$ can be bounded above in terms of the down-sensitivity of $\fsf$ (and by extension that of $\fcc$). Implementing this approach requires new ideas and combinatorial results, since $\fsf$ has a significantly different structure from that of subgraph counts: it is not a sum of quantities computable from the local view of each node, and it is not supermodular. Consequently, the connection to down-sensitivity is more subtle. 
%, which leads us to results on combinatorial graph properties that may be of separate interest.

%\srnote{I moved this paragraph here; it needs to be integrated.}\sr{Previous works on node-privacy have focused on degree distribution and subgraph counting, e.g., \cite{KasiviswanathanNRS13, RaskhodnikovaS16, ChenZ13, DayLL16}\asnote{Also block model parameter estimation, graph NNs, ...?}. Such statistics can be computed exactly by looking only at the immediate neighborhood of each node. Thus, the number of connected components has a significantly different structure from previously studied statistics, which leads us to results on combinatorial graph properties that may be of separate interest. Namely, we show that if a graph has no large induced stars, then it has a low degree spanning forest.}

%\asnote{We discuss RS and GEM elsewhere, so not sure if we need anything here.}Our algorithm relies on the Generalized Exponential Mechanism of Raskhodnikova and Smith \cite{RaskhodnikovaS16}, which we use to 
%%Given a set of functions of bounded sensitivity, thought of as quality scores on the dataset, the Generalized Exponential Mechanism privately outputs a score function whose value on the dataset approximates that of the function with the best quality score.  We use their mechanism to 
%select, from our family of extensions, the one that leads to  (approximately) lowest error on the input graph. 

\section{Preliminaries} \label{sec:prelims}

The most basic private mechanism for releasing a statistic $f$ returns the value of $f$ with additive noise scaled according to the global sensitivity of $f$. The noise follows a Laplace distribution. The Laplace distribution with mean $0$ and standard deviation $\sqrt{2}b$, denoted by $\mathrm{Lap}(b)$, has probability density $h(z) = \frac{e^{-|z|/b}}{2b}$. 

\begin{definition}[Global sensitivity \cite{DworkMNS16}]\label{def:GS} 
Given a function $f \colon \cG \to \R$, its global sensitivity, $GS_f$, is defined as 
\begin{align*}
    GS_f = \max_{\text{neighbors $G,G'$}} |f(G) - f(G')|.
\end{align*}
\end{definition}
\noindent Unless specified otherwise, we use $GS_f$ w.r.t.\ node-neighbors.

\begin{theorem}[Laplace Mechanism \cite{DworkMNS16}] 
\label{thm:laplace}
The algorithm $\cA$ that, given a graph $G$, outputs $\cA(G) = f(G) + \mathrm{Lap}(GS_f/\eps)$ is $\eps$-node-private.
\end{theorem}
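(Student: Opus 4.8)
The plan is to verify the definition of $\eps$-node-privacy directly, by comparing the output densities of $\cA$ on two arbitrary node-neighboring graphs; this is the standard privacy analysis of the Laplace mechanism, specialized to the node-neighbor relation. Fix node-neighbors $G$ and $G'$. Since $\cA$ perturbs a real number by continuous noise, the distribution of $\cA(H)$ has a density on $\R$ for every graph $H$; write $p_H$ for this density. Unrolling the definition of $\mathrm{Lap}(GS_f/\eps)$ — and assuming $GS_f>0$, since if $GS_f=0$ then $f$ is constant across all neighbors, so $\cA$ has a fixed output distribution and privacy is trivial — we get
\[
p_H(t) = \frac{\eps}{2\,GS_f}\,\exp\!\left(-\frac{\eps}{GS_f}\,\bigl|t - f(H)\bigr|\right)\qquad\text{for all }t\in\R.
\]

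First I would bound the pointwise likelihood ratio at an arbitrary $t\in\R$:
\[
\frac{p_G(t)}{p_{G'}(t)} = \exp\!\left(\frac{\eps}{GS_f}\Bigl(\bigl|t-f(G')\bigr| - \bigl|t-f(G)\bigr|\Bigr)\right) \le \exp\!\left(\frac{\eps}{GS_f}\,\bigl|f(G)-f(G')\bigr|\right),
\]
where the inequality is the triangle inequality. Because $G$ and $G'$ are node-neighbors, \Def{GS} gives $|f(G)-f(G')|\le GS_f$, so the ratio is at most $e^{\eps}$; that is, $p_G(t)\le e^{\eps}\,p_{G'}(t)$ for every $t$.

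Then I would lift this pointwise bound to events: for any (measurable) event $S$ in the output space of $\cA$,
\[
\Pr[\cA(G)\in S] = \int_S p_G(t)\,dt \le e^{\eps}\int_S p_{G'}(t)\,dt = e^{\eps}\,\Pr[\cA(G')\in S],
\]
which is exactly the required inequality. There is no real obstacle in this argument; the only points needing minor care are the degenerate case $GS_f=0$ and the routine measure-theoretic step of integrating the density inequality over an arbitrary event. The whole proof rests on just two facts — the triangle inequality and the definition of global sensitivity with respect to node-neighbors — so the only genuine bookkeeping is writing the Laplace density with the correct scale parameter $b = GS_f/\eps$.
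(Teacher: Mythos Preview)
Your argument is correct and is exactly the standard proof of the Laplace mechanism specialized to node-neighbors. Note, however, that the paper does not give its own proof of \Thm{laplace}: it is stated in the preliminaries as a cited result from \cite{DworkMNS16}, so there is no paper proof to compare against.
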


\begin{lemma}[Tail of Laplace random variable] \label{lem:tail_laplace}
If $X \sim \mathrm{Lap}(b)$, then $\Pr[|X| \geq t \cdot b] = e^{-t}$.
\end{lemma}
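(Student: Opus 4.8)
The plan is to evaluate the tail probability by integrating the Laplace density directly. Recall that $X \sim \mathrm{Lap}(b)$ has density $h(z) = \frac{e^{-|z|/b}}{2b}$ on $\R$ (with $b>0$), and that this density is symmetric about $0$. The first step is therefore to split the event $\{|X| \geq tb\}$ into its two symmetric halves and write
\begin{align*}
    \Pr[|X| \geq t \cdot b] \;=\; 2\,\Pr[X \geq t \cdot b] \;=\; 2 \int_{tb}^{\infty} \frac{e^{-z/b}}{2b}\, dz \;=\; \frac{1}{b}\int_{tb}^{\infty} e^{-z/b}\, dz,
\end{align*}
where on the interval $[tb,\infty)$ we have used $|z| = z$; this is the only place the implicit assumption $t \geq 0$ enters.

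Next I would evaluate the elementary integral. An antiderivative of $e^{-z/b}$ is $-b\,e^{-z/b}$, so $\int_{tb}^{\infty} e^{-z/b}\, dz = \big[-b\,e^{-z/b}\big]_{tb}^{\infty} = b\,e^{-t}$. Substituting back gives $\Pr[|X| \geq t\cdot b] = \frac{1}{b}\cdot b\,e^{-t} = e^{-t}$, which is the claim.

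I do not expect any real obstacle here: the computation is a one-line integral. The two points worth stating carefully are the factor of $2$ coming from the symmetry of the density (the tails $z \geq tb$ and $z \leq -tb$ contribute equally) and the bookkeeping of the normalizing constant $\frac{1}{2b}$, which must cancel against the $b$ from the antiderivative. An equivalent route, if one prefers to avoid the explicit integral, is to note that the same density computation shows $|X|$ is exponentially distributed with rate $1/b$, so that $\Pr[|X| \geq tb] = e^{-(tb)\cdot(1/b)} = e^{-t}$ follows from the standard exponential tail formula. Either way, the identity is meaningful only for $t \geq 0$, which is the regime in which it is applied throughout the paper.
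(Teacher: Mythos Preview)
Your proof is correct; this is the standard computation and there is nothing to add. The paper itself states this lemma without proof, treating it as a well-known fact about the Laplace distribution, so there is no alternative approach to compare against.
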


Differential privacy is preserved under post-processing. Additionally, the outputs of multiple private algorithms can be combined to obtain an algorithm that has privacy protection linear in
the number of composed algorithms. 

\begin{lemma}[Composition and post-processing \cite{DworkMNS16,DworkKMMN06}]
%Sofya: Papers are usually cited in chronological order, confusingly, of the initial version.
\label{lem:composition}
If an algorithm $\cA$ runs 
$\eps$-node-private algorithms $\cA_1,...,\cA_t$
and applies a randomized algorithm $g$ to the outputs, then $\cA(G) = g(\cA_1(G), \dots, \cA_t(G))$ is $(t\eps)$-node-private.
\end{lemma}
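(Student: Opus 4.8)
The plan is to prove the two standard closure properties behind this lemma separately and then chain them. Write $\mathcal{B}(G) = (\cA_1(G), \dots, \cA_t(G))$ for the algorithm that simply releases the whole tuple of outputs. I would first show $\mathcal{B}$ is $(t\eps)$-node-private, and then that post-processing $\mathcal{B}(G)$ by the randomized map $g$ cannot weaken this guarantee; since $\cA(G) = g(\mathcal{B}(G))$, the lemma follows by combining the two.

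For the first step, fix node-neighboring graphs $G, G'$ and induct on $t$, the base case $t = 1$ being the hypothesis. For the inductive step, split an output of $\mathcal{B}$ as $(y_{<t}, y_t)$, the tuple from $\cA_1, \dots, \cA_{t-1}$ and the output of $\cA_t$. In the non-adaptive case (the one used in our algorithm, where the $\cA_i$ have independent internal randomness), the joint density factors as a product of the $t$ per-algorithm densities, so the likelihood ratio between $G$ and $G'$ is the product of $t$ ratios, each in $[e^{-\eps}, e^{\eps}]$, which gives a bound of $e^{t\eps}$ on every output event. More generally, if $\cA_t$ is chosen adaptively from $y_{<t}$, then conditioned on $y_{<t}$ the law of $y_t$ is still the output of an $\eps$-node-private mechanism, so its conditional mass changes by a factor at most $e^\eps$; combined with the $e^{(t-1)\eps}$ bound on the marginal of $y_{<t}$ from the inductive hypothesis, writing the probability of an arbitrary output event as the integral of the product of conditional masses and bounding the integrand pointwise yields the $(t\eps)$ bound.

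For the post-processing step, let $\mu_G, \mu_{G'}$ be the output distributions of $\mathcal{B}$, so that $\mu_G(T) \le e^{t\eps}\mu_{G'}(T)$ for every measurable $T$. Represent $g$ via an independent random seed $r$ so that $g(y)$ is distributed as $g_r(y)$ with $g_r$ a deterministic measurable map given $r$. For fixed $r$ and any output event $S$, the preimage $T_r = g_r^{-1}(S)$ is measurable, and
\begin{align*}
\Pr[g_r(\mathcal{B}(G)) \in S] = \mu_G(T_r) \le e^{t\eps}\mu_{G'}(T_r) = e^{t\eps}\Pr[g_r(\mathcal{B}(G')) \in S].
\end{align*}
Averaging over $r$ (Fubini) gives $\Pr[\cA(G) \in S] \le e^{t\eps}\Pr[\cA(G') \in S]$, i.e., $\cA$ is $(t\eps)$-node-private.

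The only genuinely delicate point — the main obstacle — is the adaptive case of the first step: passing from a pointwise bound on conditional densities (or masses) to a multiplicative bound on arbitrary events is a routine but slightly technical measure-theoretic argument, cleanest when phrased through integrals of conditional densities rather than through events directly. If one only needs non-adaptive composition, which suffices for this paper, the argument collapses to multiplying $t$ likelihood ratios and is entirely elementary; the post-processing half is elementary in either case.
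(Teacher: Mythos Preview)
Your argument is correct and is the standard proof of basic composition and post-processing for differential privacy. Note, however, that the paper does not give its own proof of this lemma: it is stated as a known result and cited to \cite{DworkMNS16,DworkKMMN06}, so there is no in-paper proof to compare against. Your write-up matches the classical arguments in those references.
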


\section{A Lipschitz Extension for the Size of the Spanning Forest}\label{sec:sf}

In this section, we prove \Thm{sf} by giving a polynomial-time $\eps$-node-private algorithm for approximating the size of the spanning forest of a graph $G$. 

%Part of the analysis is deferred to \Appendix{sf-thm-proofs}.

% Its error is at most
% $\Delta^* \cdot O(\ln \ln n)$, where $\Delta^*$ is the smallest integer %$\Delta$ 
% such that $G$ has a \sr{$\Delta^*$}-forest.\srnote{It looks like we are restating the theorem here, but less accurately than already stated ($\eps$ is omitted). Should we drop this sentence and instead possibly repeat the statement of the theorem in the long version?} 

We start by defining a family of Lipschitz extensions for $\fsf$ that are computable in polynomial time. Recall from \Sec{results-Lipschitz-ext} that our construction is based on the $\Delta$-bounded forest polytope of the input graph. 

\begin{definition}[Lipschitz extension for the size of the spanning forest] \label{def:sf}
    Given a vector $x \in \R^{E}$ and an edge $e \in E$, let $x(e)$ denote the value of the vector $x$ at edge $e$. For a subset $S \subseteq V$,  let $E[S]$ be the set of edges in the subgraph of $G$ induced by $S$. Let $\delta(v)$ denote the set of edges incident to $v$. For a set of edges $F \subseteq E$, denote by $x(F)$ the value $\sum_{e \in F} x(e)$.  Given $\Delta>0$, the $\Delta$-bounded forest polytope of $G$, denoted  $\cP_\Delta(G)$, consists of vectors $x \in \R^{E}$ that satisfy the following constraints:
\begin{align}
    x(e) &\geq 0  &\forall \: e \in E; \label{eq:zero}\\
    x(E[S]) &\leq |S| - 1  &\forall \: S \subseteq V, \:|S| \geq 2; \label{eq:tree} \\
    x(\delta(v)) &\leq \Delta  &\forall\: v \in V. \label{eq:degree}
\end{align}
The Lipschitz extension at $G$ with parameter $\Delta$ is defined as $f_\Delta(G) = \max_{x \in \cP_\Delta(G)} x(E)$. 
\end{definition}

Each $f_{\Delta}$ can be used to privately approximate $\fsf$ by outputting a private approximation of $f_{\Delta}$ via the Laplace mechanism. The Laplace noise is scaled according to $\Delta$, which is an upper bound on the global sensitivity of $f_{\Delta}$. We use the Generalized Exponential Mechanism to privately select $\hat{\Delta} \in [1, n]$, such that $f_{\hat{\Delta}}(G)$ has the lowest expected error when used to approximate $\fsf(G)$ via the Laplace mechanism. Finally, we output the private approximation of $f_{\hat{\Delta}}$ via the Laplace mechanism. This procedure is described in \Alg{sf}.
\begin{algorithm}[H]
  \setstretch{1.1}
  \caption{Node-Private Size of Spanning Forest} \label{alg:sf}
  \begin{algorithmic}[1]
    \Require{Graph $G$, privacy parameter $\eps > 0$, failure probability $\beta \in (0,1)$. }
    \State Run \Alg{gem} (based on GEM \cite{RaskhodnikovaS16}) with parameters $\eps/2$ and $\beta$ and access to \Alg{eval},  to obtain $\hat{\Delta}$. \label{step:gem}
    \State $f_{\hat{\Delta}}(G) \leftarrow \mathtt{EvalLipschitzExtension}(G, \hat{\Delta})$. 
    \State Return $f_{\hat{\Delta}}(G) + Z $ where $Z \sim \mathrm{Lap}(\frac{2\hat{\Delta}}{\eps})$. \label{step:laplace}
  \end{algorithmic}
\end{algorithm}

\begin{algorithm}[H]
  \setstretch{1.1}
  \caption{$\mathtt{EvalLipschitzExtension}$} \label{alg:eval}
  \begin{algorithmic}[1]
    \Require{Graph $G$, Lipschitz parameter $\Delta \in [1, n]$.}
    \State Solve the linear program from \Def{sf} to obtain $f_{\Delta}(G) \leftarrow  \max_{x \in \cP_\Delta(G)} x(E)$.
    \State Return $f_{\Delta}(G)$.
  \end{algorithmic}
\end{algorithm}

\Lem{extension} summarizes the properties of our family of Lipschitz extensions that allow us to use the Generalized Exponential Mechanism with this family.

\begin{definition}[Monotone in $\Delta$, Lipschitz Underestimates] \label{defn:extension}
Let $h \colon \cG \to \R$. The functions $\{ h_{\Delta} \}_{\Delta \in [1, \Delta_{\max}]}$ are a family of {\em monotone in $\Delta$, Lipschitz underestimates} for $h$ if: 
\begin{enumerate}
    \item \emph{\textsc{(Underestimation)}} $h_\Delta(G) \leq h(G)$ for all $\Delta \in [1, \Delta_{\max}]$ and all $G$.
    
    \item \emph{\textsc{(Monotonicity)}} $h_{\Delta_1}(G) \leq h_{\Delta_2}(G)$ for all $\Delta_1 < \Delta_2$ and all $G$. 
    
    \item \emph{\textsc{(Lipschitzness)}} $h_\Delta$ is $\Delta$-Lipschitz for all $\Delta \in [1, \Delta_{\max}]$. 
\end{enumerate}
\end{definition}

\begin{lemma}[Properties of the Lipschitz extension]
\label{lem:extension}
The Lipschitz extensions $\{f_{\Delta}\}_{\Delta \in [1, n]}$ in \Def{sf} are a family of monotone in $\Delta$, Lipschitz underestimates for $\fsf$. Moreover, for all $\Delta \in [1,n]$ and all $G$, the following hold.
\begin{enumerate}
    \item\label{item:delta-forest} If $G$ has a spanning $\Delta$-forest, then $f_\Delta(G) = f_{\mathrm{sf}}(G)$.
    \item \label{item:poly-time}$f_\Delta$ is computable in polynomial time.
\end{enumerate}
\end{lemma}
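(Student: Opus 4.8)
The plan is to read off all five claims directly from the linear program $f_\Delta(G)=\max_{x\in\cP_\Delta(G)}x(E)$, which is always feasible (the zero vector lies in $\cP_\Delta(G)$) and bounded, so an optimum is attained. For \textsc{Underestimation}, take any feasible $x$ and decompose the edge set along the connected components $C_1,\dots,C_k$ of $G$: since $E=\bigsqcup_i E[C_i]$ and each inequality \Eqn{tree} (or the trivial bound when $|C_i|=1$) gives $x(E[C_i])\le|C_i|-1$, summing yields $x(E)\le\sum_i(|C_i|-1)=|V|-k=\fsf(G)$. For \textsc{Monotonicity in $\Delta$}, enlarging $\Delta$ only relaxes \Eqn{degree}, so $\cP_{\Delta_1}(G)\subseteq\cP_{\Delta_2}(G)$ when $\Delta_1<\Delta_2$ and the maximum cannot decrease.

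The substance of the lemma is \textsc{Lipschitzness}, and here is the argument I would give. By the triangle inequality it suffices to prove $|f_\Delta(G)-f_\Delta(G')|\le\Delta$ whenever $G'$ is a node-neighbor of $G$; without loss of generality $G\preceq G'$, where $G'$ adds one vertex $v$ with a set $\delta'(v)$ of incident edges, so $E(G')=E(G)\sqcup\delta'(v)$. For $f_\Delta(G')\ge f_\Delta(G)$, take an optimal solution $x$ for $G$ and extend it to $E(G')$ by assigning $0$ to every edge of $\delta'(v)$; all constraints still hold---for \Eqn{tree}, a set $S\subseteq V(G')$ contributes only through $E(G)[S\cap V(G)]$, and the added vertex has degree $0$---so this extension is feasible for $G'$ with unchanged objective. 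For $f_\Delta(G')-f_\Delta(G)\le\Delta$, take an optimal solution $x^*$ for $G'$ and restrict it to $E(G)$: since $E(G)[S]=E(G')[S]$ for every $S\subseteq V(G)$ and every degree can only drop, the restriction is feasible for $G$, and its objective value is $x^*(E(G'))-x^*(\delta'(v))\ge f_\Delta(G')-\Delta$, where the last step is exactly \Eqn{degree} at $v$ in $G'$. Chaining $|f_\Delta(G_{i-1})-f_\Delta(G_i)|\le\Delta$ along a shortest sequence of node-neighbors connecting arbitrary $G$ and $G'$ gives $\Delta$-Lipschitzness on all of $\cG$.

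For item 1, if $F$ is a spanning $\Delta$-forest of $G$ then $\mathbbm{1}_F\in\cP_\Delta(G)$: \Eqn{zero} is trivial, \Eqn{tree} holds because $F\cap E[S]$ is acyclic on at most $|S|$ vertices and hence has at most $|S|-1$ edges, and \Eqn{degree} holds because $\deg_F(v)\le\Delta$ for all $v$. Hence $f_\Delta(G)\ge|F|=\fsf(G)$, and combined with \textsc{Underestimation} this gives $f_\Delta(G)=\fsf(G)$. For item 2, the constraints \Eqn{zero} and \Eqn{degree} number only $|E|+|V|$, while the exponentially many constraints \Eqn{tree} admit a polynomial-time separation oracle: given $\hat x\ge0$, deciding whether some $S$ with $|S|\ge2$ violates $\hat x(E[S])\le|S|-1$ is the minimization of the submodular function $S\mapsto|S|-1-\hat x(E[S])$ over sets of size at least $2$, which reduces to polynomially many submodular minimizations (forcing each pair of vertices into $S$) or, alternatively, to a sequence of minimum-cut computations. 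Since $\cP_\Delta(G)\subseteq[0,1]^E$ is bounded and rational with polynomial bit-complexity, the ellipsoid method together with the equivalence of separation and optimization computes $f_\Delta(G)$ exactly in polynomial time.

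The main obstacle is item 2: setting up and justifying the separation oracle for the forest constraints is the only point where nontrivial external tools (submodular minimization / minimum cuts and the separation--optimization equivalence) are needed. The Lipschitz argument, though it is the conceptual core of the lemma, is short once one tracks the decomposition $E(G')=E(G)\sqcup\delta'(v)$ carefully and applies the degree bound \Eqn{degree} at the newly added vertex; underestimation and $\Delta$-monotonicity are essentially immediate.
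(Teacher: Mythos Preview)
Your proposal is correct and follows essentially the same approach as the paper: the underestimation via summing the forest constraints over connected components, monotonicity via polytope inclusion, Lipschitzness via extending by zeros in one direction and restricting (losing at most $x^*(\delta'(v))\le\Delta$) in the other, and item~1 via feasibility of the indicator of a spanning $\Delta$-forest all match the paper's argument line for line. The only cosmetic difference is in item~2, where the paper simply cites Padberg--Wolsey for the separation oracle for \Eqn{tree}, whereas you sketch its construction via submodular minimization or min-cuts; this is the same content, just unpacked.
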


\begin{remark} The Lipschitz constant $\Delta$ for $f_{\Delta}$ is tight. To see this, consider the graph $G$ with $\Delta$ isolated vertices and the node-neighboring graph $G'$ obtained from $G$ by adding one vertex with edges to all vertices of $G$. Then $f_\Delta(G) = 0$ and $f_{\Delta}(G') = \Delta$. 
\end{remark}

\begin{proof}[Proof of \Lem{extension}]
We first show that $f_{\Delta}(G) \leq \fsf(G)$ for all $\Delta$ and $G$. Let $S_1, \dots, S_k \subseteq V(G)$ be the vertex sets of the connected components of $G$. For all $x \in \cP_\Delta(G)$ and for all $i \in [k],$ we have $x(E[S_i]) \leq |S_i| - 1$. Therefore, $x(E) \leq \sum_{i \in [k]}|S_i| - f_{\mathrm{cc}}(G) = n - f_{\mathrm{cc}}(G) = f_{\mathrm{sf}}(G)$. 

Monotonicity in $\Delta$ follows  from the fact that every vector $x \in \cP_\Delta(G)$ also satisfies $x \in \cP_{\Delta'}(G)$ for $\Delta' \geq \Delta$.

Next, we show that $f_{\Delta}$ is $\Delta$-Lipschitz. Let $G$ be obtained from $G'= (V', E')$ by removing a node $v$ and all its adjacent edges. Then $f_\Delta(G) \leq f_\Delta(G')$, since for each $x \in \cP_\Delta(G)$, there is a vector $x' \in \cP_\Delta(G')$ with $x(E) = x'(E')$. Namely, $x'$ has the same entries as $x$ in $E$ and has value $0$ in the entries $E'\setminus E$. 

 Let $x' \in \cP_\Delta(G')$ be such that $x(E')= f_\Delta(G')$. Consider the vector $x$ obtained from $x'$ by omitting all the entries pertaining to edges in $\delta(v)$. Then $x'(E') - x(E) \leq \Delta$, since $x'(\delta(v)) \leq \Delta$. Also, $x \in \cP_\Delta(G)$, and thus $x(E) \leq f_\Delta(G)$. We get 
\begin{align*}
   &|f_\Delta(G') - f_\Delta(G)| = f_\Delta(G') - f_\Delta(G) \\ 
   &= x'(E') - x(E) + x(E) - f_\Delta(G) \leq \Delta + 0 = \Delta.
\end{align*}
This  concludes the proof that $f_{\Delta}$ is $\Delta$-Lipschitz. 

To prove Item \ref{item:delta-forest}, let $F$ be the edges of a spanning $\Delta$-forest of $G$, i.e., $|\delta(v) \cap F|\leq \Delta$ for all vertices $v$ of $G$. Let $x_F$ be the vector whose values are $1$ for all $e \in F$ and $0$ otherwise.  Then  $x_F \in \mathcal{P}_\Delta(G)$ and $x_F(E) = f_{\mathrm{sf}}(G)$. Thus $f_{\Delta}(G) \geq x_F(E) = \fsf(G)$. By the underestimation property, we obtain $f_\Delta(G) = f_{\mathrm{sf}}(G)$.

Finally, we prove Item \ref{item:poly-time}. To compute $f_{\Delta}(G)$, we need to solve a linear program. Padberg and Wolsey~\cite{PadbergW83} show the existence of a polynomial-time separation oracle for condition \Eqn{tree} of the linear program that involves an exponential number of constraints. Conditions \Eqn{zero} and \Eqn{degree} can be clearly checked in polynomial-time. Thus, there exists a polynomial time algorithm for solving the linear program in \Def{sf} and computing $f_{\Delta}(G)$. 
\end{proof}

\subsection{From Extensions to the Main Algorithm}

To formulate and analyze our main algorithm (and prove \Thm{sf}),
we measure the ``quality'' of each extension $f_{\Delta}$
%\srnote{Do we need a factor of 2 adjustment in the definition of the error to make it correspond to what algorithm needs to minimize? I guess it only affects the constants.} 
as an approximation for $\fsf(G)$ by its expected error when outputting $f_{\Delta}(G)$ privately via the Laplace mechanism, as in \Alg{sf}. We would like to select $\hat{\Delta} \in [1, n]$, so that $f_{\hat{\Delta}}$ from our family of Lipschitz extensions minimizes the expected error. 
%Publishing the true minimizer $\hat{\Delta}$ can reveal information about the input graph $G$. 
%Therefore, the choice of $\hat{\Delta}$ should also be private and thus an approximation to the true minimizer. 
The Generalized Exponential Mechanism selects an approximation to the true minimizer $\hat{\Delta}$ in a private way.
Consider any function $h\colon \cG \to \R$ and a family $\{ h_{\Delta} \}_{\Delta \in [1, \Delta_{\max}]}$ of monotone in $\Delta$, Lipschitz underestimates for $h$. The quality of each function $h_{\Delta}$ is measured by its approximation error, defined as
\begin{align}
    err_h(\Delta, G) := |h_{\Delta}(G) - h(G)| + \Delta/\eps . \label{eq:error}
\end{align}
 Note that $err_h(\Delta, G)$ is an upper bound for the expected error $\E[|h_{\Delta}(G) + \mathrm{Lap}(\Delta/\eps) - h(G)|]$ by the triangle inequality and the fact that $\E[|\mathrm{Lap}(b)|] = b$. The Generalized Exponential Mechanism outputs $\hat{\Delta} \in [1, \Delta_{\max}]$ that approximately minimizes the quantity $err_h(\Delta, G)$.

\begin{theorem}[GEM \cite{RaskhodnikovaS16}]
%\cite[Proposition 5.3]{RaskhodnikovaS16} 
\label{thm:gem}
Fix $\eps >0$ and  $\beta \in (0, 1)$. % be constants. 
Let $\{ h_{\Delta} \}_{\Delta \in [1, \Delta_{\max}]}$ be a family of monotone in $\Delta$, Lipschitz underestimates for $h\colon \cG \to \R$.
Then there exists an $\eps$-node-private algorithm (\Alg{gem}) obtained from the Generalized Exponential Mechanism, that outputs a value $\hat{\Delta}$ such that for all $\Delta \in [1, \Delta_{\max}]$ and all $G$, with probability at least $1-\beta$, it holds
\begin{align*}
    err_h(\hat{\Delta}, G) \leq err_h(\Delta, G) \cdot O\Big(\ln \frac{\ln (\Delta_{\max})}{\beta}\Big). 
\end{align*}
Furthermore, the algorithm runs in polynomial time if all $h_{\Delta}$ are polynomial time computable. 
\end{theorem}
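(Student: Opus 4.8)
The plan is to obtain \Alg{gem} by instantiating the Generalized Exponential Mechanism (GEM) of \cite{RaskhodnikovaS16} on a geometrically spaced set of candidate Lipschitz parameters, and then to turn its native additive, $\ln$-type guarantee into the multiplicative, $\ln\ln$-type guarantee claimed here; the bulk of the argument invokes GEM as a black box, so the new content is the discretization and the conversion to a multiplicative bound. First I would reduce to a small candidate set: let $\mathcal{D} = \{1, 2, 4, \dots, 2^{\lceil \log_2 \Delta_{\max}\rceil}\} \cap [1,\Delta_{\max}]$ together with $\Delta_{\max}$ itself, so that $|\mathcal{D}| = O(\log\Delta_{\max})$. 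Using monotonicity in $\Delta$ and the underestimation property of $\{h_\Delta\}$, restricting attention to $\mathcal{D}$ loses only a factor of $2$: given $\Delta\in[1,\Delta_{\max}]$, pick $\Delta'\in\mathcal{D}$ with $\Delta\le\Delta'\le 2\Delta$; since $h_\Delta(G)\le h_{\Delta'}(G)\le h(G)$ we get $|h_{\Delta'}(G)-h(G)|\le|h_\Delta(G)-h(G)|$, hence $err_h(\Delta',G)\le 2\,err_h(\Delta,G)$.

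The next step is to run GEM over $\mathcal{D}$. The key observation is that $err_h(\Delta,G)=|h_\Delta(G)-h(G)|+\Delta/\eps=\bigl(h(G)-h_\Delta(G)\bigr)+\Delta/\eps$ differs from the surrogate $\tilde q_\Delta(G):=-h_\Delta(G)+\Delta/\eps$ only by the additive term $h(G)$, which is common to all candidates and therefore does not affect which candidate GEM selects; moreover $\tilde q_\Delta$ has node-sensitivity at most $\Delta$ because $h_\Delta$ is $\Delta$-Lipschitz. Thus GEM applies with candidate set $\mathcal{D}$, score functions $\{\tilde q_\Delta\}$, and per-candidate sensitivity bounds $\Delta$. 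Its privacy guarantee makes \Alg{gem} $\eps$-node-private, and its utility guarantee — which bounds the selected candidate's score in terms of each competitor's own sensitivity — yields, with probability $1-\beta$, simultaneously for all $\Delta\in\mathcal{D}$, $\tilde q_{\hat{\Delta}}(G)\le\tilde q_\Delta(G)+O\bigl(\tfrac{\Delta}{\eps}\ln\tfrac{|\mathcal D|}{\beta}\bigr)$; adding back $h(G)$ gives $err_h(\hat{\Delta},G)\le err_h(\Delta,G)+O\bigl(\tfrac{\Delta}{\eps}\ln\tfrac{|\mathcal D|}{\beta}\bigr)$ for all $\Delta\in\mathcal{D}$.

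It remains to make the bound multiplicative and extend it to all $\Delta$. Since $|h_\Delta(G)-h(G)|\ge 0$, we have $\Delta/\eps\le err_h(\Delta,G)$, so the additive slack is at most $err_h(\Delta,G)\cdot O(\ln\tfrac{|\mathcal D|}{\beta})$, and with $|\mathcal D|=O(\log\Delta_{\max})$ this equals $err_h(\Delta,G)\cdot O(\ln\tfrac{\ln\Delta_{\max}}{\beta})$ for all $\Delta\in\mathcal D$. Composing with the discretization step (a further factor of $2$) extends the bound to every $\Delta\in[1,\Delta_{\max}]$, which is the claimed inequality. The running time is polynomial because \Alg{gem} samples among $|\mathcal{D}|=O(\log\Delta_{\max})$ candidates, each requiring one evaluation of $h_\Delta(G)$ (polynomial time by hypothesis) plus elementary arithmetic.

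I expect the main obstacle to be the interface with GEM's native utility guarantee. A plain exponential mechanism would have to use the worst-case sensitivity $\Delta_{\max}$ for every candidate and would be hopelessly lossy; the whole point is that GEM gives a bound in terms of the \emph{competitor's} sensitivity $\Delta$, so one must (i) verify that the sensitive common term $h(G)$ may be dropped before feeding scores to GEM (which is what reduces the effective sensitivity of candidate $\Delta$ from that of $h$ down to $\Delta$), and (ii) use the structural inequality $err_h(\Delta,G)\ge\Delta/\eps$ — which relies on underestimation — to absorb the residual $\Delta/\eps$ into a clean multiplicative $O(\ln(\ln\Delta_{\max}/\beta))$ overhead. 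The remaining work is routine bookkeeping of constants and of the quantifier over $\Delta$ (the high-probability event is a single event of GEM, after which the bound holds for all $\Delta$ at once).
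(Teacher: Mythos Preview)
Your proposal is correct and matches the approach the paper takes: the paper does not give a self-contained proof of \Thm{gem} (it is cited from \cite{RaskhodnikovaS16}), but its \Alg{gem} in \Appendix{GEM} embodies exactly the ingredients you describe---the geometric grid $I=\{2^0,\dots,2^k\}$ of size $O(\log\Delta_{\max})$, the replacement of $q_i$ by the low-sensitivity surrogate $-h_i(G)+i/\eps$ (the paper makes this same observation in the footnote to \Step{define_q}), and the invocation of GEM's per-candidate-sensitivity guarantee. Your conversion from the additive $O\bigl(\tfrac{\Delta}{\eps}\ln\tfrac{|\mathcal D|}{\beta}\bigr)$ slack to the multiplicative $O\bigl(\ln\tfrac{\ln\Delta_{\max}}{\beta}\bigr)$ factor via $\Delta/\eps\le err_h(\Delta,G)$, together with the factor-$2$ rounding to the grid, is the intended bookkeeping that the paper leaves implicit.
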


% \begin{theorem}\label{thm:sf}
% Let $\eps, \beta \in (0, 1)$ be constants. There exists an $\eps$-DP algorithm $\cA$ that runs in polynomial time and for all $G$, with probability at least $1-\beta$, satisfies
% \begin{align*}
%     \E[|\cA(G) - f_{\mathrm{sf}}(G)|] \leq \frac{DS_{f_{\mathrm{sf}}}(G)+1}{\eps} \cdot O\Big(\ln \ln \frac{n}{\beta}\Big). 
% \end{align*}
% \end{theorem}

% In \Appendix{sf-thm-proofs} we use \Thm{gem} together with \Lem{extension} to complete the proof of \Thm{sf}. 

We  now   we use \Thm{gem} together with \Lem{extension} to complete the proof of \Thm{sf}. 

\begin{proof}[Proof of \Thm{sf}]
Let $\cA$ be \Alg{sf} run with failure probability 
$\beta =\frac{1}{\ln \ln n}$. We first show that $\cA$ is $\eps$-node-private. \Step{gem} of algorithm $\cA$ is $(\eps/2)$-node-private by \Thm{gem}. \Step{laplace} is also $(\eps/2)$-node-private by \Thm{laplace}. By composition (\Lem{composition}),  algorithm $\cA$ is $\eps$-node-private. 

We now bound the error of $\cA$. With probability at least $1-\beta/2,$ we have\srnote{It is confusing what always holds and what holds w.h.p.\ in the derivation.}
\begin{align*}
    |\cA(G) - \fsf(G)| &= |f_{\hat{\Delta}}(G) + Z - \fsf(G)| \\ 
    &\leq |f_{\hat{\Delta}}(G) - \fsf(G)| + |Z| \\
    &\leq |f_{\hat{\Delta}}(G)  - \fsf(G)| + \frac{2\hat{\Delta}}{\eps}\ln\Big(\frac{2}{\beta}\Big).
\end{align*}
The last inequality follows from \Lem{tail_laplace}. By the definition in \Eqn{error}, we have $|\cA(G) - f_{\mathrm{sf}}(G)| \leq  err_{\fsf}(\hat{\Delta}, G) \cdot 2\ln(2/\beta)$ with probability at least $1-\beta/2$.

Let $\Delta^*$ be the smallest value in $[1, n]$ such that $G$ has a spanning $\Delta^*$-forest. 
% By the triangle inequality, 
% \begin{align*}
%     err_{\fsf}(\Delta^*, G) &\leq |f_{\Delta^*}(G) - \fsf(G)| + \E[|\mathrm{Lap}(\Delta^*/\eps)|] \\
%     &= |f_{\Delta^*}(G) - \fsf(G)| + \frac{\Delta^*}{\eps}.
% \end{align*}
By Item \ref{item:delta-forest} of \Lem{extension},  $|f_{\Delta^*}(G) - \fsf(G)| = 0$. Therefore, $err_{\fsf}(\Delta^*, G) = \frac{\Delta^*}{\eps}$. By \Lem{extension}, the functions $\{f_{\Delta}\}_{\Delta \in [1,n]}$ are a family of monotone in $\Delta$, Lipschitz underestimates for $\fsf$. Applying \Thm{gem} with $\Delta_{\max} = n$, we have that with probability at least $1 - \beta$, 
\begin{align*}
    |\cA(G) - \fsf(G)| \leq \frac{\Delta^*}{\eps} \cdot O\Big(\ln  \Big(\frac{\ln n}{\beta}\Big)\cdot \ln\frac 1\beta\Big).
\end{align*}
Since $\beta = \frac{1}{\ln \ln n}$, we obtain the desired result. Since the Lipschitz extensions are computable in polynomial time (Item \ref{item:poly-time} of \Lem{extension}), algorithm $\cA$ runs in polynomial time. This concludes the proof. 
\end{proof}

%\subsection{Analyzing Our Lipschitz Extensions}\label{sec:lemma-extension}

%In this section, we prove \Lem{extension} on the properties of our Lipschitz extensions for the size of the spanning forest. 

\section{Down-Sensitivity of the Size of the Spanning Forest} \label{sec:ds}

In this section, we prove \Lem{trees} which establishes a connection between the down-sensitivity of $\fsf$ and the existence of a bounded-degree spanning forest, and \Lem{our-anchor-sets} which connects down-sensitivity to the anchor sets of our extension. \Lem{trees} follows from two combinatorial results on spanning forests: \Lems{induced-stars}{stars}. We start by proving \Lem{stars}, which connects induced stars to the existence of bounded-degree spanning forests and is the key step in the proof of \Lem{trees}. 
The proofs of \Lems{trees}{induced-stars} are deferred to \Sec{ds_sub}. We prove \Lem{our-anchor-sets} in \Sec{our-anchor-sets}

\begin{figure}
\centering
\includegraphics[width = 0.42\textwidth]{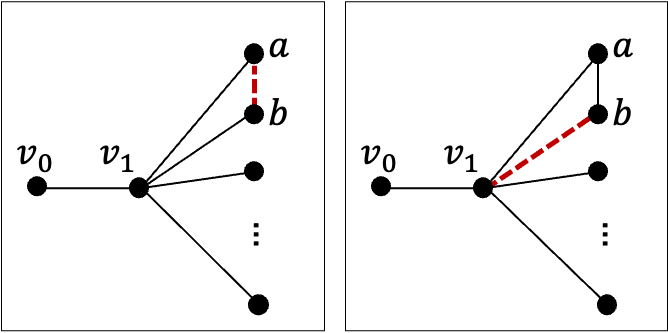}
\Description{The local repair operation.}
\caption{Before and after the local repair at vertex $v_1$. Black solid edges are
in the spanning forest. Dotted red edges are in the graph $G$, but not in the spanning forest.}
\label{fig:repair}
\end{figure}

%\Lem{mis} is proved in \Sec{mis}. We now use it to prove \Lem{trees}. 

We first give an overview of the main ideas of the proof of \Lem{stars}. 
The proof is by induction on the number of vertices in $G$. Suppose $G$ has no induced $\Delta$-stars, i.e., $\inducedstar(G) < \Delta$. 
Let $v_0$ be a vertex in $G$, which is not a cut vertex.
(A vertex is a {\em cut vertex} if its removal from the graph increases the number of connected components.)
Let $G'$ be the subgraph of $G$ induced by all vertices other than $v_0$. Then $s(G') \leq s(G) < \Delta$.
By the inductive hypothesis, $G'$ has a spanning forest $F$ of degree at most $\Delta$. Let $v_1$ be one of the neighbors of $v_0$ in $G$ and add the edge $(v_0, v_1)$ to $F$. Now $F$ is  a spanning forest of $G$, since $v_0$ is not a cut vertex.
%($v_0$ needs to be chosen so that it is not a cut vertex. A cut vertex increases the number of connected components when removed from a graph). 
However, the degree of $v_1$ in $F$ could now be $\Delta+1$. We %show how to 
modify $F$ to obtain a spanning forest of $G$ of degree at most $\Delta$ by performing a sequence of ``local repairs''. If $v_1$ has degree $\Delta+1$ in $F$, then  $v_1$ has two neighbors $a$ and $b$ in $F$, such that $a$ and $b$ are adjacent in $G$ (this is true since $G$ has no induced $\Delta$-stars). A local repair at $v_1$  replaces the edge $(v_1, a)$ in $F$ with the edge $(a, b)$. See \Fig{repair}. We show that this operation does not introduce a cycle, and $F$ is still a spanning forest of $G$. The degree of $v_1$ in $F$ is now $\Delta$, as desired, but the degree of $a$ could have increased from $\Delta$ to $\Delta +1$, thus calling for another local repair at $a$. At first glance, the local repair only pushes the problem around, rather than fixing it. However, with some care, we can show that the sequence of vertices $v_0, v_1, \dots$ where we perform the local repair forms a path in a spanning forest of $G$. Thus, the sequence of local repairs will eventually conclude, yielding a spanning forest of $G$ of degree at most $\Delta$. 

\begin{proof}[Proof of \Lem{stars}] 
Let $G$ be a graph on $n$ vertices. The proof is by induction on $n$. If $n = 1$, then $G$ has no induced $\Delta$-stars for all $\Delta > 0$. Graph $G$ also has a spanning $\Delta$-forest for all $\Delta > 0$. Thus the lemma holds. By the same reasoning, the lemma holds if $E(G) = \emptyset$. Suppose $E[G] \neq \emptyset$ and that the lemma holds for every graph with $n-1$ vertices. Then there exists some vertex $v_0 \in V(G)$ which is not isolated and not a cut vertex. E.g., consider a spanning forest of $G$ and let $v_0$ be one of the leaves of the spanning forest.  Let $G' = G \setminus \{v_0\}$. Then  $\inducedstar(G') \leq \inducedstar(G) < \Delta$. By the inductive hypothesis, there exists a spanning forest $F$ of $G'$ of degree at most $\Delta.$

% If  $n = 1$, then $DS_{f_{\mathrm{sf}}}(G) = 0$ and the spanning forest of $G$ is of degree at most $0$, thus the lemma holds for this case. Suppose the lemma holds for every graph with $n-1$ vertices. Let $\Delta = DS_{f_{\mathrm{sf}}}(G) + 1$. First, %\srnote{Why do we need to consider this case separately? Later, if $v_0$ is isolated, then $F$ is a spanning tree of $G$ right away, so this case is easy.}
% if $E(G) = \emptyset$, then $\Delta^* = 0$ and thus $\Delta^* \leq \Delta$. Now suppose $E(G)\neq\emptyset$. Then there exists some vertex $v_0 \in V(G)$ which is not isolated and not a cut vertex. E.g., consider a spanning forest of $G$ and let $v_0$ be one of the leaves of the spanning forest.  Let $G' = G \setminus \{v_0\}$. Then  $DS_{f_{\mathrm{sf}}}(G') \leq DS_{f_{\mathrm{sf}}}(G) = \Delta -1$. 
% By the inductive hypothesis, there exists a spanning forest $F$ of $G'$ of degree at most $\Delta.$

Let vertex $v_1$ be a neighbor of $v_0$ in $G$ (recall that vertex $v_0$ is not isolated).  Let $F_0$ be the subgraph of $G$ consisting of $F$ and the edge $(v_0, v_1)$. Since $v_0$ is not a cut vertex, then $F_0$ is a spanning forest of $G$. If all vertices in $F_0$ have degree at most $\Delta$, then we have found the desired spanning forest. Otherwise, we perform a sequence of local repairs, as outlined in \Alg{trees}. The output of \Alg{trees} is a sequence $\{(v_i, F_i)\}$ of vertices and subgraphs of $G$. The subgraph $F_i$ is  obtained after a local repair at vertex $v_i$ (recall \Fig{repair}). In \Clm{trees}, we show that all subgraphs $F_i$ are spanning forests of $G$. Additionally, the vertices $v_0, v_1, \dots$ are distinct and thus the sequence $\{(v_i, F_i)\}$ is finite (it must end once all vertices of $G$ are output). By the description of \Alg{trees}, the final $F_i$ it outputs is a spanning forest of $G$ of degree at most $\Delta$, as desired. Applying \Clm{trees} thus concludes the proof.
\end{proof}

% However, the degree of vertex $u$ may have increased from $\Delta$ to $\Delta+1$ between $F$ and $F_0$. To fix this issue, \Alg{trees} performs a series of edge-exchange operations that transform $F_0$ into a spanning forest of degree at most $\Delta$. In the first step, \Alg{trees} finds two neighbors $a, b$ of $u$ in $F_0$, such that $(a, b) \in E(G)$. The existence of $a, b$ is guaranteed by the bound on the down-sensitivity of $G$ and \Lem{mis}. Then, \Alg{trees} removes the edge $(u, b)$ from $F_0$ and replaces it with the edge $(a, b)$. We show that that this edge-exchange operation yields a new spanning forest of $G$. The vertex $u$ has degree $\Delta$ in the new spanning forest, but the degree of $a$ has increased by 1, and may be $\Delta+1$. \Alg{trees} proceeds to fix the issue at $a$ with the same edge-exchange operation. We show that such a process eventually concludes, thus yielding a spanning forest of degree at most $\Delta$. 

\begin{algorithm}
  \setstretch{1.1}
  \caption{Repair a $(\Delta+1)$-forest to get a spanning $\Delta$-forest} \label{alg:trees}
  \begin{algorithmic}[1]
    \Require{Graph $G$, a vertex $v_0$, and a forest $F_0$, defined above}    
    \Let{$i$}{1} 
    \While{max degree in $F_{i-1}$ is greater than $\Delta$}
         \State Let $v_i$ be a vertex with degree at least $\Delta+1$ in $F_{i-1}$. 
         \State Let $N$ be $\Delta$ neighbors of $v_{i}$ in $F_{i-1}$, with $v_{i-1} \notin N$.
         \State Let $a_i, b_i \in N$ such that $(a_i, b_i) \in E(G)$. \label{step:neighbors} \Comment{Since $\inducedstar(G) < \Delta$, such $a_i,b_i$ exist}.
         \State Let $F_i = (F_{i-1} \setminus \{(v_{i}, b_i)\}) \cup  \{(a_i, b_i)\}$.
         \State Output $(v_{i}, F_i)$. 
         \Let{$i$}{$i+1$} 
    \EndWhile
  \end{algorithmic}
\end{algorithm}

\begin{claim}\label{clm:trees}
For all $i \geq 0$ and pairs $(v_i, F_i)$ output by \Alg{trees}, the following hold:
\begin{enumerate}[(a)]
    \item $F_i$ is a spanning forest of $G$. 
    \item All vertices in $F_i$ have degree at most $\Delta+1$, and at most one of them has degree $\Delta+1$.
    \item $(v_i, v_{i+1})$ is an edge in both $F_i$ and $F_{i+1}$, assuming $F_i$ is not the final subgraph output by \Alg{trees}. 
    \item Vertices $v_0, \dots, v_i$ are distinct, and they form a path in~$F_i$. 
\end{enumerate}
\end{claim}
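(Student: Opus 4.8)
The plan is to prove parts (a)--(d) together by induction on $i$, assuming at each step that all four hold at every index at most $i$; they are genuinely interdependent, since the degree bound (b) is what pins down which vertex \Alg{trees} picks as $v_{i+1}$, the forest property (a) is exactly what rules out cycles in the distinctness argument for (d), and (c) is the glue that lets the path in (d) grow by one vertex per step. For the base case $i=0$, recall $F_0 = F\cup\{(v_0,v_1)\}$, where $F$ is a spanning $\Delta$-forest of $G\setminus\{v_0\}$ and $v_0$ is not a cut vertex of $G$; hence $F_0$ is a spanning forest of $G$, which is (a). Adding the single edge $(v_0,v_1)$ raises only the degree of $v_1$, and its degree in $F$ was at most $\Delta$, so in $F_0$ every vertex has degree at most $\Delta+1$ and only $v_1$ can attain $\Delta+1$, which is (b); part (d) is vacuous at $i=0$. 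If $F_0$ is not the final forest, then by (b) the degree-$(\Delta+1)$ vertex selected in the first iteration of \Alg{trees} is precisely $v_1$, and since the chosen set $N$ excludes $v_0$, the deleted edge $(v_1,b_1)$ differs from $(v_0,v_1)$; hence $(v_0,v_1)\in F_1$, which is (c) for index $0$. Finally $v_0\ne v_1$ (no self-loops) and $v_0-v_1$ is a path in $F_1$, giving (d) for index $1$.

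For the inductive step, assume (a)--(d) at all indices at most $i\ge 1$ and that $F_i$ is not final, so \Alg{trees} outputs $(v_{i+1},F_{i+1})$. I would first identify $v_{i+1}$: passing from $F_{i-1}$ to $F_i=(F_{i-1}\setminus\{(v_i,b_i)\})\cup\{(a_i,b_i)\}$ drops the degree of $v_i$ from $\Delta+1$ to $\Delta$, leaves the degree of $b_i$ unchanged, leaves every vertex outside $\{v_i,a_i,b_i\}$ unchanged, and raises only the degree of $a_i$, by one --- to at most $\Delta+1$, since $a_i\ne v_i$ had degree at most $\Delta$ in $F_{i-1}$ by (b). So the unique degree-$(\Delta+1)$ vertex of $F_i$, which \Alg{trees} is forced to select, is $a_i$; thus $v_{i+1}=a_i$. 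Since $a_i$ is an $F_{i-1}$-neighbor of $v_i$ retained by the repair, $(v_i,v_{i+1})\in F_i$, and since the next set $N$ excludes $v_i$, the edge deleted to form $F_{i+1}$ is not $(v_i,v_{i+1})$, so $(v_i,v_{i+1})\in F_{i+1}$ as well; this is (c) at index $i$. (For $i=0$ one reads ``$v_1$, adjacent to $v_0$ by construction of $F_0$'' in place of ``$a_i$''.) Now (a) at $i+1$: deleting $(v_{i+1},b_{i+1})$ from the forest $F_i$ splits one of its trees into the part containing $v_{i+1}$ --- which also contains $a_{i+1}$, via the surviving edge $(v_{i+1},a_{i+1})$ --- and the part containing $b_{i+1}$; adding $(a_{i+1},b_{i+1})$ rejoins these two distinct parts without creating a cycle, and since $(a_{i+1},b_{i+1})\notin F_i$ (else $F_i$ would contain the triangle on $v_{i+1},a_{i+1},b_{i+1}$) the edge count and vertex set are unchanged, so $F_{i+1}$ is again a spanning forest. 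Part (b) at $i+1$ is the same degree accounting one step on: the swap drops the degree of $v_{i+1}$ to $\Delta$, keeps the degree of $b_{i+1}$ fixed, and raises only the degree of $a_{i+1}$ (which had degree at most $\Delta$ in $F_i$, being distinct from $v_{i+1}$), so at most one vertex has degree $\Delta+1$. For (d) at $i+1$: $v_{i+1}=a_i$ lies in the set $N$ chosen at step $i$, which excludes $v_{i-1}$, so $v_{i+1}\ne v_{i-1}$; the degree of $v_{i+1}$ in $F_i$ is $\Delta+1$, exceeding the degree $\Delta$ of $v_i$ in $F_i$, so $v_{i+1}\ne v_i$; and if $v_{i+1}=v_j$ for some $j\le i-2$, then the subpath $v_j-v_{j+1}-\cdots-v_i$ of the path guaranteed by (d) at index $i$, together with the edge $(v_i,v_j)=(v_i,v_{i+1})\in F_i$, would form a cycle in $F_i$, contradicting (a). Hence $v_0,\dots,v_{i+1}$ are distinct, and appending $v_{i+1}$ to the path $v_0-\cdots-v_i$ along $(v_i,v_{i+1})\in F_i$ gives a path in $F_i$; since forming $F_{i+1}$ only deletes $(v_{i+1},b_{i+1})$ --- not an edge of this path, as $b_{i+1}\ne v_i$ --- and adds one edge, $v_0,\dots,v_{i+1}$ still form a path in $F_{i+1}$. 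This closes the induction.

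The step I expect to be the main obstacle is (d): showing that the vertices $v_0,v_1,\dots$ never repeat. This is the only place where acyclicity of the $F_i$'s (part (a)) is truly needed, and it is exactly what forces the sequence of local repairs in \Alg{trees} to terminate; everything else is routine degree counting together with a standard ``swap one edge of a forest'' argument (cf.\ \Fig{repair}).
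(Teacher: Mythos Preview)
Your proof is correct and follows essentially the same approach as the paper: an induction establishing (a)--(d) together, identifying $v_{i+1}=a_i$ via the degree accounting in (b), using the forest property (a) to rule out repeats in (d), and checking that the deleted edge at step $i+1$ never touches the growing path. The only cosmetic differences are that the paper proves the four items in separate inductions rather than one joint induction, and its argument for (a) derives a cycle in $F_{i-1}$ by contradiction rather than using your tree-swap formulation; neither difference is substantive.
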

\begin{proof}
We prove Item (a) by induction on $i$. The claim is true for $i = 0$, since adding $(v_0, v_1)$ to $F$ does not introduce a cycle, and the number of connected components of $G$ and $G'$ is the same since $v_0$ is not a cut vertex. Suppose the claim holds for $F_{i-1}$. We show that it also holds for $F_i$. First note that $F_{i-1}$ and $F_i$ have the same number of edges. Therefore, it suffices to show that $F_i$ has no cycles. Suppose $F_i$ has a cycle. Then this cycle must include the edge $(a_i, b_i)$, whereas all other edges of the cycle are in $F_{i-1}$. Suppose the vertex $v_i$ is not included in the cycle. Then, by replacing the edge $(a_i, b_i)$ with the edges $(v_i, a_i)$ and $(v_{i}, b_i),$ we would obtain a cycle in $F_{i-1}$, a contradiction. Now suppose $v_i$ is included in the cycle. By replacing the edges $(v_i, a_i)$ and $(a_i, b_i)$ with the edge $(v_i, b_i),$ we would again obtain a cycle in $F_{i-1}$, a contradiction. Thus, $F_i$ is a spanning forest, which concludes the proof of Item (a). 

We also prove Item (b) by induction. The claim is true for $F_0$, since all vertices in $F_0$ have the same degree as in $F$, except for $v_1$, whose degree increases by 1 between $F$ and $F_0$. Since $\deg_{F}(v_1) \leq \Delta$, then $\deg_{F_0}(v_1) \leq \Delta+1$. Additionally, $v_0$ has degree exactly $1$, which is at most $\Delta$. 

Suppose the claim is true for $F_0, \dots, F_{i-1}$. If all vertices in $F_{i-1}$ have degree at most $\Delta$, this concludes the proof. Otherwise, there exists a unique vertex $v_i$ with $\deg_{F_{i-1}}(v_i) = \Delta + 1$.  Let  $a_i, b_i$ be the neighbors of $v_i$ defined in \Step{neighbors}.  Note that the only vertices whose degrees change between $F_i$ and $F_{i-1}$ are $a_i$ and $v_i$. The degree of $v_i$ decreases by 1, and thus $\deg_{F_i}(v_i) = \Delta$. The degree of $a_i$ increases by 1. By the inductive hypothesis, $\deg_{F_{i-1}}(a_i) \leq \Delta$, and thus $\deg_{F_i}(a_i) \leq \Delta + 1$. All other vertices in $F_i$ have the same degree as in $F_{i-1}$, thus their degree is at most $\Delta$. This concludes the proof of Item (b). 

We prove Item (c). Fix iteration $i$ and suppose $F_i$ is not the final subgraph output by \Alg{trees}. By Item (b), vertex $v_i$ has degree $\Delta+1$, whereas all other vertices  in $F_{i-1}$ have degree at most $\Delta$. Let $a_i, b_i$  be  the neighbors of $v_i$ defined in \Step{neighbors}. By the proof of Item (b), all vertices in $F_i$, except $a_i$, have degree at most $\Delta$.
Since $F_i$ is not the final forest, it must hold that $\deg_{F_i}(a_i) = \Delta + 1$. Therefore, $v_{i+1} = a_i$. Since $(v_i, a_i)$ is an edge in $F_i$, we obtain that $(v_i, v_{i+1})$ is an edge in $F_i$. Note that in iteration $i+1$, we pick neighbors $a_{i+1}$ and  $b_{i+1}$ of $v_{i+1}$ such that $a_{i+1} \neq v_i$ and $b_{i+1} \neq v_i$. Therefore, the edge $(v_i, v_{i+1})$ remains unchanged in $F_{i+1}$. This concludes the proof of Item (c). 

We prove Item (d) by induction. It clearly holds for $i=0$. Suppose it is true for iteration $i$. We prove it for iteration $i+1$. By Item (c), the edge $(v_i, v_{i+1})$ is in $F_i$. Note that $v_{i+1} \neq v_{i-1}$, since $v_{i+1} = a_i$, and we specifically choose $a_i \neq v_{i-1}$. Suppose $v_{i+1} = v_j$ for some $j < i-1$. By the inductive hypothesis, the path from $v_j$ to $v_i$ is in $F_i$. Together with the edge $(v_i, v_j)$, it forms a cycle in $F_i$, a contradiction. Therefore, $v_{i+1}$ is distinct from all vertices $v_0, \dots, v_i$. We now show that the vertices $v_0, \dots, v_{i+1}$ form a path in $F_{i+1}$. By a similar argument as for $v_{i+1} = a_i$, we can show that $b_i$ is distinct from all vertices $v_0, \dots, v_i$. Therefore, all edges on the path from $v_0$ to $v_i$ are also in the forest $F_{i+1}$. Since $(v_i, v_{i+1})$ is an edge in $F_{i+1}$, we obtain that the vertices $v_0,\dots, v_{i+1}$ form a path in $F_{i+1}$. 
\end{proof}

\subsection{Completing the Proof of \Lem{trees}} \label{sec:ds_sub}

In this section, we complete the proof of \Lem{trees}. To that end, we first prove \Lem{induced-stars} which characterizes the down-sensitivity of $\fsf$ via the size of induced stars of the graph. \Lem{trees} follows easily from \Lems{induced-stars}{stars}.

\begin{proof} [Proof of \Lem{induced-stars}]
We first show that $DS_{f_{\mathrm{sf}}}(G) \geq \inducedstar(G)$. 
Let $H'$ be a maximum induced star of $G$, i.e., $H'$ is an induced $\inducedstar(G)$-star, and let $v$ be its central vertex. Let $H = H' \setminus \{v\}$, i.e., $H$ consists of $\inducedstar(G)$ isolated vertices.  It follows that $f_{\mathrm{sf}}(H) = 0$ and $f_{\mathrm{sf}}(H') = \inducedstar(G)$. Since $H \preceq H' \preceq G$ and $H, H'$ are node-neighboring, then $D_{f_{\mathrm{sf}}}(G) \geq f_{\mathrm{sf}}(H') - f_{\mathrm{sf}}(H)= \inducedstar(G)$.

We now show that $DS_{f_{\mathrm{sf}}}(G) \leq \inducedstar(G)$. Let $H\prec H'$ be two node-neighboring induced subgraphs of $G$ that differ at node $v$ and satisfy $DS_{f_{\mathrm{sf}}}(G)= f_{\mathrm{sf}}(H') - f_{\mathrm{sf}}(H).$
If $\deg_{H'}(v) = 0$ then  $f_{\mathrm{sf}}(H') = f_{\mathrm{sf}}(H)$, and the lemma holds. 
 %Otherwise, 
Now assume $\deg_{H'}(v) \geq 1.$ Then
$f_{\mathrm{sf}}(H') \geq f_{\mathrm{sf}}(H)$, and in particular, $f_{\mathrm{sf}}(H') - f_{\mathrm{sf}}(H) 
= |V(H')|- f_{\mathrm{cc}}(H') - |V(H)|+ f_{\mathrm{cc}}(H)
= f_{\mathrm{cc}}(H) - f_{\mathrm{cc}}(H') + 1$. Let $S_v$ be the vertices of the connected component of $H'$ containing $v$, and let $S_1, \dots, S_k \subseteq S_v$ be the sets of vertices of the connected components of the subgraph induced by $S_v - \{v\}$. Note that $k \geq 1$, since $\deg_{H'}(v) \geq 1.$ Then $f_{\mathrm{cc}}(H) - f_{\mathrm{cc}}(H') = k-1$, and thus $DS_{f_{\mathrm{sf}}}(G)=f_{\mathrm{sf}}(H') - f_{\mathrm{sf}}(H)=k$. It remains to prove that $k \leq \inducedstar(G)$. 
For each $i \in [k]$, select some vertex $v_i\in S_i$ that is adjacent to $v$ in the graph $H'$. Then the set of vertices $\{v_i \: | \: i\in [k]\}$ is an independent set in $G$ because the vertices $v_i$ are in different connected components of $H$. As a result, the vertices $\{v_i \: | \: i\in [k]\} \cup \{v\}$ 
%are the vertices of an induced 
induce a star centered at $v$ in graph $G$. Therefore, $k \leq  \inducedstar(G)$, which concludes the proof.  
\end{proof}

\begin{proof}[Proof of \Lem{trees}]
   We show the following equivalent statement: If $DS_{\fsf}(G) \leq \Delta - 1$, then $G$ has a spanning $\Delta$-forest. Consider $G$ such that $DS_{\fsf}(G) \leq \Delta - 1$. By \Lem{induced-stars}, $s(G) = DS_{\fsf}(G) \leq \Delta - 1$. Thus, $G$ has no induced $\Delta$-stars, and by \Lem{stars} we obtain that $G$ has a spanning $\Delta$-forest, as desired. 
\end{proof}

\subsection{Anchor Sets of our Extensions}
\label{sec:our-anchor-sets}

In this section, we prove \Lem{our-anchor-sets}, which connects down-sensitivity to the anchor sets of our extensions. 

\begin{proof}[Proof of \Lem{our-anchor-sets}]
Fix $\Delta > 0$ and a graph $G$ such that $DS_{\fsf}(G) \leq \Delta - 1$. By \Lem{trees}, $G$ has a spanning $\Delta$-forest. Item \ref{item:delta-forest} of \Lem{extension} gives that $f_{\Delta}(G) = \fsf(G)$. Therefore, if $G \in S_{\Delta-1}^*$, then $G$ has a $\Delta$-forest, and as a result $G \in S_{\Delta}$. Thus $S_{\Delta-1}^* \subseteq S_{\Delta}$.  
\end{proof}

\section{Optimality of our Lipschitz Extensions}
\label{sec:optimality}

In this section, we prove \Thm{optimality}, which says that our Lipschitz extension for $f_{\mathrm{sf}}$, the size of the spanning forest, is close to optimal for this function.  
Our proof relies on the following combinatorial fact proved by Win \cite{Win89} about graphs with no spanning $\Delta$-forest. For a graph $G$ and vertex set $X \subseteq V(G)$, we denote by $G \setminus X$ the graph obtained from $G$ by removing all vertices in $X$ and their adjacent edges.

\begin{lemma}[Lemma 1, Theorem 1 in~\cite{Win89}] \label{lem:win}
Let $\Delta \geq 2$. If a graph $G$ has no spanning $\Delta$-forest, then there exists an induced subgraph $S \prec G$ and a vertex set $X \subset V(S)$, such that: 
\begin{enumerate}
    \item $S$ has a spanning $\Delta$-tree.
    \item $G$ has no edges between vertices in $G \setminus V(S)$ and vertices in $S \setminus X$.
    \item $f_{\mathrm{cc}}(S \setminus X) \geq |X|(\Delta - 2) + 2.$
\end{enumerate}
\end{lemma}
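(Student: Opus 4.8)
This is Win's theorem~\cite{Win89}, which the paper invokes as a black box; were I to prove it from scratch, the argument would run as follows. First reduce to $G$ connected with no spanning $\Delta$-tree: if $G$ has no spanning $\Delta$-forest, some connected component $G_0$ has no spanning $\Delta$-tree, and a pair $(S,X)$ that works for $G_0$ works verbatim for $G$, since $S\prec G_0\preceq G$ and every edge of $G$ leaving $S\setminus X$ already stays inside $G_0$, so property~(2) carries over. So assume $G$ connected, $|V(G)|\ge 2$, with no spanning tree of maximum degree at most $\Delta$.

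Let $S$ be an induced subgraph of $G$ of maximum order among those that \emph{do} have a spanning $\Delta$-tree; this exists (a single vertex qualifies) and $S\prec G$ because $G$ has none. Set $W=V(G)\setminus V(S)\ne\emptyset$ and fix a spanning $\Delta$-tree $T_S$ of $S$. The first observation is that every $v\in V(S)$ with a neighbor in $W$ is \emph{saturated}, i.e.\ $\deg_{T'}(v)=\Delta$ in \emph{every} spanning $\Delta$-tree $T'$ of $S$: otherwise, taking such a $T'$ with $\deg(v)\le\Delta-1$ and a neighbor $w\in W$ of $v$, we could hang $w$ as a new leaf at $v$ to get a spanning $\Delta$-tree of $G[V(S)\cup\{w\}]$, contradicting maximality of $S$. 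Let $X$ be the set of saturated vertices of $T_S$; a tree with $\ge 2$ vertices has two leaves, which are unsaturated since $\Delta\ge 2$, so $X\subset V(S)$. Then $N(W)\cap V(S)\subseteq X$ gives property~(2), while property~(1) is the choice of $S$. For property~(3): each $x\in X$ has $\deg_{T_S}(x)=\Delta$, and the edges of $T_S$ with both ends in $X$ form a forest, so there are at most $|X|-1$ of them; hence deleting $X$ from the tree $T_S$ leaves exactly $1+\sum_{x\in X}\deg_{T_S}(x)-e_X-|X|$ components, where $e_X\le|X|-1$, and this is $\ge 1+\Delta|X|-(|X|-1)-|X|=(\Delta-2)|X|+2$.

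The step I expect to need the most care is that the last count is for $\fcc(T_S\setminus X)$, whereas property~(3) concerns $\fcc(S\setminus X)$, and $S\setminus X$ has \emph{more} edges: a non-tree edge of $S$ with both endpoints outside $X$ lying in two different fragments of $T_S\setminus X$ would merge those fragments and could wreck the count. Ruling this out is the technical heart of Win's proof and is what forces $T_S$ to be chosen extremally, not arbitrarily --- for instance among spanning $\Delta$-trees of $S$ pick one with the fewest saturated vertices (breaking ties by a secondary potential if necessary) --- so that any ``bad'' non-tree edge $\{a,b\}$ could be exchanged into $T_S$ in place of a tree edge on the $a$-$b$ path incident to a saturated vertex that the path passes through, strictly improving the potential and contradicting extremality; with no bad edge, $\fcc(S\setminus X)=\fcc(T_S\setminus X)$ and property~(3) follows. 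I would also check the small cases ($\Delta=2$, $|X|$ tiny) and confirm the reduction to a single component is airtight.
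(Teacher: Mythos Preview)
The paper does not prove this lemma; it is quoted from Win~\cite{Win89} and used as a black box in the proof of \Lem{remove_set}. You correctly identify this, so there is no proof in the paper to compare your sketch against.

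On the sketch itself: the overall strategy (take $S$ maximal among induced subgraphs admitting a spanning $\Delta$-tree, show that every vertex of $S$ with a neighbour in $W$ must have degree $\Delta$ in any such tree, set $X$ to be the degree-$\Delta$ vertices of $T_S$, and count the components of $T_S\setminus X$) is the right shape, and you correctly flag the passage from $\fcc(T_S\setminus X)$ to $\fcc(S\setminus X)$ as the crux. But the specific fix you propose does not work as stated. Suppose $T_S$ is chosen to minimise $|X|$, and let $\{a,b\}$ be a bad non-tree edge with $a,b\notin X$ and $\deg_{T_S}(a)=\deg_{T_S}(b)=\Delta-1$. Swapping $\{a,b\}$ in for a tree edge incident to some $x\in X$ on the $a$-$b$ path raises both $a$ and $b$ to degree $\Delta$ while dropping $x$ (and at most one further endpoint) by one, so the number of degree-$\Delta$ vertices can go \emph{up} by one; the exchange does not ``strictly improve the potential,'' and a secondary tie-breaker cannot help when the primary potential increases. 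Win's actual argument---which in the $\Delta=2$ case specialises to the Chv\'atal--Erd\H{o}s theorem on Hamiltonian paths---sets up the extremal choice and the exchange more delicately than this; your sketch has the right skeleton but is missing that piece.
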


The key ingredient in our proof of \Thm{optimality} is the following lemma which, intuitively, explains the error of our Lipschitz extension $f_\Delta$ by attributing the error on a graph $G$ to one of its induced subgraphs. Recall from \Lem{extension} that $f_\Delta(G)\leq\fsf(G)$ for all graphs $G$ and that $f_\Delta$ can err only on graphs with no spanning $\Delta$-forest. \Lem{remove_set} shows that  every such graph $G$ has an induced subgraph $H$ such that $f_\Delta(G)$ is nearly as large as possible for any $\Delta$-Lipschitz lower bound on $\fsf(G)$ because $f_{\Delta}(G)$ cannot deviate too far from $\fsf(H)$.

% \begin{lemma}\label{lem:remove_set}
% Let $\Delta \geq 1$. If a graph $G$ has no spanning $\Delta$-forest, then there exists a vertex set $X \subset V(G)$ such that 
% \begin{align}\label{eq:remove_set}
%     f_{\Delta}(G) \geq \fsf(G \setminus X) + |X|(\Delta - 1) + 1.
% \end{align}
% \end{lemma}

\begin{lemma}\label{lem:remove_set}
Let $\Delta \geq 1$. If a graph $G$ has no spanning $\Delta$-forest, then there exists a proper induced subgraph $H \prec G$ such that 
\begin{align}\label{eq:remove_set}
    f_{\Delta}(G) \geq \fsf(H) + (\Delta - 1)d(G, H) + 1.
\end{align}
\end{lemma}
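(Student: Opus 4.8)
I would reduce the statement to Win's decomposition (\Lem{win}) together with one explicit feasible point for the linear program defining $f_\Delta$, and then close with a short induction. First dispose of the degenerate case $\Delta=1$: if $G$ has no spanning $1$-forest then some component of $G$ has at least three vertices, so $G$ has an edge and hence $f_1(G)\geq 1$ (put weight $1$ on a single edge); taking $H$ to be a single vertex makes the right-hand side of \Eqn{remove_set} equal to $1\leq f_1(G)$. For $\Delta\geq 2$ I would argue by strong induction on $|V(G)|$. Apply \Lem{win} to get the proper induced subgraph $S\prec G$ and the set $X\subset V(S)$; write $R=V(G)\setminus V(S)$ and $W=V(S)\setminus X$, so $V(G)=R\sqcup X\sqcup W$, there are no edges between $R$ and $W$, $S=G[X\cup W]$ has a spanning $\Delta$-tree $T$, and $\fcc(G[W])\geq |X|(\Delta-2)+2$. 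Conditions (1) and (3) of \Lem{win} force $X\neq\emptyset$ (if $X=\emptyset$ then $S$ would be both connected and have $\fcc(S)\geq 2$).

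\textbf{The key inequality.} The core step is to show
\[
f_\Delta(G)\;\geq\; f_\Delta(G\setminus X)+(\Delta-1)|X|+1 ,
\]
where $G\setminus X=G[R]\sqcup G[W]$ is a disjoint union since $R$ and $W$ are non-adjacent. To prove it I would exhibit a point $x\in\cP_\Delta(G)$: take an optimal solution $y$ for $\cP_\Delta(G[R])$ and use it on the edges inside $R$; put weight $1$ on the edges of the spanning $\Delta$-tree $T$ of $S$ (all of which lie inside $X\cup W$); and put weight $0$ on all other edges of $G$ (in particular on the $R$--$X$ edges). Feasibility is a short check: non-negativity is clear; the degree bound at each vertex holds because $R$--$X$ edges carry no weight, so every vertex sees either its degree in $y$ or its degree in $T$, each at most $\Delta$; and for any $U$ with $|U|\geq 2$, since the $R$--$X$ edges inside $U$ are zero, $x(E[U])=y(E[U\cap R])+|T\cap E[U\cap(X\cup W)]|\leq(|U\cap R|-1)+(|U\cap(X\cup W)|-1)\leq |U|-1$, using that $T$ is a forest and $y\in\cP_\Delta(G[R])$ (the boundary cases where one part is empty are only easier). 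The objective of $x$ is $f_\Delta(G[R])+|T|=f_\Delta(G[R])+|X|+|W|-1$.

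\textbf{Closing the arithmetic and the induction.} Now $f_\Delta(G\setminus X)\leq f_\Delta(G[R])+f_\Delta(G[W])$ (restrict any solution on the disjoint union to its two parts), and $f_\Delta(G[W])\leq\fsf(G[W])=|W|-\fcc(G[W])$ by \Lem{extension}. Combining these with Win's bound $\fcc(G[W])\geq|X|(\Delta-2)+2$ gives $|W|-f_\Delta(G[W])\geq|X|(\Delta-2)+2$, which turns the objective value of $x$ into at least $f_\Delta(G\setminus X)+(\Delta-1)|X|+1$, proving the key inequality. To finish: if $G\setminus X$ has a spanning $\Delta$-forest, then $f_\Delta(G\setminus X)=\fsf(G\setminus X)$ by \Lem{extension}, and setting $H=G\setminus X$ (a proper induced subgraph since $X\neq\emptyset$, with $d(G,H)=|X|$) yields exactly \Eqn{remove_set}. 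Otherwise $G\setminus X$ has no spanning $\Delta$-forest and strictly fewer vertices, so the induction hypothesis supplies $H\prec G\setminus X$ with $f_\Delta(G\setminus X)\geq\fsf(H)+(\Delta-1)d(G\setminus X,H)+1$; combining with the key inequality and the triangle inequality $d(G,H)\leq|X|+d(G\setminus X,H)$ (and $\Delta-1\geq 1$) gives $f_\Delta(G)\geq\fsf(H)+(\Delta-1)d(G,H)+2\geq\fsf(H)+(\Delta-1)d(G,H)+1$, with $H\prec G$.

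\textbf{Where the difficulty lies.} The main obstacle is identifying the right target subgraph and the right feasible point. The naive choice $H=S\setminus X=G[W]$ is too lossy—it effectively discards the ``cheap'' part $R$ twice—whereas the correct choice is $H=G\setminus X$, which forces the induction because $G\setminus X$ need not itself admit a spanning $\Delta$-forest. For the feasible point, the insight is that one should throw away whatever structure an optimal solution might place inside $W$ and instead spend the degree budget of the vertices of $X\cup W$ on the \emph{connected} spanning $\Delta$-tree $T$ of $S$: this preserves feasibility but buys $|X|-1+\fcc(G[W])$ extra objective over any $\Delta$-bounded forest confined to $W$, and Win's inequality $\fcc(G[W])\geq|X|(\Delta-2)+2$ is precisely calibrated so that this surplus is $(\Delta-1)|X|+1$.
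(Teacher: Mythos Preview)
Your proof is correct and follows essentially the same route as the paper: both invoke Win's decomposition (\Lem{win}), combine the spanning $\Delta$-tree of $S$ with an optimal LP solution on the remaining vertices to obtain the key inequality, and close by induction. The only difference is organizational---you induct on $G\setminus X$ so that the inductively-obtained $H$ is already the final answer, whereas the paper inducts on $G\setminus V(S)$ and then reassembles $H'=(S\setminus X)\cup H_1$---but the underlying arithmetic is identical.
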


\begin{proof} Let $G=(V,E)$ be a graph with no spanning $\Delta$-forest.

To see that \Eqn{remove_set} holds for $\Delta = 1$, let $H$ be a subgraph consisting of a single vertex of $G$. Then $\fsf(H)=0,$ and the right-hand size of \Eqn{remove_set} evaluates to 1. Since $G$ has no spanning $\Delta$-forest, $G$ has at least one edge. Thus, $f_{\Delta}(G) \geq 1$, and \Eqn{remove_set} holds for $\Delta=1$. 

Now fix some $\Delta \geq 2$. The proof is by induction on $|V| + |E|$, the total number of vertices and edges of $G$. Since $G$ has no spanning $\Delta$-forest, it must have a vertex of degree at least $\Delta+1$. 
The smallest such graph (in terms of $|V|+|E|$) is
%a similar argument as in the proof of \Thm{optimality} shows that the graph $G$ with the smallest number of vertices and edges that has no spanning $\Delta$-forest is 
a $(\Delta+1)$-star. 

For the base case, let $G$ be a $(\Delta+1)$-star. Denote its central vertex by $v$. Let $H = G \setminus \{v\}$. Then  $f_{\Delta}(G) = \Delta$, $\fsf(H) = 0$, and $(\Delta - 1)d(G, H) + 1 = \Delta$, since $d(G, H) = 1$. Thus, the claim holds for $G$ (and is tight). 

For the inductive step, consider a graph $G$ that contains a $(\Delta+1)$-star as a proper subgraph. See \Fig{remove_set} for an illustration of the subgraphs of $G$ considered in the rest of the proof.

Suppose \Eqn{remove_set} holds for all proper subgraphs of $G$.  Let $S \preceq G$ and $X \subset V(S)$ be the subgraph and the vertex set given by \Lem{win}. Then
\begin{align}\label{eq:f-delta}
    f_{\Delta}(G) \geq   f_{\Delta}(S) +   f_{\Delta}(G \setminus V(S)) = \fsf(S) + f_{\Delta}(G \setminus V(S)), 
\end{align}
where the equality holds by Item \ref{item:delta-forest} of \Lem{extension}, since $S$ has a spanning $\Delta$-tree. We apply Item 1 of \Lem{win} and \Eqn{cc-sf-connection}: 
\begin{align}
    \fsf(S) &- \fsf(S \setminus X) \nonumber\\
    &= |V(S)| - 1 - |V(S \setminus X)| + f_{\mathrm{cc}}(S \setminus X) \nonumber\\
    &\geq |X| - 1 + |X|(\Delta - 2) + 2 = |X|(\Delta-1) + 1, 
    \label{eq:X1-bound}
\end{align}
where the inequality holds by Item 3 of \Lem{win}. Combining \Eqn{f-delta} and the lower bound on $\fsf(S)$ implied by \Eqn{X1-bound} yields
\begin{align}
    f_{\Delta}(G) \geq \fsf(S \setminus X) + |X|(\Delta-1) + 1 +  f_{\Delta}(G \setminus V(S)). \label{eq:pre-final}
\end{align}

First, consider the case when $G \setminus V(S)$ has a spanning $\Delta$-forest. By Item \ref{item:delta-forest} of \Lem{extension}, $f_{\Delta}(G \setminus V(S)) = \fsf(G \setminus V(S))$. Let $H = G \setminus X$. By \Lem{win}, there are no edges between $G \setminus V(S)$ and $S \setminus X$. Thus, $\fsf(S \setminus X) + \fsf(G \setminus V(S)) = \fsf(G \setminus  X) = \fsf(H)$. Also, $|X| =d(G, H)$. Substituting these two equalities into \Eqn{pre-final}, we obtain that \Eqn{remove_set} holds in this case. 

Finally, we consider the case when $G \setminus V(S)$ has no spanning $\Delta$-forest. Let $G_1=G \setminus V(S)$. 
By Item 3 of \Lem{win}, $V(S)\neq\emptyset,$ so
$G_1$ is a proper subgraph of $G$. By the inductive hypothesis, there exists an induced subgraph $H_1 \prec G_1$ such that
\begin{align*}%\label{eq:extension-on-G-minus-S}
   f_{\Delta}(G\setminus V(S))= f_{\Delta}(G_1) \geq \fsf(H_1) + d(G_1, H_1)(\Delta - 1) + 1.
\end{align*}
Plugging this bound on $f_\Delta(G\setminus V(S))$ into \Eqn{pre-final}, we get
\begin{align*}
    f_{\Delta}(G) \geq& \fsf(S \setminus X) + \fsf(H_1) \\
    &+ (|X| + d(G_1, H_1))(\Delta - 1) + 2. 
\end{align*}
Let $H' = (S \setminus X) \cup H_1$. By Item 2 of \Lem{win}, there are no edges between $G_1$ and $S \setminus X$. Hence, $\fsf(S \setminus X) + \fsf(H_1) = \fsf(H')$. Also note that $|X| + d(G_1, H_1) = d(G, H')$. We obtain
\begin{align*}
    f_{\Delta}(G) \geq \fsf(H') + (\Delta - 1)d(G, H') + 2. 
\end{align*}
Thus, \Eqn{remove_set} holds in this case, too, concluding the proof. 
\end{proof}

\begin{figure}
\centering
\includegraphics[width = 0.3\textwidth]{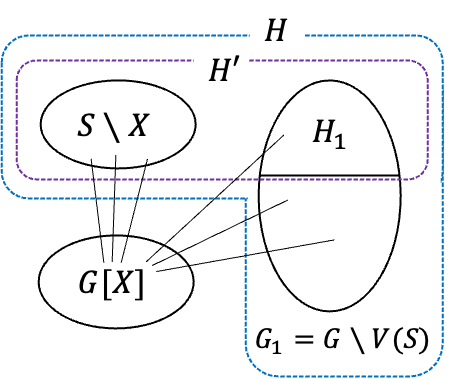}
\Description{The local repair operation.}
\caption{The subgraphs in \Lem{remove_set}. Graph $G[X]$ is the subgraph of $G$ induced by $X$. There are no edges between $S\setminus X$ and $G_1 = G\setminus V(S)$. Graph $H_1$ is an induced subgraph of $G_1$. Finally, $H = G \setminus X$ and $H' = H_1 \cup (S \setminus X)$. }
\label{fig:remove_set}
\end{figure}

We now prove \Thm{optimality} using \Lem{remove_set}. Except for \Lem{remove_set}, which is at the core of our argument, the rest of the proof builds on an argument of Cummings and Durfee~\cite{CummingsD20}.
\begin{proof}[Proof of \Thm{optimality}]
Let $G=(V,E)$ be a graph on which our Lipschitz extension errs, i.e., $\Err_G(f_\Delta, \fsf) > 0$.
The proof proceeds by induction on $|V| + |E|$, the total number of vertices and edges of $G$. By Item \ref{item:delta-forest} of \Lem{extension}, since $\Err_G(f_\Delta, \fsf) > 0$, graph $G$ has no spanning $\Delta$-forest. Therefore, we can consider the same base case as in the proof of \Lem{remove_set}.

For the base case, let $G$ be a $(\Delta+1)$-star. Let $v$ be its central vertex. For this graph, the left-hand side of \Eqn{optimality} equals 1. Let $f^* \in \cF_{\Delta - 1}$. By definition, $|f^*(G) - f^*(G \setminus \{v\})| \leq \Delta - 1$. In contrast, $\fsf(G) = \Delta + 1$ and $\fsf(G \setminus \{v\}) = 0$. The quantity $\max \{ |f^*(G) - \fsf(G)|, |f^*(G \setminus \{v\}) - \fsf(G \setminus \{v\}  |\}$ is minimized when $f^*(G) = \Delta$ and $f^*(G \setminus \{v\}) = 1$ (under the Lipschitz requirement on $f^*$). Thus, the right-hand side of \Eqn{optimality} equals 1, and  the theorem holds when $G$ is a $(\Delta+1)$-star.

For the inductive step, consider a graph $G$ that contains a $(\Delta+1)$-star as a proper subgraph and $\Err_G(f_{\Delta}, \fsf) > 0$. Suppose the theorem holds for all proper subgraphs of $G$. We show that it also holds for $G$. Let $\widetilde{H} \preceq G$ be the induced subgraph of $G$ that maximizes the quantity $|f_{\Delta}(H) - \fsf(H)|$ over all induced subgraphs $H\preceq G$. Then $\Err_{\widetilde{H}}(f_{\Delta}, \fsf) > 0$.

First, consider the case when
$\widetilde{H} \neq G$, i.e., $\widetilde{H} \prec G$. 
Then
\begin{align} 
     \Err_G&(f_{\Delta}, \fsf) 
     = \max_{H \preceq G} |f_{\Delta}(H) - \fsf(H) | \nonumber\\
     &= \max_{H \preceq \widetilde{H}} |f_{\Delta}(H) - \fsf(H) | \label{eq:err_G1}\\
     &\leq 2 \min_{f^* \in \cF_{\Delta-1}(\cG)} \max_{H \preceq \widetilde{H}} | f^*(H) - \fsf(H) |  - 1 \label{eq:err_G2}\\  
     &\leq 2 \min_{f^* \in \cF_{\Delta-1}(\cG)} \max_{H \preceq G} | f^*(H) - \fsf(H)  |  - 1,  \label{eq:err_G3}
\end{align}
where \Eqn{err_G1} follows by our choice of $\widetilde{H}$, the  inequality in \Eqn{err_G2} holds by the inductive hypothesis, and \Eqn{err_G3} holds since all induced  subgraphs of $\widetilde{H}$ are also induced subgraphs of $G$. Thus, the theorem holds for $G$. 

Finally, suppose $\widetilde{H} = G$, i.e., $\Err_G(f_{\Delta}, \fsf) = |f_{\Delta}(G) - \fsf(G)|$. Then $\Err_G(f_{\Delta}, \fsf) =\fsf(G)-f_\Delta(G)$ by the underestimation property of our extension (\Lem{extension}). Moreover, since $f_{\Delta}(G) \neq \fsf(G)$, Item \ref{item:delta-forest} of \Lem{extension} implies that $G$ has no spanning $\Delta$-forest. Thus, we can apply \Lem{remove_set}.
It gives us that $G$ has an induced subgraph $H \preceq G$ 
such that $f_{\Delta}(G) \geq \fsf(H) + (\Delta -1)d(G, H) + 1$. For all $f^* \in \cF_{\Delta-1},$ we have that $|f^*(G) - f^*(H)| \leq (\Delta-1) \cdot d(G, H)$. Thus,
\begin{align*}
    f_{\Delta}(G) &\geq \fsf(H) + |f^*(G) - f^*(H)| + 1 \\
    &\geq \fsf(H) + f^*(G) - f^*(H) + 1.
    \end{align*}
We use this inequality to get a bound on the error:    
    \begin{align*}
    \Err_G(f_{\Delta}, \fsf)
    &= \fsf(G) - f_{\Delta} (G) \\
    &\leq \fsf(G)- \fsf(H) - f^*(G) + f^*(H) - 1\\
    &\leq |\fsf(G) - f^*(G)| + |f^*(H) - \fsf(H)| - 1 \\
    & \leq 2\cdot \Err_G(f^*, \fsf)  - 1.
\end{align*}
Since this holds for all $f^* \in \cF_{\Delta-1},$
we finally obtain that $\Err_G(f_{\Delta}, \fsf) \leq 2 \min_{f^* \in \cF_{\Delta-1}%(\cG)
} \Err_G(f^*, \fsf)  - 1.$
\end{proof}

\section{Conclusion}
\label{sec:concl}

In this work, we presented the first node-differentially private algorithm for releasing the number of connected components of an undirected graph, a foundational query in graph databases. Our algorithm provides instance-based accuracy guarantees that are parameterized by the smallest possible maximum degree of a spanning forest of the input graph and relies on an efficiently computable family of Lipschitz extensions.  The functions in this family can be evaluated in polynomial time by solving an LP with an exponentially large number of constraints  using the ellipsoid algorithm with an efficient separation oracle. An interesting open direction is designing a faster algorithm with the same privacy and utility guarantees.

\bibliographystyle{alpha}
\bibliography{references}

\newcommand{\etalchar}[1]{$^{#1}$}
\begin{thebibliography}{WWZX13}

\bibitem[ALJ20]{AhmedLJ20}
Faraz Ahmed, Alex~X. Liu, and Rong Jin.
\newblock Publishing social network graph eigenspectrum with privacy
  guarantees.
\newblock {\em {IEEE} Transactions on Network Science and Engineering},
  7(2):892--906, 2020.

\bibitem[ATK15]{akoglu2015graph}
Leman Akoglu, Hanghang Tong, and Danai Koutra.
\newblock Graph based anomaly detection and description: a survey.
\newblock {\em Data Mining and Knowledge Discovery}, 29(3):626--688, 2015.

\bibitem[BBDS12]{BlockiBDS12}
Jeremiah Blocki, Avrim Blum, Anupam Datta, and Or~Sheffet.
\newblock The {J}ohnson-{L}indenstrauss transform itself preserves differential
  privacy.
\newblock In {\em Proceedings, IEEE Symposium on Foundations of Computer
  Science (FOCS)}, pages 410--419, 2012.

\bibitem[BBDS13]{BlockiBDS13}
Jeremiah Blocki, Avrim Blum, Anupam Datta, and Or~Sheffet.
\newblock Differentially private data analysis of social networks via
  restricted sensitivity.
\newblock In {\em Proceedings, Innovations in Theoretical Computer Science
  (ITCS)}, pages 87--96. {ACM}, 2013.

\bibitem[BBK{\etalchar{+}}11]{balogh2011hamilton}
J{\'o}zsef Balogh, B{\'e}la Bollob{\'a}s, Michael Krivelevich, Tobias
  M{\"u}ller, and Mark Walters.
\newblock Hamilton cycles in random geometric graphs.
\newblock {\em The Annals of Applied Probability}, 21(3):1053--1072, 2011.

\bibitem[BCS15]{BorgsCS15}
Christian Borgs, Jennifer~T. Chayes, and Adam~D. Smith.
\newblock Private graphon estimation for sparse graphs.
\newblock In {\em Advances in Neural Information Processing Systems (NeurIPS)},
  pages 1369--1377, 2015.

\bibitem[BCSZ18]{BorgsCSZ18}
Christian Borgs, Jennifer~T. Chayes, Adam~D. Smith, and Ilias Zadik.
\newblock Revealing network structure, confidentially: Improved rates for
  node-private graphon estimation.
\newblock In {\em Proceedings, IEEE Symposium on Foundations of Computer
  Science (FOCS)}, pages 533--543, 2018.

\bibitem[BGM22]{BlockiGM22}
Jeremiah Blocki, Elena Grigorescu, and Tamalika Mukherjee.
\newblock Privately estimating graph parameters in sublinear time.
\newblock In {\em Proceedings, International Colloquium on Automata, Languages
  and Programming (ICALP)}, volume 229, pages 26:1--26:19, 2022.

\bibitem[BKM14]{BerenbrinkKM14}
Petra Berenbrink, Bruce Krayenhoff, and Frederik Mallmann{-}Trenn.
\newblock Estimating the number of connected components in sublinear time.
\newblock {\em Information Processing Letters}, 114(11):639--642, 2014.

\bibitem[BL98]{BenyaminiL98book}
Yoav Benyamini and Joram Lindenstrauss.
\newblock {\em Geometric nonlinear functional analysis}.
\newblock American Mathematical Society, 1998.

\bibitem[BLM{\etalchar{+}}15]{bonato2015domination}
Anthony Bonato, Marc Lozier, Dieter Mitsche, Xavier P{\'e}rez-Gim{\'e}nez, and
  Pawe{\l} Pra{\l}at.
\newblock The domination number of on-line social networks and random geometric
  graphs.
\newblock In {\em International Conference on Theory and Applications of Models
  of Computation}, pages 150--163, 2015.

\bibitem[CD20]{CummingsD20}
Rachel Cummings and David Durfee.
\newblock Individual sensitivity preprocessing for data privacy.
\newblock In {\em Proceedings, ACM-SIAM Symposium on Discrete Algorithms
  (SODA)}, pages 528--547, 2020.

\bibitem[CMMT22]{motifscope}
Tianyi Chen, Brian Matejek, Michael Mitzenmacher, and Charalampos~E
  Tsourakakis.
\newblock Algorithmic tools for understanding the motif structure of networks.
\newblock {\em Machine Learning and Knowledge Discovery in Databases - European
  Conference, {ECML} {PKDD}}, pages 3--19, 2022.

\bibitem[CRT05]{ChazelleRT05}
Bernard Chazelle, Ronitt Rubinfeld, and Luca Trevisan.
\newblock Approximating the minimum spanning tree weight in sublinear time.
\newblock {\em SIAM Journal on Computing}, 34(6):1370--1379, 2005.

\bibitem[CSS18]{chen2018unique}
Beidi Chen, Anshumali Shrivastava, and Rebecca~C Steorts.
\newblock Unique entity estimation with application to the syrian conflict.
\newblock {\em The Annals of Applied Statistics}, 12(2):1039--1067, 2018.

\bibitem[CZ13]{ChenZ13}
Shixi Chen and Shuigeng Zhou.
\newblock Recursive mechanism: towards node differential privacy and
  unrestricted joins.
\newblock In {\em Proceedings, {ACM} International Conference on Management of
  Data (SIGMOD)}, pages 653--664, 2013.

\bibitem[DKM{\etalchar{+}}06]{DworkKMMN06}
Cynthia Dwork, Krishnaram Kenthapadi, Frank McSherry, Ilya Mironov, and Moni
  Naor.
\newblock Our data, ourselves: Privacy via distributed noise generation.
\newblock In {\em International Conference on the Theory and Applications of
  Cryptographic Techniques ({EUROCRYPT})}, volume 4004, pages 486--503, 2006.

\bibitem[DLL16]{Dayll16}
Wei{-}Yen Day, Ninghui Li, and Min Lyu.
\newblock Publishing graph degree distribution with node differential privacy.
\newblock In {\em Proceedings, {ACM} International Conference on Management of
  Data (SIGMOD)}, pages 123--138, 2016.

\bibitem[DLR{\etalchar{+}}22]{DLRSSY22}
Laxman Dhulipala, Quanquan~C. Liu, Sofya Raskhodnikova, Jessica Shi, Julian
  Shun, and Shangdi Yu.
\newblock Differential privacy from locally adjustable graph algorithms: k-core
  decomposition, low out-degree ordering, and densest subgraphs.
\newblock In {\em Proceedings, IEEE Symposium on Foundations of Computer
  Science (FOCS)}, pages 754--765, 2022.

\bibitem[DMNS16]{DworkMNS16}
Cynthia Dwork, Frank McSherry, Kobbi Nissim, and Adam~D. Smith.
\newblock Calibrating noise to sensitivity in private data analysis.
\newblock {\em Journal of Privacy and Confidentiality}, 7(3):17--51, 2016.

\bibitem[DMS{\etalchar{+}}21]{DaigavaneMSTAJ21}
Ameya Daigavane, Gagan Madan, Aditya Sinha, Abhradeep~Guha Thakurta, Gaurav
  Aggarwal, and Prateek Jain.
\newblock Node-level differentially private graph neural networks, 2021.

\bibitem[DZBJ18]{DingZB018}
Xiaofeng Ding, Xiaodong Zhang, Zhifeng Bao, and Hai Jin.
\newblock Privacy-preserving triangle counting in large graphs.
\newblock In {\em Proceedings of the {ACM} International Conference on
  Information and Knowledge Management ({CIKM})}, pages 1283--1292. {ACM},
  2018.

\bibitem[ER60]{ErdosR60}
Paul Erd\H{o}s and Alfr\'ed R\'eyni.
\newblock On the evolution of random graphs.
\newblock {\em Publications of the Mathematical Institute of the Hungarian
  Academy of Sciences}, page 17–61, 1960.

\bibitem[FFV06]{flaxman2006geometric}
Abraham~D Flaxman, Alan~M Frieze, and Juan Vera.
\newblock A geometric preferential attachment model of networks.
\newblock {\em Internet Mathematics}, 3(2):187--205, 2006.

\bibitem[Fra78]{frank1978estimation}
Ove Frank.
\newblock Estimation of the number of connected components in a graph by using
  a sampled subgraph.
\newblock {\em Scandinavian Journal of Statistics}, 5(4):177--188, 1978.

\bibitem[Goo49]{goodman1949estimation}
Leo~A Goodman.
\newblock On the estimation of the number of classes in a population.
\newblock {\em The Annals of Mathematical Statistics}, 20(4):572--579, 1949.

\bibitem[GRU12]{GuptaRU12}
Anupam Gupta, Aaron Roth, and Jonathan Ullman.
\newblock Iterative constructions and private data release.
\newblock In {\em Proceedings, Theory of Cryptography Conference (TCC)}, volume
  7194, pages 339--356, 2012.

\bibitem[HLMJ09]{HayLMJ09}
Michael Hay, Chao Li, Gerome Miklau, and David~D. Jensen.
\newblock Accurate estimation of the degree distribution of private networks.
\newblock In {\em Proceedings, SIAM International Conference on Data Mining
  (ICDM)}, pages 169--178, 2009.

\bibitem[KLM{\etalchar{+}}14]{kiveris2014connected}
Raimondas Kiveris, Silvio Lattanzi, Vahab Mirrokni, Vibhor Rastogi, and Sergei
  Vassilvitskii.
\newblock Connected components in {M}ap{R}educe and beyond.
\newblock In {\em Proceedings of the ACM Symposium on Cloud Computing}, pages
  18:1--18:13, 2014.

\bibitem[KMAF10]{kang2010patterns}
U~Kang, Mary McGlohon, Leman Akoglu, and Christos Faloutsos.
\newblock Patterns on the connected components of terabyte-scale graphs.
\newblock In {\em Proceedings, SIAM International Conference on Data Mining
  (ICDM)}, pages 875--880, 2010.

\bibitem[KNRS13]{KasiviswanathanNRS13}
Shiva~Prasad Kasiviswanathan, Kobbi Nissim, Sofya Raskhodnikova, and Adam~D.
  Smith.
\newblock Analyzing graphs with node differential privacy.
\newblock In {\em Proceedings, Theory of Cryptography Conference (TCC)}, pages
  457--476, 2013.

\bibitem[KRSY14]{KarwaRSY14}
Vishesh Karwa, Sofya Raskhodnikova, Adam~D. Smith, and Grigory Yaroslavtsev.
\newblock Private analysis of graph structure.
\newblock {\em ACM Transactions on Database Systems}, 39(3):22:1--22:33, 2014.

\bibitem[KS12]{KarwaS12}
Vishesh Karwa and Aleksandra~B. Slavkovic.
\newblock Differentially private graphical degree sequences and synthetic
  graphs.
\newblock In {\em Privacy in Statistical Databases}, volume 7556, pages
  273--285. Springer, 2012.

\bibitem[KW20]{klusowski2020estimating}
Jason~M Klusowski and Yihong Wu.
\newblock Estimating the number of connected components in a graph via subgraph
  sampling.
\newblock {\em Bernoulli}, 26(3):1635--1664, 2020.

\bibitem[LM14]{LuM14}
Wentian Lu and Gerome Miklau.
\newblock Exponential random graph estimation under differential privacy.
\newblock In {\em ACM SIGKDD International Conference on Knowledge Discovery
  and Data Mining}, pages 921--930. {ACM}, 2014.

\bibitem[LML20]{LiuML20}
Ganghong Liu, Xuebin Ma, and Wuyungerile Li.
\newblock Publishing node strength distribution with node differential privacy.
\newblock {\em {IEEE} Access}, 8:217642--217650, 2020.

\bibitem[MCB15]{MulleCB15}
Yvonne M{\"{u}}lle, Chris Clifton, and Klemens B{\"{o}}hm.
\newblock Privacy-integrated graph clustering through differential privacy.
\newblock In {\em Proceedings of the Workshops of the {EDBT/ICDT} 2015 Joint
  Conference}, volume 1330, pages 247--254, 2015.

\bibitem[McS34]{McShane34}
Edward~James McShane.
\newblock Extension of range of functions.
\newblock {\em Bulletin of the American Mathematical Society}, 40:837--842,
  1934.

\bibitem[MSOI{\etalchar{+}}02]{milo2002network}
Ron Milo, Shai Shen-Orr, Shalev Itzkovitz, Nadav Kashtan, Dmitri Chklovskii,
  and Uri Alon.
\newblock Network motifs: simple building blocks of complex networks.
\newblock {\em Science}, 298(5594):824--827, 2002.

\bibitem[MT07]{McSherryT07}
Frank McSherry and Kunal Talwar.
\newblock Mechanism design via differential privacy.
\newblock In {\em Proceedings, IEEE Symposium on Foundations of Computer
  Science (FOCS)}, pages 94--103, 2007.

\bibitem[MUP{\etalchar{+}}22]{MuellerUPRK22}
Tamara~T. Mueller, Dmitrii Usynin, Johannes~C. Paetzold, Daniel Ruecker, and
  Georgios Kaissis.
\newblock {SoK}: Differential privacy on graph-structured data, 2022.

\bibitem[NIR16]{NguyenIR16}
Hiep~H. Nguyen, Abdessamad Imine, and Micha{\"{e}}l Rusinowitch.
\newblock Detecting communities under differential privacy.
\newblock In {\em Proceedings of the {ACM} Workshop on Privacy in the
  Electronic Society ({WPES@CCS})}, pages 83--93, 2016.

\bibitem[NRS07]{NissimRS07}
Kobbi Nissim, Sofya Raskhodnikova, and Adam~D. Smith.
\newblock Smooth sensitivity and sampling in private data analysis.
\newblock In {\em Proceedings, ACM Symposium on Theory of Computing (STOC)},
  pages 75--84, 2007.

\bibitem[Pen03]{penrose2003random}
Mathew~D. Penrose.
\newblock {\em Random Geometric Graphs}, volume~5.
\newblock Oxford University Press, 2003.

\bibitem[PGM14]{ProserpioGM14}
Davide Proserpio, Sharon Goldberg, and Frank McSherry.
\newblock Calibrating data to sensitivity in private data analysis.
\newblock {\em Proceedings of the VLDB Endowment}, 7(8):637--648, 2014.

\bibitem[PSSS13]{peres2013mobile}
Yuval Peres, Alistair Sinclair, Perla Sousi, and Alexandre Stauffer.
\newblock Mobile geometric graphs: Detection, coverage and percolation.
\newblock {\em Probability Theory and Related Fields}, 156(1):273--305, 2013.

\bibitem[PW83]{PadbergW83}
Manfred~W. Padberg and Laurence~A. Wolsey.
\newblock Trees and cuts.
\newblock In {\em Combinatorial Mathematics}, volume~75 of {\em North-Holland
  Mathematics Studies}, pages 511--517. North-Holland, 1983.

\bibitem[RRT19]{RoohiRT19}
Leyla Roohi, Benjamin I.~P. Rubinstein, and Vanessa Teague.
\newblock Differentially-private two-party egocentric betweenness centrality.
\newblock In {\em {IEEE} Conference on Computer Communications ({INFOCOM})},
  pages 2233--2241, 2019.

\bibitem[RS16a]{RaskhodnikovaS16-E}
Sofya Raskhodnikova and Adam~D. Smith.
\newblock Differentially private analysis of graphs.
\newblock In {\em Encyclopedia of Algorithms}, pages 543--547. 2016.

\bibitem[RS16b]{RaskhodnikovaS16}
Sofya Raskhodnikova and Adam~D. Smith.
\newblock Lipschitz extensions for node-private graph statistics and the
  generalized exponential mechanism.
\newblock In {\em Proceedings, IEEE Symposium on Foundations of Computer
  Science (FOCS)}, pages 495--504, 2016.

\bibitem[SU21]{SealfonU21}
Adam Sealfon and Jonathan~R. Ullman.
\newblock Efficiently estimating {E}rdos-{R}enyi graphs with node differential
  privacy.
\newblock {\em J. Priv. Confidentiality}, 11(1), 2021.

\bibitem[Upa13]{Upadhyay13}
Jalaj Upadhyay.
\newblock Random projections, graph sparsification, and differential privacy.
\newblock In {\em {ASIACRYPT} {(1)}}, volume 8269, pages 276--295, 2013.

\bibitem[Win89]{Win89}
Sein Win.
\newblock On a connection between the existence of \emph{k}-trees and the
  toughness of a graph.
\newblock {\em Graphs Comb.}, 5(1):201--205, 1989.

\bibitem[WWW13]{WangWW13}
Yue Wang, Xintao Wu, and Leting Wu.
\newblock Differential privacy preserving spectral graph analysis.
\newblock In {\em Advances in Knowledge Discovery and Data Mining Pacific-Asia
  Conference ({PAKDD})}, volume 7819, pages 329--340, 2013.

\bibitem[WWZX13]{WangWZX13}
Yue Wang, Xintao Wu, Jun Zhu, and Yang Xiang.
\newblock On learning cluster coefficient of private networks.
\newblock {\em Social Network Analysis and Mining}, 3(4):925--938, 2013.

\bibitem[XCK{\etalchar{+}}21]{XiaCKHT021}
Siyuan Xia, Beizhen Chang, Karl Knopf, Yihan He, Yuchao Tao, and Xi~He.
\newblock {DPG}raph: {A} benchmark platform for differentially private graph
  analysis.
\newblock In {\em Proceedings, {ACM} International Conference on Management of
  Data (SIGMOD)}, pages 2808--2812. {ACM}, 2021.

\bibitem[XWC{\etalchar{+}}22]{xiang2022general}
Sheng Xiang, Dong Wen, Dawei Cheng, Ying Zhang, Lu~Qin, Zhengping Qian, and
  Xuemin Lin.
\newblock General graph generators: experiments, analyses, and improvements.
\newblock {\em The VLDB Journal}, 31(5):897--925, 2022.

\bibitem[ZCP{\etalchar{+}}15]{ZhangCPSX15}
Jun Zhang, Graham Cormode, Cecilia~M. Procopiuc, Divesh Srivastava, and Xiaokui
  Xiao.
\newblock Private release of graph statistics using ladder functions.
\newblock In {\em Proceedings, {ACM} International Conference on Management of
  Data (SIGMOD)}, pages 731--745, 2015.

\bibitem[Zin22]{zinovyev2022space}
Anatoliy Zinovyev.
\newblock Space-stretch tradeoff in routing revisited.
\newblock In {\em International Symposium on Distributed Computing (DISC)},
  pages 37:1--37:16, 2022.

\bibitem[ZN19]{ZhangN19}
Sen Zhang and Weiwei Ni.
\newblock Graph embedding matrix sharing with differential privacy.
\newblock {\em {IEEE} Access}, 7:89390--89399, 2019.

\bibitem[ZNF20]{ZhangNF20}
Sen Zhang, Weiwei Ni, and Nan Fu.
\newblock Community preserved social graph publishing with node differential
  privacy.
\newblock In {\em Proceedings, SIAM International Conference on Data Mining
  (ICDM)}, pages 1400--1405, 2020.

\end{thebibliography}

%\newpage

\appendix

\section{Down-Sensitivity Error Guarantees for Approximating General Functions} \label{sec:ds_general}

In this section, we prove two results related to down-sensitivity. The first, \Thm{ds_general}, is a result of Raskhodnikova and Smith \cite{RaskhodnikovaS16-E} which states that for every monotone nondecreasing function $f$, there exists a node-private algorithm for approximating $f$ whose error is bounded by the down-sensitivity of $f$. A function $f \colon \cG \to \R$ is monotone nondecreasing if $f(H) \leq f(H')$ for all $H \preceq H'$. Then, in \Lem{anchor_set}, we show that the largest monotone anchor set  for a function $f$ and Lipschitz constant $\Delta$ is characterised by the down-sensitivity of $f$.

We start by proving \Lem{ds_extension}, which is used in the proof of \Thm{ds_general}.

\begin{lemma}[Lipschitz extension based on down-sensitivity] \label{lem:ds_extension}
Let $f\colon \mathcal{G} \to \R$ be a monotone nondecreasing function.  Given $\Delta > 0$, define the family of functions
\begin{align*}
    \widehat{f}_{\Delta}(G) = \min_{\substack{H \preceq G \\ DS_f(H) \leq \Delta } } \Big( f(H) + \Delta \cdot d(H, G) \Big).
\end{align*}
The functions $\widehat{f}_{\Delta}(G)$ are a family of monotone in $\Delta$, Lipschitz underestimates for $f$. Moreover, if $DS_f(G) \leq \Delta$, then $\widehat{f}_{\Delta}(G) = f(G)$. 
% \begin{enumerate}
%     \item $\widehat{f}_{\Delta}(G) \leq f(G)$. 
%     \item For all $\Delta_1 < \Delta_2$, $\widehat{f}_{\Delta_1}(G) \leq \widehat{f}_{\Delta_2}(G)$.
%     \item $\widehat{f}_{\Delta}(G)$ is $\Delta$-Lipschitz. 
%     \item If $DS_f(G) \leq \Delta$, then $\widehat{f}_{\Delta}(G) = f(G)$. 
% \end{enumerate}
\end{lemma}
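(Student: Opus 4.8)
The plan is to verify the three defining properties of a family of monotone in $\Delta$, Lipschitz underestimates directly from the formula for $\widehat{f}_\Delta$, and then handle the ``moreover'' claim as a special case.

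\textbf{Underestimation.} Taking $H = G$ is permissible in the minimization whenever $DS_f(G) \le \Delta$, giving $\widehat{f}_\Delta(G) \le f(G) + \Delta\cdot 0 = f(G)$; but I need this for all $G$, including those with $DS_f(G) > \Delta$. So first I would observe that for any $H \preceq G$ with $DS_f(H)\le\Delta$, the value $f(H) + \Delta\cdot d(H,G)$ is at least $f(G)$-minus-nothing... wait, that is the wrong direction. Instead: since $f$ is monotone nondecreasing and $H \preceq G$, we have $f(H) \le f(G)$, but that only gives an \emph{upper} bound on one summand, not what I want. The right move is to note that the minimization is over a \emph{nonempty} set (any single-vertex induced subgraph $H_0 \preceq G$ has $DS_f(H_0) = 0 \le \Delta$), so $\widehat f_\Delta(G)$ is well-defined, and then argue underestimation via the Lipschitzness of $f$ restricted appropriately. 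Actually the cleanest path: I claim $f$ is $\Delta$-Lipschitz on the set $\{H : DS_f(H) \le \Delta\}$ is \emph{not} quite enough since $G$ itself may not be in that set. The genuinely correct argument is that, regardless of $DS_f(G)$, walking from any valid $H$ up to $G$ by inserting $d(H,G)$ vertices changes $f$ by at most... no bound a priori. Hmm — I think the intended argument is that underestimation for \emph{all} $G$ follows because one takes $H$ to be a maximal induced subgraph of $G$ with $DS_f(H) \le \Delta$, and then uses that $f$ is monotone so $f(H) \le f(G)$, combined with $d(H,G)$ possibly $0$... This requires care, and I expect this to be the main subtlety: establishing $\widehat f_\Delta(G) \le f(G)$ for graphs where $DS_f(G) > \Delta$. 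I would resolve it by induction on $d(H,G)$ using the definition of down-sensitivity: if $G$ has an induced subgraph $G^-$ on one fewer vertex with $DS_f(G^-) \le \Delta$... and iterating, plus monotonicity $f(G) \ge f(G^-)$, to telescope.

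\textbf{Monotonicity in $\Delta$ and Lipschitzness.} For $\Delta_1 < \Delta_2$: every $H$ feasible for $\widehat f_{\Delta_1}(G)$ (i.e.\ with $DS_f(H)\le\Delta_1$) is also feasible for $\widehat f_{\Delta_2}(G)$, and for such $H$, $f(H) + \Delta_1 d(H,G) \le f(H) + \Delta_2 d(H,G)$; taking the min over the larger feasible set with the smaller-or-equal objective gives $\widehat f_{\Delta_1}(G) \le \widehat f_{\Delta_2}(G)$ — but I must double-check the direction, since enlarging the feasible set \emph{decreases} the min while increasing the coefficient \emph{increases} the objective; these pull opposite ways, so monotonicity needs the more careful comparison $\widehat f_{\Delta_1}(G) \le f(H^*) + \Delta_1 d(H^*, G) \le f(H^*) + \Delta_2 d(H^*,G)$ where $H^*$ is the minimizer for $\Delta_2$, which is legitimate because $DS_f(H^*) \le \Delta_2$ does \emph{not} imply $DS_f(H^*)\le\Delta_1$ — so this naive attempt fails and I would instead pass through a common refinement. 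I expect monotonicity in $\Delta$ to be a second delicate point. For $\Delta$-Lipschitzness of $\widehat f_\Delta$: given node-neighbors $G, G'$, take the minimizer $H$ for one of them and observe $H$ is (or can be made into) an induced subgraph of the other with $d$ changing by at most $1$, so the two min-values differ by at most $\Delta$; this is the standard ``infimal convolution with $\Delta\cdot d(\cdot,\cdot)$'' argument and should go through cleanly by the triangle inequality for node-distance.

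\textbf{The ``moreover'' claim.} If $DS_f(G) \le \Delta$, then $H = G$ is feasible and yields objective value $f(G)$, so $\widehat f_\Delta(G) \le f(G)$; combined with the underestimation property $\widehat f_\Delta(G) \ge$... wait, underestimation gives $\le$, so I need the reverse inequality $\widehat f_\Delta(G) \ge f(G)$. This follows because for every feasible $H \preceq G$, monotonicity gives $f(G) \ge f(H)$ is the wrong way again; instead I use that $f$ is $\Delta$-Lipschitz on $\{H : DS_f(H)\le\Delta\}$ (which \emph{is} true by definition of down-sensitivity along any monotone chain) so $f(G) \le f(H) + \Delta\cdot d(H,G)$ for every such $H$ with $H\preceq G$, hence $f(G) \le \widehat f_\Delta(G)$; together with underestimation, equality. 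I would present the proof in the order: (1) feasible set nonempty; (2) a lemma that $f$ restricted to $\{H: DS_f(H)\le\Delta\}$ is $\Delta$-Lipschitz on chains, which gives both underestimation (via a maximal such $H\preceq G$ and telescoping up to $G$) and the reverse inequality for the ``moreover'' part; (3) Lipschitzness via infimal-convolution triangle inequality; (4) monotonicity in $\Delta$ via the common-refinement comparison. The main obstacle is item (2) and the monotonicity direction — keeping straight which way each inequality points when the feasible set and the penalty coefficient both change.
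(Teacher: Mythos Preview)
Your approach to Lipschitzness (the infimal-convolution triangle-inequality argument, with the caveat that when the minimizer $H'$ for $G'=G\cup\{v\}$ contains $v$ one passes to $H'\setminus\{v\}$, using that both down-sensitivity and $f$ can only decrease) and to the ``moreover'' clause (when $DS_f(G)\le\Delta$, every feasible $H\preceq G$ satisfies $f(G)\le f(H)+\Delta\,d(H,G)$ by the very definition of down-sensitivity along a chain, giving $\widehat f_\Delta(G)\ge f(G)$, while $H=G$ is itself feasible and gives the reverse) are correct and match the paper's proof essentially verbatim.

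Your worries about underestimation and monotonicity in $\Delta$, on the other hand, are not subtleties to be ironed out---no telescoping or ``common refinement'' argument will close them, because the lemma as stated is false on those two points. Take $f(H)=0$ for $|V(H)|\le 1$ and $f(H)=3$ for $|V(H)|\ge 2$; this $f$ is monotone nondecreasing, and every graph on at least two vertices has down-sensitivity exactly $3$. With $\Delta=1$ the only feasible $H\preceq G$ have at most one vertex, so for a five-vertex $G$ one gets $\widehat f_1(G)=0+1\cdot 4=4>3=f(G)$, violating underestimation; and since $G$ is feasible for $\Delta=3$ we have $\widehat f_3(G)\le f(G)=3<4=\widehat f_1(G)$, violating monotonicity. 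The paper's proof of underestimation simply plugs in $H=G$ (silently assuming $G$ is feasible), and its monotonicity argument asserts $f(H)+\Delta_2\,d(H,G)\ge f(\emptyset)+\Delta_1\,d(\emptyset,G)$ while the stated justification (``$\emptyset\preceq H$ and $f$ is monotone'') covers only the $f$-terms, not the distance terms. So the paper has precisely the gaps you spotted; you were right to be uneasy, and should not expect to repair them without adding a hypothesis.
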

\begin{proof}

To show that the functions underestimate $f$, note that $\widehat{f}_{\Delta}(G) \leq f(G) + \Delta \cdot d(G, G) = f(G)$. 
To prove monotonicity in $\Delta$, fix $\Delta_1, \Delta_2$ and let $H \preceq G$ such that $DS_f(H) \leq \Delta_2$. Consider the empty graph $\emptyset$. Then $f(H) + \Delta_2 \cdot  d(H, G) \geq f(\emptyset) + \Delta_1 d(\emptyset, G) \geq \widehat{f}_{\Delta_1}(G)$, since $\emptyset \preceq H$ and $f$ is monotone nondecreasing. Since the inequality holds for all $H$ with $DS_f(H) \leq \Delta_2$, we obtain that $\widehat{f}_{\Delta_2}(G) \geq  \widehat{f}_{\Delta_1}(G)$.  

We now show $\Delta$-Lipschitzness. Let $G, G'$ be two neighboring graphs so that $V(G') = V(G) \cup \{v\}$, for $v \notin V$. We show that $\widehat{f}_{\Delta}(G) \leq \widehat{f}_{\Delta}(G') \leq \widehat{f}_{\Delta}(G) + \Delta$, which implies $|\widehat{f}_{\Delta}(G') - \widehat{f}_{\Delta}(G)| \leq \Delta$. 

We start by showing that $\widehat{f}_{\Delta}(G) \leq \widehat{f}_{\Delta}(G')$. Let $H' \preceq G'$ be the induced subgraph of $G'$ such that $DS_f(H') \leq \Delta$ and $\widehat{f}_{\Delta}(G') = f(H') + \Delta \cdot d(H', G')$. Suppose that $v \notin V(H')$. Then, $H$ is an induced subgraph of $G$ and $d(H', G) = d(H', G') - 1$. It follows that $\widehat{f}_{\Delta}(G) \leq f(H') + \Delta \cdot d(H', G) = f(H') + \Delta \cdot (d(H', G') - 1) = \widehat{f}_{\Delta}(G') - \Delta$. Suppose now that $v \in V(H')$. Let $H = H' \setminus \{v\}$. Then $H$ is an induced subgraph of $G$ and $d(H', G') = d(H, G)$. Additionally, $DS_f(H) \leq DS_f(H') \leq \Delta$. Since $f$ is monotone nondecreasing, it follows that $\widehat{f}_{\Delta}(G) \leq f(H) + \Delta \cdot d(H, G) \leq f(H') + \Delta \cdot d(H', G') = \widehat{f}_{\Delta}(G')$. In either case, we conclude that $\widehat{f}_{\Delta}(G) \leq \widehat{f}_{\Delta}(G')$. 

We now show that $\widehat{f}_{\Delta}(G') \leq \widehat{f}_{\Delta}(G) + \Delta$. Let $H \preceq G$ so that $DS_f(H) \leq \Delta$ and $\widehat{f}_{\Delta}(G) = f(H) + \Delta \cdot d(H, G)$. Since $H \preceq G'$, then $\widehat{f}_{\Delta}(G') \leq f(H) + \Delta \cdot d(H, G') = f(H) + \Delta \cdot (d(H, G) + 1) = \widehat{f}_{\Delta}(G) + \Delta$. 
%Then $|\widehat{f}_{\Delta}(G') - \widehat{f}_{\Delta}(G)| = \widehat{f}_{\Delta}(G') - \widehat{f}_{\Delta}(G) \leq \widehat{f}_{\Delta}(G) + \Delta - \widehat{f}_{\Delta}(G) = \Delta$. 
This concludes the proof of Lipschitzness. 

Finally, we show that if $DS_f(G) \leq \Delta$ then $\widehat{f}_{\Delta}(G) = f(G)$. Suppose $DS_f(G) \leq \Delta$. Then $f$ is $\Delta$-Lipschitz for the set of induced subgraphs of $G$. Together with the fact that $f$ is monotone nondecreasing, we obtain $f(G) - f(H)  \leq \Delta \cdot d(H, G)$ for all induced subgraphs $H$ of $G$. Therefore $\widehat{f}_{\Delta}(G) = f(G)$. 
\end{proof}

We now use \Lem{ds_extension} to prove \Thm{ds_general}. 

\begin{theorem}[Theorem 2 of \cite{RaskhodnikovaS16-E}]
\label{thm:ds_general}
Let $f\colon \mathcal{G} \to \R$ be a monotone nondecreasing function. Let $\cG^{(n)}$ denote the set of all $n$-node graphs and suppose that $\max_{G' \in \cG^{(n)}} DS_f(G) \to \infty$ and $n\to \infty$. There exists an $\eps$-node-private algorithm $\cA_f$ that given an $n$-node graph $G$ and privacy parameter $\eps > 0$, with probability $1-o(1)$ satisfies,
\begin{align*}
   |\cA_f(G) - f(G)| 
    \leq \frac{DS_f(G)+1}{\eps} \cdot \widetilde{O}\Big(\ln \ln \max_{G' \in \cG^{(n)}} DS_f(G') \Big),
\end{align*}
\end{theorem}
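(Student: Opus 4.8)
The plan is to instantiate the generic private-selection framework already set up in the paper, using the family $\widehat{f}_{\Delta}$ from \Lem{ds_extension} in place of the $f_{\Delta}$ used for $\fsf$. First I would define the algorithm $\cA_f$ to be the analogue of \Alg{sf}: run the GEM algorithm (\Thm{gem}) with privacy budget $\eps/2$, failure probability $\beta = 1/\ln\ln\big(\max_{G'\in\cG^{(n)}} DS_f(G')\big)$, and oracle access to an evaluator for $\widehat{f}_{\Delta}$, obtaining a selected parameter $\hat\Delta \in [1, \Delta_{\max}]$ where $\Delta_{\max} = \max_{G'\in\cG^{(n)}} DS_f(G')$; then output $\widehat{f}_{\hat\Delta}(G) + Z$ with $Z \sim \mathrm{Lap}(2\hat\Delta/\eps)$. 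Privacy is immediate: \Step{gem}-analogue is $(\eps/2)$-node-private by \Thm{gem} (which applies because \Lem{ds_extension} shows $\{\widehat{f}_{\Delta}\}$ is a family of monotone-in-$\Delta$, Lipschitz underestimates for $f$), the Laplace step is $(\eps/2)$-node-private by \Thm{laplace} since $\Delta$ upper-bounds $GS_{\widehat{f}_{\Delta}}$ by Lipschitzness, and composition (\Lem{composition}) gives $\eps$-node-privacy overall.

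For accuracy, I would follow the error derivation in the proof of \Thm{sf} essentially verbatim. Set $\Delta_0 = DS_f(G)$; note $\Delta_0 \le \Delta_{\max}$, so $\Delta_0 \in [1,\Delta_{\max}]$ (assuming $\Delta_0 \ge 1$; the case $\Delta_0 = 0$ is handled separately, as then $f$ is constant on induced subgraphs of $G$ and the bound is trivial). By the ``moreover'' clause of \Lem{ds_extension}, $DS_f(G) \le \Delta_0$ implies $\widehat{f}_{\Delta_0}(G) = f(G)$, hence $err_f(\Delta_0, G) = \Delta_0/\eps$ by definition \Eqn{error}. Applying \Thm{gem} with $\Delta_{\max}$ as above, with probability at least $1-\beta$ we get $err_f(\hat\Delta, G) \le (\Delta_0/\eps)\cdot O\!\big(\ln(\ln(\Delta_{\max})/\beta)\big)$. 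Combining with the Laplace tail bound (\Lem{tail_laplace}) exactly as in the proof of \Thm{sf} — with probability $\ge 1-\beta/2$, $|\cA_f(G) - f(G)| \le err_f(\hat\Delta, G)\cdot 2\ln(2/\beta)$ — and substituting the chosen $\beta$, a union bound yields, with probability $1-o(1)$,
\begin{align*}
  |\cA_f(G) - f(G)| \le \frac{DS_f(G)}{\eps}\cdot\widetilde{O}\!\Big(\ln\ln \max_{G'\in\cG^{(n)}} DS_f(G')\Big),
\end{align*}
which in particular is at most the claimed bound with $DS_f(G)+1$ in the numerator.

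The one genuinely non-routine point — and the step I expect to be the main obstacle — is \emph{running time}, and more precisely whether the evaluator for $\widehat{f}_{\Delta}$ can be realized at all: its definition is a minimum over all induced subgraphs $H \preceq G$ with $DS_f(H) \le \Delta$, which is an exponential search and in general requires computing $DS_f$, itself potentially intractable. Here, however, \Thm{ds_general} as stated makes no efficiency claim (unlike \Thm{sf}), so I would simply not claim polynomial time; the statement only asserts existence of an $\eps$-node-private algorithm with the given accuracy, and the construction above suffices. The remaining care is bookkeeping around the edge cases $DS_f(G) \in \{0\}$ and $\Delta_{\max}$ small, and confirming that the hypothesis $\Delta_{\max}\to\infty$ is exactly what makes $\beta = 1/\ln\ln\Delta_{\max}$ well-defined and $o(1)$, so that the ``with probability $1-o(1)$'' conclusion is meaningful.
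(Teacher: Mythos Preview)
Your proposal is correct and follows essentially the same approach as the paper: instantiate the GEM-plus-Laplace template with the down-sensitivity-based family $\widehat{f}_{\Delta}$ from \Lem{ds_extension}, invoke its monotone-Lipschitz-underestimate properties to apply \Thm{gem}, and reuse the error calculation from the proof of \Thm{sf}. The only cosmetic difference is that the paper takes $\widetilde{\Delta} = DS_f(G)+1$ as the comparison threshold (so that $\widetilde{\Delta}\geq 1$ automatically and the ``moreover'' clause of \Lem{ds_extension} still fires), whereas you take $\Delta_0 = DS_f(G)$ and treat $DS_f(G)=0$ as a separate case; both work and yield the stated bound.
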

\begin{proof}

Consider the family of functions defined in \Lem{ds_extension}, which are  monotone in $\Delta$, Lipschitz underestimates of $f$. Let $\beta = \frac{1}{\ln \ln \max_{G'} DS_f(G')}$ denote the failure probability. Note that $\beta = o(1)$. Let $\cA_f$ be the algorithm that runs the algorithm of \Thm{gem} with this family of Lipschitz extensions and parameters $\eps/2, \beta/2$ to select a threshold $\hat{\Delta} \in [1, \max_{G'}DS_f(G')]$. It then outputs $f_{\hat{\Delta}}(G) + \mathrm{Lap}(\hat{\Delta}/(2\eps))$. By composition (\Lem{composition}), $\cA_f$ is $\eps$-node-private. 

We analyze the error of $\cA_f$. By a similar argument as in the proof of \Thm{sf}, we have that with probability at least $1 - \beta/2$, it holds 
\begin{align*}
    |\cA_f(G) - f(G)| \leq  err_{\fsf}(\hat{\Delta}, G) \cdot 2\ln(2/\beta).
\end{align*}
Let $\widetilde{\Delta} = DS_f(G) + 1$ (note that $\widetilde{\Delta} \geq 1)$. By \Lem{ds_extension}, $err(\widetilde{\Delta}) = \frac{DS_f(G) + 1}{\eps}$. Applying \Thm{gem} with our choice of $\beta$ concludes the proof.  
\end{proof}

Finally, we prove \Lem{anchor_set}. 

\begin{lemma}\label{lem:anchor_set}
Given a function $f \colon \mathcal{G} \to \R$, a constant $\Delta > 0$, and a $\Delta$-Lipschitz function $\hat{f} \colon \mathcal{G} \to \R$, 
%such that $\widehat{f}(\emptyset) = \sr{f(\emptyset)}$, 
let $S_{\hat{f}} = \{ G \mid \hat{f}(G) = f(G) \}$. Let $S_{\hat{f}}'$ denote the largest monotone subset of $S_{\hat{f}}$ 
%that contains $\emptyset$. 
Then $S_{\hat{f}}' \subseteq \{ G \mid DS_f(G) \leq \Delta \}$. 
\end{lemma}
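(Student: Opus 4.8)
The plan is to use monotonicity of $S_{\hat f}'$ to rewrite the down-sensitivity of $f$ at any graph in $S_{\hat f}'$ as a difference of values of $\hat f$, and then apply the $\Delta$-Lipschitzness of $\hat f$. First I would note that $S_{\hat f}'$ is well defined: the union of all monotone subsets of $S_{\hat f}$ is again a monotone subset of $S_{\hat f}$ (a union of downward-closed families is downward-closed), so there is a unique largest one.

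Next, fix an arbitrary $G \in S_{\hat f}'$. Since $S_{\hat f}'$ is monotone, every induced subgraph $H \preceq G$ also lies in $S_{\hat f}'$, hence in $S_{\hat f}$; that is, $\hat f(H) = f(H)$ for all $H \preceq G$. Now unfold the definition of down-sensitivity: $DS_f(G) = \max |f(H') - f(H)|$, where the maximum ranges over node-neighboring pairs $H \preceq H' \preceq G$. For any such pair, both $H$ and $H'$ are induced subgraphs of $G$, so the previous observation gives $f(H) = \hat f(H)$ and $f(H') = \hat f(H')$, whence $|f(H') - f(H)| = |\hat f(H') - \hat f(H)|$. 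Because $H$ and $H'$ are node-neighbors, $d(H, H') = 1$, so $\Delta$-Lipschitzness of $\hat f$ on $\mathcal{G}$ yields $|\hat f(H') - \hat f(H)| \leq \Delta \cdot d(H,H') = \Delta$. Taking the maximum over all admissible pairs gives $DS_f(G) \leq \Delta$, i.e., $G \in \{G \mid DS_f(G) \leq \Delta\}$. Since $G$ was arbitrary, $S_{\hat f}' \subseteq \{G \mid DS_f(G) \leq \Delta\}$.

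I do not expect any real obstacle in this argument; it is a direct chain of definitional unfoldings. The only point that needs a moment of care is the reduction itself — namely recognizing that down-sensitivity at $G$ only ever compares $f$ on induced subgraphs of $G$, so that monotonicity of the anchor set is exactly the hypothesis needed to replace $f$ by $\hat f$ everywhere inside the maximum. Once that is in place, $\Delta$-Lipschitzness of $\hat f$ closes the bound.
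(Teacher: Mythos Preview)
Your proposal is correct and follows essentially the same approach as the paper: fix $G \in S_{\hat f}'$, use monotonicity to conclude that all induced subgraphs of $G$ lie in $S_{\hat f}$, replace $f$ by $\hat f$ inside the down-sensitivity maximum, and then apply $\Delta$-Lipschitzness of $\hat f$ on each neighboring pair. Your additional remark that $S_{\hat f}'$ is well defined (as the union of all monotone subsets) is a nice touch the paper omits.
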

\begin{proof}
Suppose $G \in S_{\hat{f}}'$. Since $S_{\hat{f}}'$ is monotone, all induced subgraphs of $G$ are also in $S_{\hat{f}}'$. Then, 
\begin{align}\label{eq:Lip}
    \Delta &\geq \max_{\substack{H \preceq H' \preceq G \\ H, H' \textnormal{neighbors}  } } |\hat{f}(H) - \hat{f}(H')| \\\label{eq:ds}
    &=   \max_{\substack{H \preceq H' \preceq G \\ H, H' \textnormal{neighbors}  } } |f(H) - f(H')| = DS_f(G).
\end{align}
The inequality in (\ref{eq:Lip}) holds since $\hat{f}$ is $\Delta$-Lipschitz. The first equality in (\ref{eq:ds}) holds since $H, H' \in S_{\hat{f}}' \subseteq S_{\hat{f}}$ for all induced subgraphs of $G$. The second equality in (\ref{eq:ds}) is the definition of down-sensitivity. Therefore, $DS_f(G)\leq\Delta$ for all $G \in S_{\hat{f}}'$.
\end{proof}

\section{GEM}\label{sec:GEM}

    In this section, we state \Alg{gem} which applies the Generalized Exponential Mechanism of Raskhodnikova and Smith \cite{RaskhodnikovaS16} to the task of threshold selection for a family of Lipschitz extensions. This completes the description of \Alg{sf}. Given a (possibly infinite) family of Lipschitz extensions, \Alg{gem} first obtains finitely many score functions $q_i, i\in I$ of bounded sensitivity, defined in terms of the Lipschitz extensions\footnote{The functions $q_i$ defined in \Step{define_q} of \Alg{gem} can actually have high sensitivity, since $h$ can have high sensitivity. However, we only consider Lipschitz extensions that underestimate the true function $h$ (see \Lem{extension} and \Thm{gem}). 
In this case, minimizing $q_i$ is equivalent to minimizing $h(G) - h_i(G) + \frac{i}{\eps}$, and since the minimization is over $i$, we can treat $h(G)$ as a constant. So we can equivalently define the functions $q_i$ as $-h_i(G) + \frac{i}{\eps}$, which indeed have low sensitivity.}.
It then uses GEM to select the optimal score function.

GEM is a generalization of, and also includes as a subroutine, the Exponential Mechanism of McSherry and Talwar \cite{McSherryT07}. In the node-privacy setting, both algorithms take as input a graph $G$ and score functions $q_i \colon \cG \to \R, i \in I$ of bounded sensitivity. The goal of both algorithms is to output an index $\hat{\imath} \in I$ such that $q_{\hat{\imath}}$ approximately minimizes the value on $G$ amongst all score functions, i.e., $q_{\hat{\imath}}(G) \approx min_{i \in I} q_i(G)$. The Exponential Mechanism assumes a common upper bound $\Delta$ on the sensitivities of the score functions. GEM uses the fact that the score functions might have different sensitivities, and provides a better utility guarantee in the case when the optimal score function has a much lower sensitivity than the maximum of all sensitivities.

\begin{algorithm}[H]
  \setstretch{1.3}
  \caption{Threshold Selection for Lipschitz Extensions via GEM, adapted for node-privacy} \label{alg:gem}
  \begin{algorithmic}[1]
    \Require{Function $h \colon \cG \to \R$, access to subroutine $\mathtt{EvalLipschitzExtension}$ for evaluating the family of Lipschitz extensions $\{ h_{\Delta}\}_{\Delta \in [1, \Delta_{\max}]}$,}  privacy parameter $\eps > 0$, failure probability $\beta \in (0,1)$
    \State Let $k = \lfloor \log_2(\Delta_{\max}) \rfloor$, $I = \{2^0, \dots, 2^k\}$, and $t =  \frac{2 \log (k / \beta)}{\eps}$. 
    \State \textbf{for} $i \in I$ \textbf{do}:
    \State \indent $h_i(G) \leftarrow \mathtt{EvalLipschitzExtension}(G, i)$.
    \State \indent $q_i(G) \leftarrow |h_{i}(G) - h(G)| + \frac i{\eps}$. \label{step:define_q} 
    \State \textbf{for} $i \in I$ \textbf{do}:
    \State \indent  $s_i(G) \leftarrow \max_{j \in I} \dfrac{(q_i(G) + ti) - (q_j(G) + tj) }{i + j}$.
    \State To obtain $\hat{\imath}$, run the Exponential Mechanism~\cite{McSherryT07} with graph $G$, score functions $\{s_i :i \in I\}$,  and privacy parameter $\eps$. 
    \State Return $\hat{\imath}$. 
  \end{algorithmic}
\end{algorithm}

In \Thm{exp}, we define the Exponential Mechanism and state its privacy guarantee. \Alg{gem} describes GEM %, when
tailored to the specific task of selecting an optimal $\Delta$ from a (possibly infinite) family of Lipschitz extensions $\{ h_{\Delta}\}_{\Delta \in [1, \Delta_{\max}]}$. The utility, privacy, and running time guarantees of \Alg{gem} are stated in \Thm{gem}.

\begin{theorem}[Exponential Mechanism \cite{McSherryT07}] \label{thm:exp}
Given a graph $G$, finitely many score functions $q_i \colon \cG \to \R, i \in I$
with global sensitivity at most $\Delta$, and a privacy parameter $\eps > 0$, the Exponential Mechanism $\cA$  
outputs an index $\hat{\imath} \in I$ drawn from the distribution $\Pr[\cA(G) = i] \propto \exp\Big(\dfrac{\eps \cdot q_i(G)}{2\Delta}\Big)$. Algorithm $\cA$ is $\eps$-node-private.
\end{theorem}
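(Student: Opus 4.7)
The plan is to establish $\eps$-node-privacy by a direct calculation: for any two node-neighboring graphs $G, G'$ and any fixed output $i \in I$, I would bound the ratio $\Pr[\cA(G)=i]/\Pr[\cA(G') = i]$ by $e^\eps$; the opposite direction then follows from the symmetry of the node-neighbor relation. Writing $Z(G) = \sum_{j \in I} \exp(\eps q_j(G)/(2\Delta))$ for the normalization constant, and using that the theorem specifies only proportionality (so $\Pr[\cA(G)=i] = \exp(\eps q_i(G)/(2\Delta))/Z(G)$), the ratio factors as
$$\frac{\Pr[\cA(G) = i]}{\Pr[\cA(G') = i]} = \exp\!\left(\frac{\eps (q_i(G) - q_i(G'))}{2\Delta}\right) \cdot \frac{Z(G')}{Z(G)}.$$

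I would then bound each of the two factors by $e^{\eps/2}$, using the hypothesis that every $q_j$ has global sensitivity at most $\Delta$, i.e., $|q_j(G) - q_j(G')| \leq \Delta$ for all $j \in I$. The first factor is immediate: plugging the sensitivity bound into the exponent gives $e^{\eps/2}$. For the second factor, I would argue term-by-term inside the sum: for each $j$, sensitivity gives $q_j(G') \leq q_j(G) + \Delta$, so $\exp(\eps q_j(G')/(2\Delta)) \leq e^{\eps/2} \exp(\eps q_j(G)/(2\Delta))$; summing over $j \in I$ yields $Z(G') \leq e^{\eps/2} Z(G)$. Multiplying the two bounds gives the required $e^\eps$.

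There is no genuine obstacle here---this is the textbook McSherry--Talwar calculation, and the finiteness of $I$ guarantees that $Z(G)$ is a well-defined positive sum so the ratios make sense. The only minor subtlety is to remember that the privacy bound is agnostic to whether the score functions are to be maximized or minimized; the usage in \Alg{gem} (where $q_i$ measures error and one wants to minimize it) is handled by the caller choosing the appropriate sign convention when invoking the mechanism, and the same $e^\eps$ bound applies verbatim.
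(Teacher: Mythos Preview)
The paper does not actually prove this theorem; it is stated as a cited result from McSherry and Talwar~\cite{McSherryT07} with no accompanying proof. Your proposal is the standard, correct argument from that original paper, so there is nothing to compare against here.
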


\end{document}